\documentclass[a4paper]{article}
\usepackage{geometry}
\usepackage{amssymb}
\usepackage{hyperref}
\usepackage{amsmath}
\usepackage{amsthm}
\usepackage{bbold}
\usepackage[pagewise]{lineno}
\usepackage[lgrmath]{mathgreeks}
\usepackage{array}
\usepackage{caption}
\usepackage{lipsum}
\usepackage[colorinlistoftodos]{todonotes}

\usepackage[ruled]{algorithm}
\usepackage{algorithmic}

\usepackage{tikz}
\usetikzlibrary{datavisualization}
\usetikzlibrary{plotmarks}

\newcommand{\bZ}{{\mathbb{Z}}}
\newcommand{\bQ}{{\mathbb{Q}}}

\newcommand{\bC}{{\mathbb{C}}}
\newcommand{\bN}{{\mathbb{N}}}

\newcommand{\cR}{{\mathcal{R}}}

\newcommand{\cP}{{\mathcal{P}}}
\newcommand{\cT}{{\mathcal{T}}}
\newcommand{\cL}{{\mathcal{L}}}

\newcommand{\cS}{{\mathcal{S}}}

\newcommand{\bT}{{\mathbb{T}}}

\newcommand{\bP}{{\mathbb{P}}}
\newcommand{\bH}{{\mathbb{H}}}
\newcommand{\bI}{{\mathbb{I}}}
\newcommand{\bfxi}{{\mathgreeksfont{enc=LGR,fam=Alegreya-LF,shape=n}{\xi}}}
\newcommand{\bfeta}{{\mathgreeksfont{enc=LGR,fam=Alegreya-LF,shape=n}{\eta}}}
\newcommand{\bfomega}{{\mathgreeksfont{enc=LGR,fam=Alegreya-LF,shape=n}{\omega}}}
\newcommand{\bfOmega}{{\mathgreeksfont{enc=LGR,fam=Alegreya-LF,shape=n}{\Omega}}}
\newcommand{\cH}{{\mathcal{H}}}

\newcommand{\vw}{{w}}
\newcommand{\vu}{{u}}
\newcommand{\vv}{{v}}
\newcommand{\vx}{{x}}
\newcommand{\vy}{{y}}

\newcommand{\lc}{\operatorname{lc}}

\newcommand{\len}{\operatorname{len}}

\newcommand{\hc}{\operatorname{hc}}

\newcommand{\im}{\operatorname{im}}

\newcommand{\spa}{\operatorname{span}}

\newcommand{\den}{\operatorname{den}}
\newcommand{\Deg}{{\operatorname{Deg}}}
\newcommand{\num}{\operatorname{num}}

\newcommand{\tdeg}{\operatorname{tdeg}}
\newcommand{\hdeg}{\operatorname{hdeg}}

\newcommand{\Li}{\operatorname{Li}}

\newcommand{\tc}{\operatorname{tc}}

\newcommand{\residue}{\operatorname{residue}}

\newtheorem{thm}{Theorem}[section]
\newtheorem{prop}[thm]{Proposition}
\newtheorem{cor}[thm]{Corollary}
\newtheorem{lemma}[thm]{Lemma}
\newtheorem{convention}[thm]{Convention}
\newtheorem{remark}[thm]{Remark}
\newtheorem{define}[thm]{Definition}
\newtheorem{example}[thm]{Example}

\newtheorem{alg}[thm]{Algorithm}
\newtheorem{hyp}[thm]{Hypothesis}
\newtheorem{out}[thm]{Outline}

\begin{document}

\title{Complete Reduction for Derivatives  in a Transcendental Liouvillian Extension\footnote{ This research was funded in part by the National Key R\&D Programs of China (No. 2023YFA1009401), 
the Natural Science Foundation of China (No.~12271511, No.~12201065, No.~12101105),  the Austrian Science Fund (FWF) 10.55776/PAT1332123 and by the Natural Science Foundation of Fujian Province of China (No.\ 2024J01271).
This work was also supported by the International Partnership Program of Chinese Academy of Sciences (Grant No. 167GJHZ2023001FN).
}}

\author{Shaoshi Chen\textsuperscript{a,b},  Hao Du\textsuperscript{c},
Yiman Gao\textsuperscript{d,}\footnote{Corresponding author; Present address: Research Institute for Symbolic Computation (RISC),
Johannes Kepler University, Linz, Austria},  Hui Huang\textsuperscript{e}, \\ Wenqiao Li\textsuperscript{a,b}, Ziming Li\textsuperscript{a,b}  }
\hypersetup{pdfauthor={Shaoshi Chen, Hao Du, Yiman Gao, Hui Huang, Wenqiao Li, Ziming Li}}
\date{}

\maketitle

\begin{center}
\textsuperscript{a} MMRC, Academy of Mathematics and Systems Science, Chinese Academy of Sciences, Beijing (100190), China\\
\textsuperscript{b} School of Mathematical Sciences,  University of Chinese Academy of Sciences, Chinese Academy of Sciences, Beijing (100049), China\\
\textsuperscript{c}  School of Mathematical Sciences,  Beijing University of Posts and Telecommunications, Beijing (102206), China \\
\textsuperscript{d}  Johannes Kepler University Linz, Research Institute for Symbolic Computation (RISC),
  Altenberger Stra\ss e 69, 4040, Linz, Austria \\
\textsuperscript{e} School of Mathematics and Statistics, Fuzhou University, Fuzhou (350108), China \\[\medskipamount]
Email: \href{mailto:schen@amss.ac.cn}{\tt schen@amss.ac.cn}, \href{mailto:haodu@bupt.edu.cn}{\tt haodu@bupt.edu.cn},
\href{mailto: ymgao@risc.uni-linz.ac.at}{\tt ymgao@risc.jku.at},
\href{mailto: huanghui@fzu.edu.cn}{\tt huanghui@fzu.edu.cn},
\href{mailto: liwenqiao@amss.ac.cn}{\tt liwenqiao@amss.ac.cn},
\href{mailto: zmli@mmrc.iss.ac.cn}{\tt zmli@mmrc.iss.ac.cn}
\end{center}

\begin{abstract}
Transcendental Liouvillian extensions are differential fields, in which one can model
poly-logarithmic, hyperexponential, and trigonometric functions, logarithmic integrals, and their (nested) rational expressions.
For such an extension, we construct, over the subfield of constants, a complement of the subspace of derivatives, and develop an algorithm that decomposes any element of the field into the sum of a derivative and a component lying in the complement. Consequently, an element is a derivative if and only if its complementary component vanishes. Moreover, the algorithm enables us to determine elementary integrability over the extension by computing parametric logarithmic parts, and leads to a reduction-based approach to constructing telescopers for 
elements in the extension, provided that  an a priori order bound is given.
\end{abstract}

\paragraph{Keywords:} Additive decomposition, Complete reduction, Creative telescoping, Elementary integration, In-field integration, Liouvillian extension, Risch equation,
Symbolic integration

\section{Introduction} \label{SECT:intro}

 Symbolic integration as a classical topic of computer algebra  is to express indefinite integrals in closed forms.
Historical developments and fundamental results on this topic are surveyed in~\cite{RaabSingerBook}.
From a computational perspective, the most systematic  advances are due to 
Risch \cite{Risch1969,Risch1970} and numerous researchers who  have refined, enhanced, extended, and implemented Risch’s algorithm \cite{RothsteinThesis, CherryThesis, TragerThesis, Davenport1986, Bronstein1990a, Bronstein1990b, BronsteinBook, RaabThesis}.

%
For an elementary function,  Risch's algorithm
determines whether it has an elementary integral, and computes such an integral if there exists one. 
\begin{example} \label{EX:log1}
Let
$ \displaystyle f(x)= \frac{x\log(x)^3+1}{x\log(x)}$ and  $ \displaystyle g(x) = \dfrac{x\log(x)^3+1}{(x+3)\log(x)}.$
By Risch's algorithm,
$$ \int f \,  dx = x \log(x)^2 -2x \log(x) +2x+ \log(\log(x))
\quad \text{and} \quad
\int g \,  dx = \int \dfrac{x\log(x)^3+1}{(x+3)\log(x)} \,  dx.$$
The algorithm yields an elementary integral of $f$, and reveals that $g$ has no elementary integral.
\end{example}
The most basic routine in Risch's algorithm is the Hermite-Ostrogradsky reduction for rational functions \cite{Hermite1872,Ostrogradsky1845}.
We refer to  \cite[Chapter 2]{BronsteinBook} and \cite[Chapter 11]{BlueBook} for its modern presentations.

In the rest of this section, let $C$ be a field of characteristic zero.
\begin{example} \label{EX:rational}
For every element $f \in C(x)$, the Hermite-Ostrogradsky reduction computes two elements $g, r \in C(x)$ such that
$f = g' + r,$
where $^\prime$ stands for the usual derivation $d/dx$, and
$r$ is  an $x$-proper fraction with a squarefree denominator.
Moreover, $r$ is unique, and 
$f \in C(x)^\prime$ if and only if $r=0$, where $C(x)^\prime$ stands for the $C$-subspace consisting of  derivatives in $C(x)$.

Set $S_x := \{ s \in C(x) \mid \text{$s$ is  $x$-proper and has a squarefree denominator} \}$.
Then $S_x$ is a $C$-subspace and $C(x) = C(x)^\prime \oplus S_x$ by the Hermite-Ostrogradsky reduction. 
The projection from~$C(x)$ to $S_x$ with respect to the direct sum
is a complete reduction for $C(x)^\prime$ 
in the sense of \cite[Definition 5.67]{KauersBook}, which is recalled below.
\end{example}

\begin{define} \label{DEF:crK}
Let $V$ be a $C$-linear space and $U$ be a subspace. A linear operator $\Phi: V \rightarrow V$ 
is called a {\em complete reduction} for $U$ if $v - \Phi(v) \in U$ for all $v \in V$ and $\ker(\Phi) = U$. 
\end{define}

A straightforward linear algebra argument reveals that a $C$-linear operator $\Phi$ on $V$ is a complete reduction for~$U$ if and only if 
$V=U \oplus \im(\Phi)$ and $\Phi$ is  the projection from $V$ onto $\im(\Phi)$ with respect to this direct sum.
In particular, all complete reductions are idempotent.

In the present paper, the subspace $U$ in the above definition is typically the image of some linear operator on $V$.
We therefore adopt the following definition for convenience.
\begin{define} \label{DEF:cr}
Let $V$ be a $C$-linear space and $\cL$ be a linear operator on $V$.
Another linear operator $\Phi$ on $V$ is called a 
{\em complete reduction for the pair $(V, \cL)$} if $\Phi$ is a complete reduction for $\im(\cL)$.
In other words, 
$V = \im(\cL) \oplus \im(\Phi)$ and $\Phi$ is the projection 
from $V$ onto $\im(\Phi)$ with respect to this direct sum.
\end{define}
The next definition describes how a complete reduction decomposes an element additively.
\begin{define} \label{DEF:Rpair}
Let $\Phi$ be a complete reduction for the pair $(V,\cL)$ and let $u,v \in V$ be such that $v = \cL(u) + \Phi(v)$.
We call $\Phi(v)$ the {\em remainder of $v$ with respect to $\Phi$}, and $(u, \Phi(v))$ a 
{\em reduction pair of $v$ with respect to $\Phi$} (or an {\em R-pair with respect to $\Phi$} for brevity).
\end{define}

\begin{remark} \label{RE:algcr}
Given a linear operator $\cL$ on a linear space $V$, we say that {\em there is an algorithm for constructing a complete reduction
for $(V, \cL)$} if we can fix a complement $W$ of $\im(\cL)$ in $V$ and develop an algorithm that, for every $v \in V$, computes 
$u \in V$ and $w \in W$ such that $v = \cL(u) +w$. Such an algorithm induces a complete reduction 
for $(V, \cL)$ by mapping  $v$ to $w$.
\end{remark}

The Hermite-Ostrogradsky reduction  in Example \ref{EX:rational}
not only induces a complete reduction for $(C(x), \, ^\prime)$, but also yields an algorithm
for computing
the corresponding R-pairs.
Based on this reduction, an algorithm is developed to construct telescopers for bivariate rational functions (see \cite{BCCL2010}).
The algorithm separates the computation of telescopers from the computation of certificates. This is desirable in the typical situation
where we are only interested in telescopers. Such an advantage 
has motivated a rapid development of complete reductions in various settings.
\begin{example} \label{EX:hyperexp}
Let $h$ be a nonzero rational function in $C(x)$, and
$V$ be the space spanned by the hyperexponential function $\exp(\int h \, dx)$ over $C(x)$.
The Hermite reduction for hyperexponential functions developed in \cite{BCCLX2013} gives a complete reduction 
for $(V, \, d/dx)$.
\end{example}

Recall that a {\em derivation} $^\prime: R \rightarrow R$ on a commutative ring $R$ is an additive map satisfying the usual Leibniz's rule:
$(ab)^\prime = a^\prime b  + a b^\prime$ for all $a, b\in R$.
The pair $(R, \, ^\prime)$  is called a {\em differential ring} and a {\em differential field} if $R$ is a field.
The set $\{c\in R\mid  c^\prime=0\}$ forms a subring whose elements are called {\em constants}.
This subring of constants becomes a subfield if $R$ itself is a field.

Let $(R, \, ^\prime)$ be a differential ring whose  subring $C$ of constants is a field. Then $R$ is a $C$-linear space.
Complete reductions have been established for such differential rings with instances of~$R$ including
the field of differential fractions \cite{BLLRR2016}, the field of algebraic functions \cite{CKK2016},
(sub)rings of D-finite functions \cite{CHKK2018,vdHJ2021,CDK2023}, exponential
 towers \cite{GaoThesis} and primitive towers \cite{DGLL2025}.
A complete reduction for the pair $(C(x), \cL)$  is developed in \cite{BCPS2018},
where $\cL$ denotes a linear differential operator. 
The reader is  referred  to \cite{BS2024,CGHS2025,CDKW2025} for complete reductions related to symbolic summation.

A complete reduction for $(R, \, ^\prime)$ decomposes $f\in R$ as $f=g^\prime  + r$ for some $g, r \in R$, where $r$ is minimal
with respect to a partial ordering defined in \cite[Section 1]{DGLL2025}.
In particular, $f$
is a derivative in $R$ if and only if $r=0$. 
So a complete reduction for $(R,  \, ^\prime)$  is an additive decomposition 
in the sense of \cite{CDL2018,DGLW2020}.
This offers an alternative, and potentially more informative, way to express integrals.
\begin{example} \label{EX:log2}
By viewing the functions $f$ and $g$ in Example \ref{EX:log1} as elements of
the differential field $\bQ(x, \log(x))$ with derivation $^\prime=d/dx$,
the complete reduction in \cite{DGLL2025} yields
\[ f(x) = h(x)^\prime + \frac{1}{x\log(x)} \quad \text{and} \quad
g(x) = h(x)^\prime + \frac{1}{(x+3) \log(x)} - \frac{3 \log(x)^2}{x+3}, \]
where $h(x) =  x \log(x)^2 -2 x \log(x) +2x$.

Note that $f(x)$ is the sum of a derivative in $\bQ(x, \log(x))$ and
a remainder $1/(x\log(x))$, which has an elementary integral $\log(\log(x))$.
By contrast, $g(x)$ is the sum of a derivative in $\bQ(x, \log(x))$ and a remainder, which is not elementarily
integrable. The elementary integrability of such remainders in logarithmic extensions can be determined
by computing parametric logarithmic parts (see \cite[Theorem 5.3]{DGLL2025}).
Unlike Risch's algorithm taken in Example \ref{EX:log1},
the complete reduction  renders us an integrable part $h(x)$ of $g(x)$ in~$\bQ(x, \log(x))$
and a minimal non-integrable part with respect to the partial ordering defined in \cite[Section 1]{DGLL2025}.

Both decompositions can also be obtained from
the algorithms in \cite{CDL2018,DGLW2020}.
\end{example}

We are interested in developing complete reductions on transcendental Liouvillian extensions. 
Such extensions are more general than transcendental elementary extensions
(see \cite[Definition 5.1.4]{BronsteinBook}), and do not contain any complicated reduced elements (see \cite[Definition 3.5.2]{BronsteinBook}).
\begin{define} \label{DEF:liouvillian}
A differential field $F_n=C(t_1, \ldots, t_n)$ with derivation $^\prime$
is called a {\em transcendental Liouvillian extension} of $C$ if
\begin{itemize}
\item[(i)] $t_1, \ldots, t_n$ are algebraically independent over $C$,
\item[(ii)] for every  $i \in \{1, \ldots, n\}$, either $t_i^\prime \in F_{i-1}$ or $t_i^\prime/t_i \in F_{i-1}$, 
where $F_{i-1}= C(t_1, \ldots, t_{i-1})$,
\item[(iii)] $C$ is the subfield of constants in $F_n$.
\end{itemize}
We call $t_i$ a {\em primitive} (resp.\ {\em hyperexponential}) {\em generator} if 
$t_i^\prime \in F_{i-1}$ (resp.\ $t_i^\prime/t_i \in F_{i-1}$). Moreover, $F_n$ is called a {\em primitive} (resp.\ {\em hyperexponential}) {\em tower}
if all of the $t_i$'s are primitive (resp.\ hyperexponential). 
\end{define}

 Due to the presence of hyperexponential generators, a more general inductive framework is indispensable (see Example \ref{EX:risch}).
A similar phenomenon also occurs in the inductive proof of the main theorem in \cite{Risch1969}.
In fact, the theorem has two assertions. The first is of interest in elementary integration, 
and the second  one serves to facilitate  the proof.    
Within the second assertion, the (parametric) Risch differential equation
\begin{equation} \label{EQ:risch}
   y^\prime + h y = \sum_{i=1}^m c_i g_i
\end{equation}
appears in the literature for the first time. We write $h$ here instead of the symbol $f$ used in \cite{Risch1969}, because 
$f$ often stands for an arbitrary element of a differential field in the sequel. 
In \eqref{EQ:risch}, $h$ and the $g_i$'s belong to a differential field $F$, and the $c_i$'s are constants to be determined.

To develop complete reductions for symbolic integration, we define Risch operators using the homogeneous part of \eqref{EQ:risch}.
\begin{define} \label{DEF:risch}
Let $(F, \, ^\prime)$ be a differential field and $h \in F$. The map
\[ \begin{array}{cccc}
\cR_h: & F & \rightarrow & F \\
       & y & \mapsto     & y^\prime + h y
\end{array} \]
is called the {\em Risch operator associated to $h$}.
\end{define}
Risch operators are linear over the subfield of constants in $F$. In particular, $\cR_0$ is the derivation operator on $F$.

Assume that there is a complete reduction $\Phi_h$  for $(F, \, \cR_h)$, where $h$ 
is given on the left-hand side of \eqref{EQ:risch}. Then
each of the $g_i$'s on the right-hand side of \eqref{EQ:risch} has an R-pair $(p_i, \Phi_h(g_i))$ with respect to $\Phi_h$.  
Therefore, \eqref{EQ:risch} can be rewritten as
$\cR_h (y) = \sum_{i=1}^m c_i \left(\cR_h(p_i)+ \Phi_h(g_i)\right)$, which is equivalent to 
\[ \cR_h\left( y - \sum_{i=1}^m c_i p_i \right) = \sum_{i=1}^m c_i \Phi_h(g_i). \]
Since $\im(\cR_h)\cap\im(\Phi_h)=\{0\}$,
there exists an element $y \in F$ such that \eqref{EQ:risch} holds for some 
constants $c_1, \ldots, c_m$ 
if and only if $\sum_{i=1}^m c_i \Phi_h(g_i) = 0$ and $\cR_h\left( y - \sum_{i=1}^m c_i p_i \right) = 0$. The first constraint leads 
to a linear system over the subfield  of constants in $F$, and the second amounts to determining the kernel of~$\cR_h$.
In other words, the complete reduction finds all solutions of \eqref{EQ:risch} in $F$
by solving a linear algebraic system and $y^\prime + h y =0$ in $F$.

Let $F_n$ 
be a transcendental Liouvillian extension of $C$,
and $h$ be any element of~$F_n$.
The main result of this paper is Theorem \ref{TH:main}, in which an algorithm is presented 
for constructing a complete reduction for $(F_n, \cR_h)$ for every $h \in F_n$.
In doing so, we
generalize
the complete reduction in \cite{DGLL2025}  published in the Proceedings of ISSAC 2025 from primitive towers to general transcendental Liouvillian extensions.
The generalization also includes the complete reduction in Example \ref{EX:hyperexp} and that on exponential towers in \cite{GaoThesis}
as special cases.  In particular, setting $h:=0$ yields a complete reduction $\Phi_0$ for derivatives in $F_n$.

The new complete reduction enables us to straightforwardly determine in-field integrability of functions 
that can be expressed as elements of a transcendental Liouvillian extension.


\begin{example} \label{EX:exp}
Let
	\[
	f=\dfrac{x}{1+\exp(x)} \cdot \exp\left(\dfrac{x}{1+\exp(x)} \right).
	\]
It can be regarded as an element in the transcendental Liouvillian  extension $\bC(x,t,y)$ of $\bC$, where
$t=\exp(x)$ and $y=\exp\left(\dfrac{x}{1+\exp(x)} \right)$.
Indeed,  $f=\dfrac{xy}{1+t} \in \bC(x,t,y)$.
We determine whether $f$ has an integral in $\bC(x,t,y)$.

 The complete reduction given by the algorithm outlined in Section \ref{SECT:main}
 finds that 
 $$f = - \left(y+ \frac{y}{t} \right)'$$ and thus
\[ \int f \, dx=-\bigl(1+\exp(-x)\bigr)\cdot \exp\left(\dfrac{x}{1+\exp(x)}\right). \]
The {\sc Axiom}-based computer algebra system {\sc FriCAS 1.3.13} returns the same integral  (see \cite{FriCAS}).
But the {\tt int()}  command (with  option {\tt method=\_RETURNVERBOSE}) in {\sc maple} 2026
leaves the integral unevaluated, and so does the {\tt Integrate[]} command in {\sc mathematica} 14.3.

Computing this integral amounts to carefully handling a special case in Risch's algorithm,
which corresponds to Proposition
\ref{PROP:hbasis} (ii). Details are given in Example \ref{EX:exptower}.
\end{example}

In \cite[Chapters 3 and 4]{RaabThesis},
Raab provides several building blocks that were either missing or incomplete
within the description of Risch's algorithm in \cite{BronsteinBook}, 
and
presents a method for computing elementary integrals over admissible differential fields subject to some mild restrictions.
Such fields are more general than transcendental Liouvillian extensions.
Raab's
method needs all the ingredients from Risch's algorithm for the transcendental case.
The complete reduction presented in this paper leads to 
an alternative algorithm for elementary integration over transcendental Liouvillian extensions.
It  is based on some special structure of remainders described in Proposition~\ref{PROP:decomp},
and techniques in \cite[Section 4.3.1]{RaabThesis} for dealing  with  the logarithmic derivative
recognition problem in \cite[Section 5.12]{BronsteinBook} and parametric logarithmic derivative
problem in \cite[Section 7.3]{BronsteinBook}.

Besides elementary integration, our complete reduction may also be used for 
reduction-based creative telescoping in transcendental Liouvillian extensions,  as illustrated below. 
\begin{example} \label{EX:tele0}
Let 
$$ R(k) = \int_{0}^{\frac{\pi}{2}}\cos(2kx)\log(\sin(x))~dx \quad \text{and} \quad  I(k) = \int_{0}^{\frac{\pi}{2}}\sin(2kx)\log(\sin(x))~dx.$$
By the complete reduction,
we find that $(k+1)\cS_k - k$ is a minimal telescoper for both $R(k)$ and~$I(k)$, where $\cS_k$ stands for the shift operator
$k \mapsto k+1$. Then the telescoper yields two linear recurrences:
\[ R(k+1) - \frac{k}{k+1} R(k) = 0
\quad \text{and} \quad
I(k+1) -  \frac{k}{k+1} I(k) = \displaystyle  \frac{(-1)^k-2k-1}{4k(k+1)^2},
\]
which help us express $R(k)$ and $I(k)$ in closed forms.
Details are given in Example \ref{EX:tele}.
\end{example}

Preliminary experiments in {\sc maple} 2026 illustrated that our complete reduction outperformed the {\tt int()}  command 
in {\sc maple}  2026 for computing in-field integrals of derivatives of elements in certain transcendental elementary 
extensions. In particular, a significant speed-up was observed when integrands were derivatives
of polynomials in the generators of such an extension (see Tables \ref{tab:dLEE} and \ref{tab:dLEE2} in Section \ref{SECT:expr}).

Let $F_n = C(t_1, \ldots, t_{n-1}, t_n)$ be a transcendental Liouvillian extension of $C$.
Our construction of a complete reduction proceeds by induction on $n$: the base case $n=0$ corresponds to the subfield of constants, 
and for the inductive step we set $F := C(t_1,\dots,t_{n-1})$ and $t = t_n.$
Then $t$ is either primitive or hyperexponential over $F$.

Assume that there is an algorithm for constructing complete reductions for $(F, \cR_\alpha)$ for all $\alpha \in F$. 
For an element $h \in F(t),$ the overall strategy for constructing a complete reduction for $(F(t), \, \cR_h)$ is outlined in Figure \ref{FIG:cr},
in which $t$-normalized elements, 
the companion operator $\cP_h$, and echelon sequences are defined in Definitions \ref{DEF:normalized}, \ref{DEF:comp} and
\ref{DEF:echelon}, respectively.
The differential subring $F\langle t \rangle$ of $F(t)$ is defined in \eqref{EQ:reduced}.
\usetikzlibrary{shapes.geometric,positioning,arrows.meta}
\tikzset{
process/.style = {rectangle, draw, minimum width = 4cm, text centered, rounded corners, minimum height = 1.2cm},
mprocess/.style = {rectangle, draw, text width = 4.2cm, text centered, rounded corners, minimum height = 1.2cm},
arrow/.style  = {->,>=stealth}
}
\begin{figure}[htbp] 
\centering
\begin{tikzpicture}[thick,node distance=2cm]
\node [process] (start) {Complete reduction for $(F(t),\cR_h)$ with $h\in F(t)$};
\node [process, below of=start, yshift = -.5cm] (norm) 
{Complete reduction for $(F(t),\cR_h)$ with $t$-normalized $h\in F(t)$};
\node [process, below of=norm, yshift = -.5cm] (cr) 
{Complete reduction for $(F\langle t\rangle,\cP_h)$ with $t$-normalized $h\in F(t)$};
\node [mprocess, below left of=cr, yshift = -1cm,xshift=-1.7cm] (prim) 
{Primitive case\quad $F\langle t\rangle = F[t]$};
\node [mprocess, below right of=cr, yshift = -1cm,xshift=1.7cm] (hyper) 
{Hyperexponential case $F\langle t\rangle = F[t,t^{-1}]$};
\node [process, below of=cr, yshift = -2.9cm] (auxi) 
{Auxiliary subspace $A_h$ s.t.\ $F\langle t\rangle = \im(\cP_h) + A_h$ with $t$-normalized $h\in F(t)$};
\node [process, below of=auxi, yshift = -.5cm] (comp) 
{Complementary subspace $W$ s.t.\ $F\langle t\rangle = \im(\cP_h) \oplus W$ with $t$-normalized $h\in F(t)$};
\draw [arrow] (start) -- (norm) node[above right = 1cm and .1cm] {Algorithm GKS in \cite{CDGL2025}};
\draw [arrow] (norm) -- (cr) node[above right = 1cm and .1cm] {Algorithm GKSR in \cite{CDGL2025}};
\draw [arrow] (cr) -- (0,-6.2) -| (prim); 
\draw [arrow] (cr) -- (0,-6.2) -| (hyper); 
\draw [arrow] (prim) |- (0,-8.6) -- (auxi);
\draw [arrow] (hyper) |- (0,-8.6) node [below right = .1cm and .1 cm] {Complete reductions on $F$} -- (auxi);
\draw [arrow] (auxi) -- (comp) node[above right = 1cm and .1cm] {Echelon sequence of $\im(\cP_h)\cap A_h$};
\end{tikzpicture}
\caption{General outline of the complete reduction construction} \label{FIG:cr}
\end{figure}

We transform the problem of constructing a general complete reduction
to that of constructing a complete reduction for $(F\langle t \rangle, \, \cP_h)$ with $h$ being $t$-normalized. 
The idea for such a transformation 
can be traced back to \cite[Condition b) on page 905]{Davenport1986} and \cite[Section 6.1]{BronsteinBook},
in which the problem of solving Risch differential equations in $F(t)$ is reduced to the corresponding problem in the differential subring~$F\langle t \rangle$.
Similar transformations are also given in \cite{GLL2004,CDGL2025}, which can be viewed, respectively, as a differential analogue and a
generalization of the minimal decomposition for hypergeometric terms in \cite{AP2001, AP2002}.
This step is based on Algorithms {\sc GKS} and  {\sc GKSR} in \cite{CDGL2025}.
Algorithm {\sc GKS} decomposes an element of $F(t)$  as the sum of a $t$-normalized element and a logarithmic derivative,
and Algorithm {\sc GKSR} decomposes an element of $F(t)$ as the sum of an element of $\im(\cR_h)$, an element of $F\langle t \rangle$ and a 
$t$-simple element (see Definition \ref{DEF:tsimple}) under the assumption that $h\in F(t)$ is $t$-normalized.

Despite the substantial differences between primitive and hyperexponential generators, we develop a unified
approach to constructing a complete reduction for $(F\langle t \rangle, \, \cP_h)$. Our approach constructs an auxiliary subspace $A_h$ such that $F\langle t \rangle = \im(\cP_h)+A_h$,
and computes  an echelon sequence of~$\im(\cP_h) \cap A_h$.
This sequence then enables us to find a complement of $\im(\cP_h)$ in a dual manner.
While related ideas have appeared in 
\cite[Section 4.1]{BCCLX2013}, \cite[Chapter 4]{GaoThesis} and \cite[Section 3]{DGLL2025} in one way or another,
a comprehensive generalization is needed in our setting, because $F(t)$ is  more involved than
the differential fields considered in those works. 
This step makes essential use of  two algorithms in \cite[Section 4.3.1]{RaabThesis}. One is
for recognizing logarithmic derivatives in $F$, and the other is for solving 
the parametric logarithmic derivative problem in $F$.

The remainder of this paper is organized as follows. Section~\ref{SECT:pre} introduces basic terminology from symbolic integration and provides
a dual description of complementary subspaces. 
In Section \ref{SECT:risch}, we recall the notion of $t$-normalized elements, define companion operators,
and transform the problem of constructing complete reductions on a transcendental Liouvillian extension
to that on a differential subring.
Our induction hypothesis is given in Section \ref{SECT:ind}.
The most technical inductive steps are treated for primitive generators in Section \ref{SECT:prim}, 
and for hyperexponential generators in Section \ref{SECT:hyperexp}.
The induction is finalized in Section \ref{SECT:main}. 
Applications and computational benchmarks are presented in Sections \ref{SECT:app} and \ref{SECT:expr}, respectively.
Section \ref{SECT:conc} contains concluding remarks.
In the appendix, we describe algorithms derived from the constructive proofs given in Sections \ref{SECT:prim} and~\ref{SECT:hyperexp}.

\section{Preliminaries} \label{SECT:pre}

This section consists of three parts. First, we introduce notation to be used in the sequel, along with
fundamental terminology for symbolic integration. Next, we review several useful properties of residues 
relevant to elementary integration. Finally, we characterize certain complementary subspaces via
kernels of linear functions. This characterization will enable us to construct complementary subspaces 
of infinite dimension in Sections \ref{SECT:prim} and \ref{SECT:hyperexp}. 

\subsection{Basic notation and terminology} \label{SUBSECT:term}

Besides the usual notation in textbooks, we let $\bN$, $\bN_0$ and $\bN^-$ be the sets of positive, nonnegative and negative integers, respectively.
Set
$[n] := \{1, \ldots, n\}$ for $n \in \bN$, and $[n]_0 := \{0, 1, \ldots, n\}$ for~$n \in \bN_0$.
For an abelian group $(G, +, 0)$, 
we write $G^\times$ for $G \setminus \{0\}$.
Let $R$ be a commutative ring, $S \subset R$ and $x \in R$.  The set $\{s x \mid s \in S\}$ is denoted by  $S \cdot x$ or $x \cdot S$.

All fields appearing in the sequel are of characteristic zero.
Let $F$ be a field and $p \in F[t]^\times$. The degree and leading coefficient of $p$ are denoted
by $\deg_t(p)$ and $\lc_t(p)$, respectively.  In addition, $\deg_t(0) := -\infty$ and $\lc_t(0) := 0$.
We set $F[t]_{<d} := \{ p \in F[t] \mid \deg_t(p)<d\}$ for all $d \in \bN_0$.

The ring of Laurent polynomials in $t$ over $F$ is denoted by $F[t,t^{-1}]$.
Let $q \in F[t, t^{-1}]$ be of the form
$q_k t^k + q_{k-1} t^{k-1} + \cdots + q_{l+1} t^{l+1} + q_l t^l,$
where $k, l \in \bZ$, $k \ge l$, $q_k, q_{k-1}, \ldots,  q_{l+1}, q_l \in F$, and $q_k q_l \neq 0$.
The {\em head degree, head coefficient, tail degree and tail coefficient} of $q$ are
defined to be $k$, $q_k$, $l$ and $q_l$,
denoted by  $\hdeg_t(q)$, $\hc_t(q)$, $\tdeg_t(q)$ and $\tc_t(q)$, respectively.
In addition, $\hdeg_t(0):= -\infty$, $\hc_t(0)=0$, $\tdeg_t(0) := \infty$, and $\tc_t(0):=0$.
Since 
$$F[t,t^{-1}]=F[t] \oplus \left(t^{-1} \cdot F[t^{-1}]\right),$$
we  set $q^+$ and $q^-$ to be  the respective projections of $q$ to $F[t]$ and $t^{-1} \cdot F[t^{-1}]$.
For $Q \subset F[t, t^{-1}]$ and $l \in \bZ$, 
we denote by $Q_{>l}$ 
the set of elements in $Q$ whose tail degrees are greater than $l$.

For an element $f$ of $F(t)$, its numerator and denominator are denoted by $\num_t(f)$ and $\den_t(f)$, respectively.
Moreover, $\den_t(f)$ is set to be monic. We say that $f$ is {\em $t$-proper} if the degree of $\num_t(f)$
is less than that of $\den_t(f)$. For later convenience, the degree of $f$ in $t$ is defined to be the maximum of
the degrees of its numerator and denominator, and is denoted by $\Deg_t(f)$.
Note that $\deg_t(0)=-\infty$ but $\Deg_t(0)=0$.

The subscript $t$ will be omitted for brevity when the indeterminate $t$ is clear from context.

Let $f \in F(t)^\times$, $a=\num(f)$ and $b = \den(f)$. For $p \in F[t]$ with $\deg(p)>0$,
the {\em order} of~$f$ at $p$, denoted by $\nu_p(f)$,  is defined as follows:
if $p$ occurs in $b$ with multiplicity $m>0$ then $\nu_p(f) = -m$; 
otherwise, $\nu_p(f)$ is equal to the multiplicity of $p$ in  $a$. By convention, we set 
$\nu_p(0) = \infty$ for all $p \in F[t]$ with $\deg(p)>0.$
The {\em order of $f$ at infinity}, denoted by $\nu_\infty(f),$ is defined as $\deg(b)- \deg(a)$.

Let $(F, \, ^\prime)$ be a differential field. 
We denote  $\{f^\prime \mid f \in F\}$ by $F^\prime$, which is a linear space over the subfield of constants
in $F$.
An element of $F$ is called a {\em logarithmic derivative}
if it is equal to  $g^\prime/g$ for some $g \in F^\times$.
A differential field $(E, \, \delta)$ is called a {\em differential field extension} of $(F, \, ^\prime)$ if $F$ is a subfield of $E$ and $\delta|_F=\, ^\prime$.
The derivation $\delta$ will still be denoted by $^\prime$ when there is no confusion.

Let $(E, \, ^\prime)$ be a differential field extension of $(F, \, ^\prime)$ and $u \in E$.
Then $u$ is said to be {\em primitive} (resp.\ {\em hyperexponential}) over $F$ if $u^\prime \in F$ (resp.\  $u \neq 0$ and $u^\prime/u \in F$).
An element $u$ of $E$ is said to be {\em logarithmic} (resp.\ {\em exponential}) over $F$ if  $u^\prime$ is a logarithmic derivative in $F$ (resp.\  $u \neq 0$ and $u^\prime/u \in F^\prime$). If $u$ is logarithmic (resp.\ exponential) over $F$,
then it is primitive (resp.\ hyperexponential) over $F$.  The converse is false.
For example, the logarithmic integral $\Li(x)$,  which equals $\int \frac{1}{\log x}$, is primitive but not logarithmic over $\bC(x, \log(x))$, and  $x \exp(x)$ is hyperexponential 
but not exponential over $\bC(x)$.

An element $t$ in a differential field extension of $(F, \, ^\prime)$ is called a {\em monomial over $F$} if $t$ is transcendental over $F$ and $t^\prime \in F[t]$.
A monomial $t$ over $F$ is said to be {\em regular} if $F(t)$ and~$F$ have the same subfield of constants. By Definition \ref{DEF:liouvillian},
a transcendental Liouvillian
extension $C(t_1, \ldots, t_n)$ is a differential field extension of $C$, in which $t_i$ is a regular monomial over $C(t_1, \ldots, t_{i-1})$,
and is either primitive or hyperexponential
over the same field for all $i \in [n]$.

In the rest of this section, we let $(F, \, ^\prime)$ be a differential field, $C$ be its subfield of constants,
and $t$ be a monomial over $F$.
Then $F[t]$ is a differential subring of $F(t)$.
A nonzero polynomial  $p \in F[t]$ is said to be {\em normal} (resp.\ {\em special}) if $\gcd(p, p^\prime)=1$
(resp.\ $\gcd(p, p^\prime)=p$) (see \cite[Definition 3.4.2]{BronsteinBook}).
The set of normal polynomials with positive degrees is denoted by $N_t$.

 Let us recall the notion of $t$-simple elements in \cite[Section 2.1]{CDGL2025}.
\begin{define} \label{DEF:tsimple}
An element of $F(t)$ is
{\em $t$-simple} if it is $t$-proper and has a normal denominator.
\end{define}
All $t$-simple elements in $F(t)$ form an $F$-subspace, which is denoted by $S_t$.
Let
\begin{equation} \label{EQ:reduced}
F\langle  t \rangle = \{ f \in F(t) \mid \text{$\den(f)$ is special} \}.
\end{equation}
Then $F\langle  t \rangle$ is a differential subring (see \cite[Corollary 4.4.1]{BronsteinBook}).
We remark that $t$-simple elements are not required to be $t$-proper in \cite[Definition 3.5.2]{BronsteinBook}. 
Our further requirement for $t$-properness
results in $S_t \cap F \langle t \rangle = \{0\}$, which is necessary for the direct sum  \eqref{EQ:pre1} in Section \ref{SECT:risch}.

In what follows, we use lowercase Greek letters such as $\alpha, \beta, \gamma, \lambda, \mu, \sigma$ and $\tau$ to denote elements in $F$ or those in 
the algebraic
closure of $F$ except coefficients given by polynomials in $F[t]$ and elements of $C$. 
On some occasions, lowercase Greek letters in Alegreya-LF style such as $\bfxi, \bfeta$ and $\bfomega$ indicate elements of $F(t)$, which may possibly lie outside $F$.

Some basic facts about generators in a transcendental Liouvillian extension are collected in the following lemma for  later references. 
\begin{lemma} \label{LM:fact}
Let $t$ be regular over $F$, $d \in \bZ$, $\alpha \in F^\times$ and $p \in F[t]$.
\begin{itemize}
\item[(i)] If $d>0$ and $t$ is primitive over $F$, then $\left( \alpha t^d \right)^\prime = \alpha^\prime t^d + \left(\alpha d t^\prime \right) t^{d-1}$
whose degree is equal to~$d$ if $\alpha$ is not a constant, and $d-1$, otherwise.
\item[(ii)] If $d \neq 0$ and $t$ is hyperexponential over $F$, then $\left(\alpha t^d \right)^\prime = \left(\alpha^\prime + \alpha d (t^\prime/t)\right) t^d$
whose head and tail degrees are both equal to $d$. 
\item[(iii)] If $t$ is  primitive over $F$, then $p$ is normal if and only if $p$ is squarefree.
\item[(iv)] If $t$ is hyperexponential over $F$, then $p$ is normal if and only if $p$ is squarefree and $t \nmid p$.
\end{itemize}
\end{lemma}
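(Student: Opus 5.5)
The four assertions are elementary, and I will verify them separately. The only inputs are the Leibniz rule, the regularity of $t$ (the constants of $F(t)$ are those of $F$), and the definition of normal polynomials as those with $\gcd(p,p^\prime)=1$.

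For (i), I expand $(\alpha t^d)^\prime=\alpha^\prime t^d+\alpha d t^{d-1}t^\prime$ by the Leibniz rule. Since $t^\prime\in F$, this is a polynomial of degree at most $d$ with $t^d$-coefficient $\alpha^\prime$. If $\alpha\notin C$, then $\alpha^\prime\neq 0$ and the degree is $d$. If $\alpha\in C^\times$, the expression reduces to $\alpha d t^\prime t^{d-1}$. Here I must show $t^\prime\neq 0$: otherwise $t$ would be a constant of $F(t)$, hence in $F$ by regularity, contradicting the transcendence of $t$. So the degree is $d-1$.

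For (ii), the identity $(\alpha t^d)^\prime=(\alpha^\prime+\alpha d\,t^\prime/t)\,t^d$ is immediate from the Leibniz rule, with $t^\prime/t\in F$. The claim about head and tail degrees amounts to showing the coefficient $\alpha^\prime+\alpha d\,t^\prime/t$ is nonzero. This is the step needing an argument, but it is short. If the coefficient vanished, then $(\alpha t^d)^\prime=0$, so $\alpha t^d\in C\subseteq F$ by regularity. That would make $t^d$ algebraic over $F$ with $d\neq 0$, contradicting transcendence.

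For (iii) and (iv), I use the standard fact that for a squarefree $p$, $\gcd(p,p^\prime)=1$ fails exactly when some irreducible factor $q$ of $p$ divides $q^\prime$, i.e.\ $q$ is special. Write $p=q^m r$ with $q\nmid r$, so $p^\prime=mq^{m-1}q^\prime r+q^m r^\prime$. If $m\ge 2$, then $q\mid\gcd(p,p^\prime)$, so normal implies squarefree. If $m=1$, then $q\mid p^\prime$ if and only if $q\mid q^\prime$.

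It remains to determine the irreducible special polynomials.
\begin{itemize}
\item In the primitive case, $q^\prime=\lc(q)^\prime t^{e}+(\text{lower terms})$, where $e=\deg q$ and $q^\prime$ has degree at most $e$. Normalize $q$ to be monic. Then $\deg q^\prime<\deg q$, so $q\mid q^\prime$ forces $q^\prime=0$. That gives $q\in C(t)$, hence $q\in C[t]\cap C$ by regularity, contradicting $\deg q>0$. So no irreducible is special, and normal is equivalent to squarefree.
\item In the hyperexponential case, $t\mid t^\prime$, so $t$ is special and $t\mid p$ prevents normality.
\item For monic irreducible $q\neq t$ in the hyperexponential case, (ii) shows that $q^\prime$ has degree at most $\deg q$ with leading coefficient $e\,t^\prime/t$. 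If $q\mid q^\prime$, then $q^\prime=(e\,t^\prime/t)\,q$. Comparing constant terms $q_0\neq 0$ gives $q_0^\prime=e\,(t^\prime/t)\,q_0$, so $(q_0/t^{e})^\prime=0$. Regularity then forces $t^{e}\in F$, a contradiction.
\end{itemize}
Hence the only special irreducible is $t$, which yields (iv). The main obstacle I anticipate is this last constant-term comparison; the remaining steps are routine.
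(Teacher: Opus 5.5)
Your proof is correct. For (i) and for the formula in (ii) it is the same Leibniz-rule computation as in the paper. Elsewhere you take a more self-contained route than the paper, which simply cites \cite[Theorem 5.1.2]{BronsteinBook} for the nonvanishing of $\alpha^\prime+\alpha d\,t^\prime/t$, and \cite[Lemma 3.4.4, Theorems 5.1.1 and 5.1.2]{BronsteinBook} for (iii) and (iv).

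You derive everything directly from regularity and transcendence:
\begin{itemize}
\item In (ii), $(\alpha t^d)^\prime=0$ would put $\alpha t^d$ in $C$, hence $t^d$ in $F$, which is impossible for $d\neq 0$.
\item For (iii) and (iv), you reduce normality of a squarefree polynomial to the absence of special irreducible factors, which is essentially the content of the cited Lemma 3.4.4.
\item You then classify the special irreducibles: by a degree count in the primitive case, and by comparing constant terms in $q^\prime=(e\,t^\prime/t)\,q$ in the hyperexponential case.
\end{itemize}

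Your version makes explicit exactly where regularity enters: $t^\prime\neq 0$; no monic $q$ of positive degree satisfies $q^\prime=0$; and $q_0/t^e$ cannot be a constant. The paper's citations, by contrast, keep the lemma to a few lines.

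There is one small wording slip in the primitive case. ``$q\in C(t)$, hence $q\in C[t]\cap C$'' should simply read ``$q$ is a constant of $F(t)$, hence $q\in C$ by regularity''. The logic is unaffected.
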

\begin{proof} 
(i) Let $m$ be the degree of $\left( \alpha t^d \right)^\prime$. By Leibniz's rule, $\left( \alpha t^d \right)^\prime 
= \alpha^\prime t^d + \left( \alpha d t^\prime \right) t^{d-1}$. 
So $m=d$ if $\alpha^\prime \neq 0$. 
Otherwise, $m=d-1$ since $t$ is both regular and primitive.

(ii) Similarly, Leibniz's rule implies that $\left( \alpha t^d \right)^\prime = \left(\alpha^\prime + \alpha d (t^\prime/t)\right) t^d$,
which is nonzero by \cite[Theorem 5.1.2]{BronsteinBook}. Since $t^\prime/t \in F$, both head and tail degrees of $\left( \alpha t^d \right)^\prime$ are 
equal to $d$.

(iii) and (iv) hold by \cite[Lemma 3.4.4, Theorems 5.1.1 and  5.1.2]{BronsteinBook}.
\end{proof}

We use sequences in algorithmic descriptions in order to avoid clumsy names of list operations.
Let $L: l_1, \ldots, l_k$ be a sequence. Then $l_i$ is denoted by $L[i]$ for all $i \in [k]$.  
The {\em length} of $L$ is defined to be $k$ and denoted by $\len(L)$.
In the description of an algorithm, comments are placed between $(^* {\sl \cdots} ^*)$.
When outlining algorithms, we abbreviate \lq\lq with respect to\rq\rq\ and \lq\lq such that\rq\rq\ 
as \lq\lq w.r.t.\rq\rq\ and \lq\lq s.t.\rq\rq, respectively.

\subsection{Residues and elementary integration} \label{SUBSECT:residue}

First, we recall the notion of residues given in \cite[Definition 4.4.1]{BronsteinBook}. 
\begin{define} \label{DEF:residue}
Let $p \in N_t$ and $V_p = \{f \in F(t) \mid \nu_p(f) \ge -1\}$. 
For an element $f \in V_p$, the {\em residue of $f$ at $p$}, denoted by $\residue_p(f),$ 
is $\pi_p(f p/p^\prime)$, where $\pi_p$ stands for the canonical homomorphism from 
the ring $\{ q \in V_p \mid \nu_p(q) \ge 0\}$ to $F[t]/(p)$.
\end{define}
Note that $\residue_p(f)$ is a congruence class.
We identify  $\residue_p(f)$ with the element of degree less than $\deg(p)$ in $\pi_p(f p/p^\prime)$, because we are mainly concerned with constant residues. 
\begin{remark} \label{RE:residue}
The residue of~$f\in S_t$ is well-defined at every element of $N_t$, as $\nu_p(f) \ge -1$ for all $p \in N_t$.
\end{remark}

Two well-known facts about residues are:  
\begin{itemize}
\item all residues of an element in $F[t]$ are equal to zero;
\item all residues of a logarithmic derivative in $F(t)$ are integers (see \cite[Corollary 4.4.2 (iii)]{BronsteinBook}).
\end{itemize}

The next two lemmas summarize some technical results scattered in \cite{DGGL2023},
which will be useful for proving Theorem \ref{TH:elem}. 
\begin{lemma} \label{LM:logder}
Let $t$ be regular over $F$, $p \in F[t]^\times$ with $d = \deg(p)$ and $p_d = \lc(p)$. 
\begin{itemize}
\item[(i)] If $t$ is primitive over $F$, then $p^\prime/p = s + p_d^\prime/p_d$ for some $s \in S_t$.
\item[(ii)] If $t$ is hyperexponential over $F$, then $p^\prime/p = s+ p_d^\prime/p_d+ d t^\prime/t$
for some $s \in S_t$.
\item[(iii)] All residues of $s$ appearing in (i) or (ii) are integers.
\end{itemize}
\end{lemma}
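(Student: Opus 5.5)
The plan is to factor $p$ and use the additivity of logarithmic derivatives. Write $p = p_d \prod_{i} q_i^{e_i}$, where the $q_i\in F[t]$ are monic, irreducible and pairwise distinct, and $e_i\in\bN$. Then
\[ \frac{p^\prime}{p} = \frac{p_d^\prime}{p_d} + \sum_i e_i \frac{q_i^\prime}{q_i}. \]
It therefore suffices to analyse $q^\prime/q$ for a single monic irreducible $q$ of degree $m\ge 1$. Since $t$ is a monomial, $q^\prime\in F[t]$, and $q^\prime/q$ has at most a simple pole at $q$. The argument splits into the two cases of the statement.

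\emph{Primitive case (i).} Write $q = t^m + q_{m-1}t^{m-1}+\cdots$. Since $t^\prime\in F$, we get
\[ q^\prime = \bigl(mt^\prime + q_{m-1}^\prime\bigr)t^{m-1} + \cdots, \]
so $\deg(q^\prime)<m$ and $q^\prime/q$ is $t$-proper. Its denominator $q$ is irreducible, hence squarefree, hence normal by Lemma~\ref{LM:fact}~(iii). (If $q^\prime=0$ the term simply vanishes.) Thus $q^\prime/q\in S_t$. Because $S_t$ is an $F$-subspace, $s:=\sum_i e_i q_i^\prime/q_i\in S_t$, which proves (i).

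\emph{Hyperexponential case (ii).} Write $t^\prime = \eta t$ with $\eta\in F$. Then
\[ (t^k)^\prime = k\eta t^k, \qquad q^\prime = m\eta t^m + (\text{terms of degree }<m). \]
If $q=t$, then $q^\prime/q = t^\prime/t$ exactly. If $q\neq t$, then $q^\prime - m\eta q$ has degree less than $m$, so
\[ \frac{q^\prime}{q} = m\frac{t^\prime}{t} + \frac{q^\prime - m\eta q}{q}, \]
and the last fraction is $t$-proper. Its denominator $q$ is squarefree and $t\nmid q$, so $q$ is normal by Lemma~\ref{LM:fact}~(iv). Summing with the multiplicities, the $t^\prime/t$ coefficients add up to $e_0 + \sum_{q_i\neq t} e_i m_i = d$, where $e_0$ is the multiplicity of $t$ and $m_i=\deg(q_i)$. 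The remaining terms form $s\in S_t$, which proves (ii).

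\emph{Integer residues (iii).} Take $r\in N_t$ and decompose it into its irreducible factors; each irreducible $q$ dividing $\den(s)$ is one of the $q_i$ above. Since $S_t$ is closed under addition, the residue of $s$ at $r$ is computed factor by factor. I would use the known fact that $p^\prime/p$ is a logarithmic derivative, so all its residues are integers (\cite[Corollary 4.4.2 (iii)]{BronsteinBook}). The differences between $p^\prime/p$ and $s$ are $p_d^\prime/p_d\in F$ and $d\,t^\prime/t = d\eta\in F$, both polynomials, which have zero residues. Residues are additive on $V_r$, so the residues of $s$ coincide with those of $p^\prime/p$ and are therefore integers.

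Alternatively, one can compute directly that $\residue_{q_i}(s)=e_i$ via $\pi_{q_i}\bigl(e_i q_i^\prime/q_i^\prime\bigr)$. The only delicate point is the hyperexponential subcase where the piece $-m\eta$ is attached to $q$; since it is a polynomial, it contributes zero residue, so this causes no difficulty.
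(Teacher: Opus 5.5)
Your proof is correct and follows essentially the same route as the paper. Both strip off the leading coefficient (and, in the hyperexponential case, the power of $t$), then use the logarithmic-derivative identity together with Lemma~\ref{LM:fact} to show that the remaining part is $t$-proper with a normal denominator. Both also obtain (iii) because $s$ differs from the logarithmic derivative $p^\prime/p$ only by an element of $F$. The only difference is cosmetic: you treat each monic irreducible factor separately and subtract $m_i\, t^\prime/t$ from each factor other than $t$, whereas the paper writes $p = p_d t^{e_0} r$ with $t \nmid r$ and sets $s = r^\prime/r - (d-e_0)\,t^\prime/t$ in one step.
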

 \begin{proof}
(i)  Let $q = p_d^{-1} p$. Then $q$ is monic. By \cite[Exercise 3.1]{BronsteinBook}, we have $p^\prime/p=p_d^\prime/p_d+q^\prime/q$. 
All irreducible factors of $q$ are normal by Lemma \ref{LM:fact} (iii).
Applying Lemma \ref{LM:fact} (i) and \cite[Exercise 3.1]{BronsteinBook} again gives $q^\prime/q \in S_t$.
 The conclusion holds by setting $s := q^\prime/q$.

(ii) There exists an integer $m \in \bN_0$ and a monic polynomial $r \in F[t]$ with $t \nmid r$ such that $p = p_d t^m r$.
All irreducible factors of $r$ are normal by Lemma \ref{LM:fact} (iv).
 The logarithmic derivative identity implies that 
\begin{equation} \label{EQ:logder1}
\frac{p^\prime}{p} = \frac{p_d^\prime}{p_d} + m \frac{t^\prime}{t} + \frac{r^\prime}{r}.
\end{equation} 
Set $s := r^\prime/r - (d-m)t^\prime/t$. 
Since $t^\prime/t \in F$, it follows from \cite[Exercise 3.1]{BronsteinBook} and 
Lemma \ref{LM:fact} (ii) that $s \in S_t$. Substituting $d t^\prime/t+s$ for $ m t^\prime/t + r^\prime/r$ in \eqref{EQ:logder1}
yields that  $p^\prime/p = p_d^\prime/p_d + d t^\prime/t + s$.

(iii) The $t$-simple element $s$ in (i) or (ii) is the difference of $p^\prime/p$ and an element in $F$.
Since all residues of a logarithmic derivative are integers, so are the residues of $s$.
\end{proof}

Recall that a differential field extension $F(u_1, \ldots, u_m)$ is called an {\em elementary extension} of $F$
if each $u_j$ is either algebraic or logarithmic or exponential over $F$ and 
the extension contains no new constants other than those in $F$.
We say that an element of $F$ has an {\em elementary integral}
over $F$ if it is a derivative in some elementary extension of $F$ (see \cite[Definition 5.1.4]{BronsteinBook}).  
\begin{lemma} \label{LM:elem}
Let $t$ be regular over $F$ and $f \in S_t^\times$. Assume that $C$ is algebraically closed, and that 
$t$ is either primitive or hyperexponential over $F$. If 
all residues of $f$ belong to $C$, then $f$ has an elementary integral over $F(t)$. 
\end{lemma}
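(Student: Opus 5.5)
Since $f$ is $t$-simple, it is $t$-proper and its denominator $b := \den(f)$ is normal. The plan is to produce an explicit logarithmic antiderivative in the style of Rothstein--Trager, that is, an element
\[ g=\sum_{i} c_i \frac{v_i^\prime}{v_i}, \qquad c_i\in C,\ v_i\in F(c_1,\ldots)[t] = F[t], \]
with $f-g\in F$. Once such $g$ is found, $f-g$ is shown to be zero, or at worst a derivative in an elementary extension.

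First, compute the Rothstein--Trager resultant $R(z)=\operatorname{res}_t(b,\, a - z b^\prime)\in F[z]$, where $a=\num(f)$. Since $b$ is normal, the standard residue criterion (\cite[Section 4.4 and Theorem 4.4.3]{BronsteinBook}) says that the roots of $R$ are exactly the residues of $f$ at the irreducible factors of $b$. By hypothesis these residues lie in $C$, so let $c_1,\ldots,c_k\in C$ be the distinct ones; algebraic closedness of $C$ is used here. Then set $v_i:=\gcd(b, a-c_ib^\prime)\in F[t]$. Normality of $b$ gives $b=\prod v_i$ up to a unit in $F$, with the $v_i$ pairwise coprime. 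A local computation at each irreducible $p\mid v_i$ then shows that $f-\sum_i c_i v_i^\prime/v_i$ has no pole at any $p\in N_t$. This works because $f$ and $c_iv_i^\prime/v_i$ have simple poles at $p$ with the same residue $c_i$, and $v_i'/v_i$ has no other poles in $N_t$.

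Next, use Lemma~\ref{LM:logder} to control the part at infinity (and at $t$ in the hyperexponential case). Each $v_i^\prime/v_i$ equals $s_i+\lambda_i$ with $s_i\in S_t$, where $\lambda_i=\lc(v_i)^\prime/\lc(v_i)$ in the primitive case and $\lambda_i=\lc(v_i)^\prime/\lc(v_i)+\deg(v_i)t^\prime/t$ in the hyperexponential case. Hence $f-\sum c_i s_i$ is $t$-simple, since it is a linear combination of $t$-simple elements over $C\subset F$. By the previous step it has no poles at normal polynomials, so it is a $t$-proper element with trivial denominator, and therefore it is $0$. Consequently
\[ f=\sum_i c_i\frac{v_i^\prime}{v_i}-\sum_i c_i\lambda_i . \]

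Finally, the first sum is the derivative of $\sum_i c_i\log v_i$, which lies in an elementary (logarithmic) extension of $F(t)$. Each $\lambda_i$ is a logarithmic derivative in $F$: in the hyperexponential case $\deg(v_i)\,t^\prime/t=(t^{\deg v_i})^\prime/t^{\deg v_i}$, so $\lambda_i=(\lc(v_i)t^{\deg v_i})^\prime/(\lc(v_i)t^{\deg v_i})$. So $\sum c_i\lambda_i$ also integrates to $\sum c_i\log(\cdot)$. Adjoining these logarithms must introduce no new constants. I would arrange this by adjoining them one at a time and invoking the standard fact that adjoining $\log u$ over a field with algebraically closed constants either keeps the constants or makes $\log u$ already equal to an element plus a constant (\cite[Theorem 5.1.1]{BronsteinBook}). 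In that case we simply skip the adjunction. This yields an elementary extension in which $f$ is a derivative. The main obstacle is the first step: verifying cleanly that normality of $b$ together with constant residues gives the exact splitting $b=\prod v_i$ and the pole cancellation. I would cite Bronstein's residue criterion rather than redo the resultant argument.
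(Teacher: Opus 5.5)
Your proof is correct, but it takes a different route from the paper's.

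\textbf{The paper's route.} It splits $\den(f)$ over $\overline{F}$ and invokes \cite[Lemma 3.1 (i)]{DGGL2023} to write $f=\sum_i c_i (t-\alpha_i)^\prime/(t-\alpha_i)+p$ with $c_i\in C$ and $p\in F[t]$. It then reads off $p=0$ in the primitive case and $p=c\,t^\prime/t$ in the hyperexponential case from $t$-properness and Lemma~\ref{LM:fact}.

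\textbf{Your route.} You stay over $F$. You group the irreducible factors of $b$ by their constant residues into $v_i=\gcd(b,a-c_ib^\prime)\in F[t]$ and cancel the simple poles. Lemma~\ref{LM:logder}, together with the fact that a $t$-simple element without poles is zero, then absorbs the polynomial part. The result is
\[ f=\sum_i c_i\,\frac{w_i^\prime}{w_i},\qquad w_i=\frac{v_i}{\lc(v_i)}\ \text{(primitive case)}, \qquad w_i=\frac{v_i}{\lc(v_i)\,t^{\deg v_i}}\ \text{(hyperexponential case)}, \]
with every $w_i\in F(t)$.

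\textbf{What each buys.} The paper's argument is shorter because it outsources the partial-fraction step to the cited lemma. However, its logarithms $\log(t-\alpha_i)$ need algebraic generators over $F$. Algebraic closedness of $C$ is what tacitly guarantees that adjoining them creates no new constants. Your version gives an antiderivative in a purely logarithmic extension of $F(t)$, and the constant question is settled by Theorem 5.1.1 of \cite{BronsteinBook} alone.

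\textbf{Corrections to your write-up.}
\begin{itemize}
\item Your argument never actually uses algebraic closedness of $C$. Contrary to your remark, the first step does not need it, because the residues are assumed to lie in $C$.
\item The Rothstein--Trager resultant is superfluous. Since $\gcd(b,b^\prime)=1$, an irreducible $p\mid b$ divides $a-cb^\prime$ if and only if $\residue_p(f)=c$, so grouping the factors by residue directly gives $b=\prod_i v_i$.
\item Poles should be tested at irreducible normal polynomials, not at arbitrary elements of $N_t$.
\item In the hyperexponential case, $\lambda_i$ is the logarithmic derivative of $\lc(v_i)t^{\deg v_i}\in F(t)$, not of an element of $F$. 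That is all you need.
\end{itemize}
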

\begin{proof}
Let $\alpha_1, \ldots, \alpha_k$ be the distinct roots of $\den(f)$ in the algebraic closure $\overline{F}$ of $F$.
Since all residues of $f$ in $\overline{F}$ are constants,  it follows from \cite[Lemma 3.1 (i)]{DGGL2023} that 
$$ f =  c_1 \frac{(t-\alpha_1)^\prime}{t-\alpha_1} + \cdots + c_k \frac{(t-\alpha_k)^\prime}{t-\alpha_k} + p$$
for some $c_1, \ldots, c_k \in C$ and $p \in F[t]$.
If $t$ is primitive over $F$, then $p=0$ by $f \in S_t$ and Lemma \ref{LM:fact}~(i).
If $t$ is hyperexponential over~$F$, then $p= c t^\prime/t$ for some $c \in C$ by $f \in S_t$ and Lemma \ref{LM:fact}~(ii).
Thus, $f$ has an elementary integral over $F(t)$. 
\end{proof}

\subsection{A dual presentation for complementary subspaces} \label{SUBSECT:dual}
Let $V$ be a  $C$-linear space. Every subspace of $V$ can be written as the intersection of kernels of some linear functions on $V$.
Such an intersection is understood as a dual presentation.

For a nonempty subset $S$ of $V$,
$\spa_C(S)$ stands for the $C$-subspace spanned by $S$.
The empty set spans the zero subspace $\{0\}$ by convention.
Let $\Theta$ be a basis of $V$. For $\theta \in \Theta$, we denote by $\theta^*$ the linear
function on~$V$ that maps $\theta$ to $1$ and other elements of $\Theta$ to $0$.

\begin{define} \label{DEF:echelon}
Let $V$ be a $C$-linear space, $\Theta$ be a basis of V, $U$ be a subspace of $V$ and $\bI \subset \bN_0$. 
A basis $\{\vu_i \mid i \in \bI\}$ of $U$ is called an {\em echelon basis of $U$ with respect to $\Theta$} if,
for every $i \in \bI$,  there
exists an element $\theta_i \in \Theta$ such that $\vu_i \notin \ker(\theta_i^*)$, and 
$\vu_j \in \ker(\theta_i^*)$ for all $j \in \bI$ with $j<i$. 
In this case, 
we also call $\{ \vu_i \mid i \in \bI \}$ an {\em echelon basis of $U$ with pivots $\{ \theta_i \mid i \in \bI \}$}.
\end{define}
For example, an echelon basis $\{\vu_i \mid i \in \bN_0\}$ of $U$  with pivots $\{ \theta_i \mid i \in \bN_0\}$ is of the 
\lq\lq upper triangular\rq\rq\ form

\[
\begin{array}{l@{\ }c@{\ }l}
\ \vdots & \\ 
\vu_i &=& c_i \theta_i + \displaystyle \sum_{\theta \notin \{\theta_i, \theta_{i+1}, \theta_{i+2}, \ldots \}} c_\theta \theta \\ 
\ \vdots&&\quad\ddots\phantom{\sum_{\{\theta_i, \theta_{i+1}, \theta_{i+2}, \ldots \}} c_\theta \theta}\\[1ex]
\vu_1 &=& \phantom{c_i \theta_i +}c_1 \theta_1 + \displaystyle \sum_{\theta \notin \{\theta_1, \theta_{2}, \theta_{3}, \ldots \}} c_\theta \theta \\[.53cm]
\vu_0 &=& \phantom{c_i \theta_i +c_1\theta_1+}c_0 \theta_0 + \displaystyle \sum_{\theta \notin \{\theta_0, \theta_{1}, \theta_{2}, \ldots\}} c_\theta \theta,  
\end{array}
\]
where $i \in \bN_0$, $c_i \in C^\times$, $\theta_i$ is the pivot of $u_i$, and $c_\theta \in C$, finitely many nonzero.
\begin{lemma} \label{LM:echelon}
With the notation introduced in Definition \ref{DEF:echelon}, we further let
$\{ \vu_i \mid i \in \bI\}$ be an echelon basis of $U$ with pivots $\{ \theta_i \mid i \in \bI\}$.
Then $V = U \oplus \left( \cap_{i \in \bI} \ker(\theta_i^*) \right).$ 
\end{lemma}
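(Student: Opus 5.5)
Set $W := \cap_{i \in \bI} \ker(\theta_i^*)$. The plan is to show separately that $U \cap W = \{0\}$ and that $U + W = V$. Both parts rest on the triangular structure in Definition~\ref{DEF:echelon}.

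Two preliminary observations will be used repeatedly. First, the pivots $\theta_i$ are pairwise distinct. Indeed, if $i<j$, then $\vu_i \in \ker(\theta_j^*)$ but $\vu_i \notin \ker(\theta_i^*)$, so $\theta_i \neq \theta_j$. Second, every element of $V$ is a finite $C$-combination of elements of $\Theta$. Hence for each $\vv \in V$ only finitely many $i$ satisfy $\theta_i^*(\vv) \neq 0$.

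\textbf{Trivial intersection.} Let $\vw \in U \cap W$ and suppose $\vw \neq 0$. Write $\vw = \sum_{i \in J} c_i \vu_i$ with $J \subset \bI$ finite and all $c_i \in C^\times$, and let $m = \max J$. Applying $\theta_m^*$ kills every $\vu_j$ with $j<m$, so $\theta_m^*(\vw) = c_m \theta_m^*(\vu_m) \neq 0$. This contradicts $\vw \in \ker(\theta_m^*)$, so $\vw = 0$.

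\textbf{Spanning.} Given $\vv \in V$, we must find $\vu \in U$ with $\vv - \vu \in W$, that is, $\theta_i^*(\vv - \vu) = 0$ for all $i \in \bI$. Let $J(\vv) = \{ i \in \bI \mid \theta_i^*(\vv) \neq 0 \}$, which is finite. We argue by induction on $\max J(\vv)$, taking $\max \emptyset := -1$, and if $J(\vv)=\emptyset$ we take $\vu = 0$. Otherwise let $m = \max J(\vv)$ and set
\[ \vv_1 := \vv - \frac{\theta_m^*(\vv)}{\theta_m^*(\vu_m)} \, \vu_m . \]
Then $\theta_m^*(\vv_1) = 0$.

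The step expected to be the main obstacle is showing that this subtraction does not create new nonzero coordinates at pivots $\theta_i$ with $i > m$. The definition of an echelon basis only controls $\theta_i^*(\vu_j)$ for $j<i$, which is exactly the case $i>m$, $j=m$. So for $i>m$ we get $\theta_i^*(\vu_m)=0$ and hence $\theta_i^*(\vv_1) = \theta_i^*(\vv) = 0$. Therefore $J(\vv_1) \subset \{ i \in \bI \mid i < m\}$, and $\max J(\vv_1) < m$.

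Since $\bI \subset \bN_0$, the quantity $\max J$ strictly decreases in $\bN_0 \cup \{-1\}$, so the process stops after finitely many steps. It ends with $\vv - \vu \in W$ for some $\vu$ that is a finite combination of the $\vu_i$. Hence $V = U + W$, and combined with the trivial intersection this gives $V = U \oplus W$.
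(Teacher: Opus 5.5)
Your proof is correct, but it takes a different route from the paper's.

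\textbf{The paper's route.} It sets $\widetilde{\Theta} := \Theta \setminus \{\theta_i \mid i \in \bI\}$ and notes that $\cap_{i\in\bI}\ker(\theta_i^*) = \spa_C(\widetilde{\Theta})$. It then concludes from the observation, stated without proof, that $\{\vu_i \mid i \in \bI\} \cup \widetilde{\Theta}$ is a basis of $V$. Checking that observation needs the same triangularity you use: independence via the largest index, and spanning by expressing each pivot $\theta_i$ inductively through $\vu_i$, earlier pivots and elements of $\widetilde{\Theta}$.

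\textbf{What each route buys.} The paper's version gives an explicit basis $\widetilde{\Theta}$ of the complement: it is spanned by the non-pivot elements of $\Theta$. This is the form in which complements are described later, for example in Example~\ref{EX:dual1}. Your version verifies $U\cap W=\{0\}$ and $U+W=V$ directly. It also makes explicit what the paper's one-line proof leaves implicit: that the pivots are distinct, that $J(\vv)$ is finite, and that the descent terminates because $\bI \subset \bN_0$. Your spanning step is exactly the elimination the paper isolates afterwards as Lemma~\ref{LM:elim}, so your proof already yields the projection onto $W$ along $U$ constructively.
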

\begin{proof} Let $\widetilde{\Theta} = \Theta \setminus \{ \theta_i \mid i \in \bI\}$. 
Then $\cap_{i \in \bI} \ker(\theta_i^*) = \spa_C \widetilde{\Theta}$. The lemma follows from the
observation that $\{ \vu_i \mid i \in \bI\} \cup \widetilde{\Theta}$ is a basis of $V$.
\end{proof}

When developing complete reductions, we face situations, in which a subspace $U$ of $V$ is the image of some linear operator,
and it is difficult to   directly construct its echelon basis. To construct a complement of $U$, 
we introduce an auxiliary subspace $A$ such that $V=U + A$ and~$U \cap A$ has an echelon basis.
Then a complement of $U$ can be obtained from the next lemma. 
\begin{lemma} \label{LM:modulo}
With the notation introduced in Definition \ref{DEF:echelon}, we further let $A$ be a subspace of~$V$ such that $V=U+ A$.
If $U \cap A$ has an echelon basis with pivots  $\{\theta_i \mid i \in \bI\}$, then 
the intersection $$A \cap \big( \mathop{\cap}\limits_{i \in \bI} \ker(\theta_i^*) \big)$$
is a complement of $U$ in $V$. 
\end{lemma}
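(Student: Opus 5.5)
Set $K := \cap_{i \in \bI} \ker(\theta_i^*)$ and $W := A \cap K$. We must show $U \cap W = \{0\}$ and $V = U + W$. The main tool is Lemma \ref{LM:echelon}, applied with the subspace $U \cap A$ in place of $U$. Since $U \cap A$ has an echelon basis with pivots $\{\theta_i \mid i \in \bI\}$, that lemma gives
\[ V = (U \cap A) \oplus K. \]

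First, the trivial intersection. Let $w \in U \cap W$. Then $w \in U \cap A$ and $w \in K$. By the direct sum above, $(U\cap A) \cap K = \{0\}$, so $w = 0$.

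Next, the spanning property $V = U + W$. Take $v \in V$. From $V = U + A$, write $v = u + a$ with $u \in U$ and $a \in A$. It suffices to show that $a \in U + W$. By the direct sum above, decompose $a = b + k$ with $b \in U \cap A$ and $k \in K$. Then $k = a - b$ lies in $A$, since both $a$ and $b$ do. Hence $k \in A \cap K = W$. Therefore
\[ v = (u + b) + k, \qquad u + b \in U, \quad k \in W, \]
which proves $V = U \oplus W$.

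There is no serious obstacle. The only point needing care is that Lemma \ref{LM:echelon} is invoked for the subspace $U\cap A$ with respect to the same basis $\Theta$ of $V$, which its hypotheses permit. The key observation is that the $K$-component of an element of $A$ automatically lies in $A$, because the other component also belongs to $A$.
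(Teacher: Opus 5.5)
Your proof is correct and follows essentially the same route as the paper. Both apply Lemma~\ref{LM:echelon} to $U\cap A$ to obtain $V=(U\cap A)\oplus K$, observe that the $K$-component of any element of $A$ again lies in $A$, and conclude using $(U\cap A)\cap K=\{0\}$. The paper phrases that middle observation as the modular law $A=(U\cap A)+(K\cap A)$, whereas you verify it element by element.
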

\begin{proof} 
Let $W=\cap_{i \in \bI} \ker(\theta_i^*)$. Then $V= U \cap A+W$ by Lemma \ref{LM:echelon}.  
Since $U \cap A \subset A$, the modular law of subspaces gives $A = U \cap A + W \cap A$. Thus,
$$V = U + A = U  + U  \cap A + W \cap A = U + W \cap A.$$
The conclusion follows immediately from $U \cap A \cap W = \{0\}$.
\end{proof} 

In this paper,
reductions are typically carried out inside, or modulo, the image of a linear operator.
This motivates us to define
the notion of echelon sequences so as to find pre-images and images simultaneously during reduction.
\begin{define} \label{DEF:eseq}
Let $V$ be a $C$-linear space, $\cL$ be a linear operator on $V$, $\Theta$ be a basis of $V$, and~$U$ be a subspace contained in $\im(\cL)$.
Assume that $\{\cL(\vv_i) \mid i \in \bI\}$ is an echelon basis of $U$ with pivots $\{\theta_i \mid i \in \bI \}$, 
Then we call
$\{ (\vv_i, \cL(\vv_i), \theta_i) \mid i \in \bI \}$ 
an {\em echelon sequence of $U$ with respect to $(V, \cL)$ and $\Theta$}.
\end{define}
\begin{example} \label{EX:dual1}
Let $V=C[x]$ with a $C$-basis $\Theta = \{x^i \mid i \in \bN_0\},$ and $\cL$ be the linear operator that maps $p \in V$ to $(x^2+1)p$.
Then $\im(\cL)$ has an echelon sequence 
$\left\{ \left(x^i, x^{i+2}+x^i,  x^{i+2} \right) \right\}_{i \in \bN_0}.$
By Lemma \ref{LM:echelon}, a complement of $\im(\cL)$ in $V$ is  $\cap_{i \in \bN_0} \ker(\theta_i^*)$ with $\theta_i=x^{i+2}$. This complement is equal to $C[x]_{<2}$.
\end{example} 

In the final part of this section, we make a preparation for computing R-pairs (see  Definition \ref{DEF:Rpair}). 
First, we recall the definition of effective bases in \cite[Section 2.2]{DGLL2025}.
\begin{define} \label{DEF:eff}
Let $V$ be a $C$-linear space. A basis $\Theta$ of $V$ is said to be {\em effective} if
two algorithms are available: one takes every  $\vv \in V^\times$ and computes an element 
$\theta \in \Theta$ such that $\theta^*(\vv) \neq 0$, and the other evaluates $\theta^*(\vv)$ for all 
$\vv \in V$ and $\theta \in \Theta$.
\end{define}
The transcendental Liouvillian extension $C(t_1, \ldots, t_n)$ defined in Definition \ref{DEF:liouvillian}
admits an effective basis via successive irreducible partial fraction decompositions; details can be
found in Algorithms 2.4, 2.6 and Remark 2.7 in \cite[Section 2.2]{DGLL2025}.
\begin{lemma} \label{LM:elim}
With the notation introduced in Definition \ref{DEF:eseq}, we further assume that $\Theta$ is effective. 
If $\{ (\vv_i, \cL(\vv_i), \theta_i) \mid i \in \bI\}$ is an echelon sequence of $U$ with respect to $(V, \cL)$ and $\Theta$,
then, for every $\vx \in V$,  one can compute an element $\vv \in V$ and its image $\cL(\vv) \in U$ (as a byproduct) such that
$\vx-\cL(\vv) \in \cap_{i \in \bI} \ker \left(\theta_i^*\right).$
\end{lemma}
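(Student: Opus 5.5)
The idea is to run Gaussian elimination on $\vx$ against the echelon basis $\{\cL(\vv_i) \mid i \in \bI\}$, always eliminating the pivot of \emph{largest} index first. The upper triangular shape of an echelon basis then guarantees termination even when $\bI$ is infinite.

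I will make explicit what ``given'' means for the echelon sequence. For each $\theta \in \Theta$, one can decide whether $\theta = \theta_i$ for some $i \in \bI$, and if so compute $i$, $\vv_i$ and $\cL(\vv_i)$. This is how the lemma will be used, so I treat it as part of the hypothesis.

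\textbf{Step 1: computing the support.} For $\vy \in V$, set $\operatorname{supp}(\vy) = \{\theta \in \Theta \mid \theta^*(\vy) \ne 0\}$, which is finite. Since $\Theta$ is effective, I compute it as follows: while $\vy \neq 0$, find $\theta$ with $\theta^*(\vy) \ne 0$, record it, and replace $\vy$ by $\vy - \theta^*(\vy)\theta$. Each round deletes one element of the support and adds none, so the loop terminates. Intersecting with the pivot set gives
\[ J(\vy) := \{ i \in \bI \mid \theta_i \in \operatorname{supp}(\vy) \}. \]

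\textbf{Step 2: the elimination loop.} Start with $\vy := \vx$, $\vv := 0$ and $\cL(\vv) := 0$. While $J(\vy) \ne \emptyset$:
\begin{itemize}
\item let $i = \max J(\vy)$, which exists because $J(\vy)$ is finite;
\item set $c = \theta_i^*(\vy)/\theta_i^*(\cL(\vv_i))$, where the denominator is nonzero by Definition \ref{DEF:echelon};
\item replace $\vy$ by $\vy - c\,\cL(\vv_i)$, $\vv$ by $\vv + c\,\vv_i$, and $\cL(\vv)$ by $\cL(\vv) + c\,\cL(\vv_i)$.
\end{itemize}
The invariant $\vy = \vx - \cL(\vv)$ holds throughout. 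Since every $\cL(\vv_i)$ lies in $U$, we also have $\cL(\vv) \in U$.

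\textbf{Step 3: termination, the key point.} Definition \ref{DEF:echelon} states that $\cL(\vv_j) \in \ker(\theta_k^*)$ whenever $j<k$. In other words, $\cL(\vv_i)$ has zero coordinate at every pivot $\theta_k$ with $k>i$.

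Consider one step of the loop. Before it, $\vy$ has zero coordinate at every $\theta_k$ with $k>i$, by the maximality of $i$. After subtracting $c\,\cL(\vv_i)$:
\begin{itemize}
\item these coordinates stay zero, because $\cL(\vv_i)$ is zero there;
\item the $\theta_i$-coordinate becomes zero, by the choice of $c$.
\end{itemize}
Hence $\max J(\vy)$ strictly decreases. Since $\bI \subset \bN_0$, the loop stops after at most $\max J(\vx)+1$ steps. At that point $J(\vy) = \emptyset$, which says exactly that $\vx - \cL(\vv) \in \cap_{i\in\bI}\ker(\theta_i^*)$. The image $\cL(\vv)$ has been accumulated along the way without applying $\cL$ again.

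\textbf{Expected obstacle.} The only delicate point is termination in the infinite case. One must eliminate from the top pivot downward. Eliminating from the bottom up could reintroduce higher pivots indefinitely, and the finiteness of supports only bounds the process when the order is top-down.
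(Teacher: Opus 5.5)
Your proof is correct and follows the paper's argument. You expand $x$ in the effective basis and repeatedly eliminate the pivot of largest index using the corresponding $\cL(v_j)$, accumulating $v$ and $\cL(v)$ along the way. Termination follows because the largest pivot index still present strictly decreases. Your explicit assumption that one can recognize pivots and retrieve the matching $(v_i, \cL(v_i))$ is also needed by the paper, which leaves it implicit when it speaks of ``the finite set of all pivots appearing'' in the expansion of $x$.
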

\begin{proof}
Since $\Theta$ is effective, one can express $\vx$ as a linear combination of elements in $\Theta$. Let $\Theta_{\vx}$ be the finite set of all pivots appearing in this linear combination. If $\Theta_{\vx} = \emptyset$, then letting $v = 0$ and thus $\cL(v) = 0$ concludes the proof.

Assume that $\Theta_{\vx} \neq \emptyset$ and consider the maximal index  $j$ with $\theta_j \in \Theta_{\vx}$. Set $c_{\vx} :=\theta_j^*(\vx)$ and $c_j := \theta_j^*(\cL(\vv_j))$, which is nonzero. Define $\vy_j :=  c_{\vx} c_j^{-1} \vv_j$. Then $\cL(\vy_j)=c_{\vx} c_j^{-1} \cL(\vv_j)\in U$, and, by the maximality of $j$, we have $\vx - \cL(\vy_j) \in \ker(\theta_i^*)$ for all $i \ge j$.

Now replace $\vx$ by $\vx - \cL(\vy_j)$ and repeat the argument. 
Then the pivot set of the new $\vx$ does not contain any element $\theta_k$ with $k \ge j$.
So after finitely many iterations we obtain a finite sum $v = \sum_l y_l$ and the corresponding $\cL(v)$ with the required property.
\end{proof}

\section{Normalization and companion operators} \label{SECT:risch}

Throughout this section, 
we let $(F, \, ^\prime)$ be a differential field,
$C$ be the subfield of constants
in~$F$, and $t$ be a monomial over $F$.

We initially sought to construct a complete reduction for $(F(t), \,^\prime)$ under the assumption that there was a complete reduction
for $(F, \, ^\prime)$.  This assumption suffices when $t$ is regular and primitive over $F$. 
However, it becomes insufficient when $t$ is hyperexponential over $F$, as illustrated by the following example.
\begin{example} \label{EX:risch}
Let $t$ be both regular and hyperexponential. Set $\alpha = t^\prime/t$, which belongs to $F$.
For an element $\beta \in F$, we determine whether $\beta t \in F(t)^\prime$.
By an order argument, $\beta t \in F(t)^\prime$ if and only if $\beta t \in F[t, t^{-1}]^\prime$,
which, by Lemma \ref{LM:fact} (ii), is equivalent to $\beta t = (\gamma t)'$ for some $\gamma \in F.$
In other words, $\gamma$ is a solution of the Risch differential equation
$y^\prime + \alpha y = \beta$, or equivalently,
$\beta \in \im(\cR_\alpha)$, where $\cR_\alpha$ is given in Definition \ref{DEF:risch}.
Therefore, complete reductions for $(F(t), \, ^\prime)$ are inherently connected to Risch differential equations.
\end{example}

We are going to construct a complete reduction $\Phi_h$ for $(F(t), \, \cR_h)$ for all $h \in F(t)$
under the assumption that a complete reduction for $(F, \, \cR_\alpha)$ is available  for every element $\alpha \in F$,
where~$\cR_\alpha$ stands for the Risch operator on $F$ associated to $\alpha$.

\begin{example} \label{EX:const}
Let $F=C$ and $c \in C$. The Risch operator $\cR_c:C \rightarrow C$ is given by $y \mapsto y^\prime + c y$
for all $y \in C$. Then $\cR_c(y)=cy$. It follows that  $\im(\cR_c) = C$ if $c \neq 0$, and $\im(\cR_c)=\{0\}$ if $c=0$.
The complete reduction for $(C, \, \cR_c)$ is the zero map if $c \neq 0$, and is the identity map if $c=0$.
\end{example}

Similar to the way of solving Risch equations in \cite[Theorem 6.1.1]{BronsteinBook}, 
it suffices to focus on Risch operators associated to normalized elements defined below (see also \cite[Section 3.2]{CDGL2025}).
\begin{define} \label{DEF:normalized}
An element $h$ of $F(t)$ is said to be {\em $t$-normalized} if, for all $i \in \bZ$,
$$\gcd \left( \num(h)-i \den(h)^\prime, \den(h) \right)=1$$
\end{define}
The gcd-condition is equivalent to the requirement that no nonzero residue of $h$ is 
an integer by \cite[Theorem 4.4.3]{BronsteinBook}.
All elements of $F$ are $t$-normalized.

This normalization will simplify our construction in three aspects:
\begin{itemize}
\item[(i)] Reduce the problem of constructing complete reductions for $(F(t), \, \cR_h)$
 with $h \in F(t)$ to that for $(F(t), \, \cR_h)$  with $t$-normalized $h\in F(t)$ (see Proposition \ref{PROP:ks}). 
\item [(ii)] Further transform the problem in $F(t)$  into an equivalent one on  $F \langle t \rangle$ (see Proposition \ref{PROP:red}).
\item[(iii)] The Risch operator associated to a $t$-normalized element $h$ is injective if $h \in F(t) \setminus F$, $t$ is regular, 
and $t$ is either primitive or hyperexponential over $F$
(see Lemma \ref{LM:inj}).
\end{itemize}

An element of $F(t)$ is said to be {\em weakly $t$-normalized} if it does not have any positive integer residue
(see \cite{Davenport1986} and \cite[Definition 6.1.1]{BronsteinBook}). 
Note that (i) and (ii) remain valid for weakly $t$-normalized elements, while (iii) relies on
the assumption that $h$ has no residue in $\bZ^\times$.

For every element $h \in F(t)$, there exists a pair $(\bfxi, \bfeta) \in F(t) \times F(t)^\times$ such that
$\bfxi$ is $t$-normalized, and $h = \bfxi + \bfeta^\prime/\bfeta$. We call $(\bfxi, \bfeta)$ a {\em normal form} of $h$. 
Normal forms are not unique. 
Among all such normal forms, the canonical one can be computed by
Algorithm {\sc GKS} in \cite[Section 3.2]{CDGL2025}.
\begin{prop} \label{PROP:ks}
Let $h \in F(t)$ with a normal form $(\bfxi, \bfeta)$.
Then the following assertions hold.
\begin{itemize}
\item[(i)] $\cR_h= \bfeta^{-1} \circ \cR_\bfxi \circ \bfeta$, where $\bfeta$ is understood as the map with $f \mapsto \bfeta f$ for all $f \in F(t)$.
\item[(ii)] Assume further that $\Phi_\bfxi$ is a complete reduction for $(F(t), \, \cR_\bfxi)$.
Then $\Phi_h :=\bfeta^{-1} \circ \Phi_\bfxi  \circ \bfeta$ is a complete reduction for $(F(t), \, \cR_h)$.
\item[(iii)]
For all $f \in F(t)$,
$\left(\bfeta^{-1} g, \,  \bfeta^{-1} r \right)$ is an R-pair of $f$ with respect to
 $\Phi_h$ if $\left(g, \,  r \right)$ is an R-pair of $\bfeta f$ with respect to $\Phi_\bfxi$.
\end{itemize}
\end{prop}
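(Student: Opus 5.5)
The plan is to verify (i) by a direct computation with the Leibniz rule, and then to derive (ii) and (iii) formally from (i), using only the characterization of complete reductions after Definition \ref{DEF:cr}.

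For (i), fix $f \in F(t)$ and compute
\[ \bfeta^{-1}\cR_\bfxi(\bfeta f) = \bfeta^{-1}\left((\bfeta f)^\prime + \bfxi \bfeta f\right) = f^\prime + \left(\frac{\bfeta^\prime}{\bfeta} + \bfxi\right) f. \]
Since $h = \bfxi + \bfeta^\prime/\bfeta$, the right-hand side is $\cR_h(f)$. Note that $\bfeta \neq 0$, so multiplication by $\bfeta$ is an invertible $C$-linear map on $F(t)$ with inverse multiplication by $\bfeta^{-1}$.

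For (ii), write $M$ for multiplication by $\bfeta$, which is a $C$-linear automorphism of $F(t)$. By (i), $\cR_h = M^{-1}\cR_\bfxi M$, so $\im(\cR_h) = M^{-1}(\im(\cR_\bfxi))$. The operator $\Phi_h = M^{-1}\Phi_\bfxi M$ is $C$-linear. Two conditions must be checked.
\begin{itemize}
\item For every $v$, we have $v - \Phi_h(v) = M^{-1}\bigl(Mv - \Phi_\bfxi(Mv)\bigr)$. The inner term lies in $\im(\cR_\bfxi)$ because $\Phi_\bfxi$ is a complete reduction, so $v - \Phi_h(v)$ lies in $M^{-1}(\im(\cR_\bfxi)) = \im(\cR_h)$.
\item We have $\ker(\Phi_h) = M^{-1}(\ker\Phi_\bfxi)$ because $M$ is bijective. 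This equals $M^{-1}(\im(\cR_\bfxi)) = \im(\cR_h)$.
\end{itemize}
By Definitions \ref{DEF:crK} and \ref{DEF:cr}, $\Phi_h$ is therefore a complete reduction for $(F(t), \cR_h)$.

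For (iii), suppose that $\bfeta f = \cR_\bfxi(g) + r$ with $r = \Phi_\bfxi(\bfeta f)$. Multiplying by $\bfeta^{-1}$ gives
\[ f = \bfeta^{-1}\cR_\bfxi\bigl(\bfeta(\bfeta^{-1}g)\bigr) + \bfeta^{-1}r = \cR_h(\bfeta^{-1}g) + \bfeta^{-1}r, \]
where the last equality uses (i). Moreover, $\Phi_h(f) = \bfeta^{-1}\Phi_\bfxi(\bfeta f) = \bfeta^{-1}r$. Hence $(\bfeta^{-1}g, \bfeta^{-1}r)$ is an R-pair of $f$ with respect to $\Phi_h$, in the sense of Definition \ref{DEF:Rpair}.

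No step presents a real obstacle. The only point needing care is to record that conjugation by a linear bijection transports both images and kernels, which is what makes the two conditions in (ii) immediate.
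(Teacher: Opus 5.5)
Your proposal is correct and follows essentially the paper's argument: (i) by a direct Leibniz computation, and (ii), (iii) by transporting images and kernels through the $C$-linear bijection $f \mapsto \bfeta f$. The only differences are cosmetic. In (ii) you verify the two defining conditions of a complete reduction directly, whereas the paper uses idempotency of $\Phi_h$ together with $\ker(\Phi_h)=\im(\cR_h)$. In (iii) you read off $\Phi_h(f)=\bfeta^{-1}r$ from the definition of $\Phi_h$, whereas the paper uses $\bfeta^{-1}r\in\im(\Phi_h)$ and the direct sum.
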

\begin{proof}
(i) For all $f \in F(t)$, $\bfeta^{-1} \circ \cR_\bfxi \circ \bfeta (f) = \bfeta^{-1} \left( (\bfeta f)^\prime  + \bfxi (\bfeta f) \right)$.
Then $\bfeta^{-1} \circ \cR_\bfxi \circ \bfeta (f) =  \cR_h(f)$ by $h=\bfxi+\bfeta^\prime/\bfeta$ and a straightforward calculation.

(ii) Since $\Phi_\bfxi$ is an idempotent, so is $\Phi_h$. Hence, $\im(\Phi_h)$ is a complement of its kernel in $F(t)$.
It suffices  to verify that $\ker(\Phi_h)=\im(\cR_h)$.  By (i), 
$\im(\cR_h) = \bfeta^{-1} \cdot \im(\cR_\bfxi)$.
By $\Phi_h =\bfeta^{-1} \circ \Phi_\bfxi  \circ \bfeta$, we have $\ker(\Phi_h) = \bfeta^{-1} \cdot \ker(\Phi_{\bfxi}).$
Thus, $\ker(\Phi_h)=\im(\cR_h)$
by $\ker(\Phi_\bfxi)=\im(\cR_\bfxi)$.

(iii) Let $(g,r)$ be an R-pair of $\bfeta f$ with respect to $\Phi_\bfxi$.
Then $\bfeta f = \cR_\bfxi(g) + r$ and $r \in \im(\Phi_\bfxi)$. It follows that
$f =  \cR_h (\bfeta^{-1} g) + \bfeta^{-1} r,$
which, together with $\bfeta^{-1} r \in \im(\Phi_h)$, implies that $\left(\bfeta^{-1} g, \,  \bfeta^{-1} r \right)$ is an R-pair of $f$
with respect to  $\Phi_h$.
\end{proof}
In the rest of this section, we let $h \in F(t)$ be $t$-normalized, $a=\num(h)$ and $b = \den(h)$. Set
\begin{equation} \label{EQ:subspaceh}
S_{t,h} := \{ s \in S_t \mid \gcd(\den(s),b)=1 \},
\end{equation}
which is also a $C$-subspace in $F(t)$.
For all $f \in F(t)$, Algorithm {\sc GKSR} in  \cite[Section 3.4]{CDGL2025} computes a triple $(g,r,s) \in F(t) \times F\langle  t \rangle \times S_{t,h}$
such that
\begin{equation} \label{EQ:pre2}
f =  \cR_h(g) + \frac{r}{b} + s.
\end{equation}
Moreover, $s$ in the above equality is unique. Therefore, 
\begin{equation} \label{EQ:pre1}
F(t) = \left(\im(\cR_h) + b^{-1} \cdot F\langle  t \rangle  \right) \oplus S_{t,h}.
\end{equation}
This direct sum generalizes the splitting 
$F(t) = \left(F(t)^\prime+ F\langle  t \rangle \right) \oplus S_t$ induced by Algorithm {\sc HermiteReduce} in \cite[Section 5.6]{BronsteinBook}.
It remains to find a $C$-subspace $W \subset F\langle  t \rangle$ with
$$\im(\cR_h) + \left( b^{-1} \cdot F\langle  t \rangle \right)  = \im(\cR_h) \oplus \left( b^{-1} \cdot W \right).$$

In the special case $F(t)=C(x)$, such a subspace can be constructed via 
a polynomial reduction map in \cite[Section 4.1]{BCCLX2013}.
This motivates us to define the notion of companion operators.
\begin{define} \label{DEF:comp}
The map
\[
 \begin{array}{cccc}
\cP_h: & F\langle  t \rangle & \rightarrow & F\langle  t \rangle \\
       &   r                & \mapsto     & b r^\prime + a r.
\end{array}
\]
is called the {\em companion operator} of $\cR_h$.
\end{define}
The following proposition allows us to reduce the problem of constructing a complete reduction for $(F(t), \cR_h)$
to that for $(F\langle  t \rangle, \cP_h)$. 
\begin{prop} \label{PROP:red}
If $W$ is a complement of $\im(\cP_h)$ in $F\langle  t \rangle$, then
\begin{equation} \label{EQ:dsum}
F(t) = \im(\cR_h) \oplus \left(b^{-1} \cdot W \oplus S_{t,h}\right).
\end{equation}
\end{prop}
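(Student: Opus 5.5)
The plan is to combine the direct sum \eqref{EQ:pre1} with a single identity linking $\cR_h$ to the companion operator:
\[
\cR_h\!\left(\frac{r}{b}\right)=\frac{b r^\prime - b^\prime r + a r}{b^2},
\]
which is not directly $b^{-1}\cP_h(r)$. Rather than use this, I would work with the substitution $y = r$ itself. Since $h=a/b$, for every $r\in F\langle t\rangle$ we have
\[
\cR_h(r) = r^\prime + \frac{a}{b} r = b^{-1}\left(b r^\prime + a r\right) = b^{-1}\cP_h(r).
\]
This gives $b^{-1}\cdot \im(\cP_h) \subset \im(\cR_h)$, because $F\langle t\rangle$ is a differential subring and so $\cP_h$ maps it into itself.

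The first step shows that $\im(\cR_h) + b^{-1}\cdot F\langle t\rangle = \im(\cR_h) + b^{-1}\cdot W$. Write $F\langle t\rangle = \im(\cP_h)\oplus W$. Then any $r\in F\langle t\rangle$ splits as $r = \cP_h(u)+w$, so $b^{-1}r = \cR_h(u) + b^{-1}w$. The reverse inclusion is trivial. Feeding this into \eqref{EQ:pre1} yields
\[
F(t) = \left(\im(\cR_h) + b^{-1}\cdot W\right)\oplus S_{t,h}.
\]

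The second step, which I expect to be the main obstacle, is to prove $\im(\cR_h)\cap b^{-1}\cdot W=\{0\}$. Suppose $\cR_h(g) = b^{-1}w$ with $g\in F(t)$ and $w\in W$. The aim is to show $g\in F\langle t\rangle$, since then $w = b\cR_h(g) = \cP_h(g)\in \im(\cP_h)\cap W = \{0\}$. This is a pole-order argument using that $h$ is $t$-normalized. Take a normal irreducible $p\in N_t$ with $\nu_p(g) = -m<0$, and compare orders at $p$ in $g^\prime + hg = w/b$.

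If $p\nmid b$, then $\nu_p(g^\prime)=-m-1$ because $p$ is normal, while $\nu_p(hg)\ge -m$ and $\nu_p(w/b)\ge 0$ (the denominator of $w$ is special and hence coprime to $p$). This is a contradiction. If $p\mid b$, the equation implies $\nu_p(h)=-1$, since $b$ divides a normal denominator's squarefree part in the relevant sense. Here I would first argue that $\nu_p(h)\le -2$ forces $\nu_p(g^\prime+hg) = \nu_p(h)-m < -1-\ldots$, contradicting $\nu_p(w/b)\ge \nu_p(1/b)=\nu_p(h)$ only when $m>0$. In the case $\nu_p(h)=-1$, the leading coefficients of the order $-m-1$ terms cancel only if $-m$ plus the residue of $h$ at $p$ vanishes. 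That would make $m$ a positive integer residue, which is excluded by $t$-normalization.

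Once every non-special irreducible factor is excluded from $\den(g)$, we get $g\in F\langle t\rangle$, and this finishes the second step. Both steps together give \eqref{EQ:dsum}, where the sum $b^{-1}\cdot W\oplus S_{t,h}$ is direct because $b^{-1}W\subset b^{-1}F\langle t\rangle$ and \eqref{EQ:pre1} is direct.
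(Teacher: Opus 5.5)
Your proof is correct and has the same structure as the paper's: combine \eqref{EQ:pre1} with $b^{-1}\cdot\im(\cP_h)\subset\im(\cR_h)$, then obtain $\im(\cR_h)\cap b^{-1}\cdot W=\{0\}$ by showing that any $g\in F(t)$ with $bg'+ag=w\in W$ must lie in $F\langle t\rangle$. The only difference is that the paper cites \cite[Theorem 6.1.2]{BronsteinBook} for this last fact, while your local order and residue argument at normal irreducible $p$ proves it directly. One small correction: your aside that $b$ ``divides a normal denominator's squarefree part'' is not a valid reason for $\nu_p(h)=-1$, but your subsequent treatment of the case $\nu_p(h)\le -2$ supplies the correct argument.
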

\begin{proof}
By \eqref{EQ:pre1},
$F(t) = \left(\im(\cR_h) +  b^{-1} \cdot \im(\cP_h) + b^{-1} \cdot W \right) \oplus S_{t,h}.$
Then
$$F(t) = \left(\im(\cR_h) +  b^{-1} \cdot W \right) \oplus S_{t,h}$$
by $b^{-1} \cdot \im(\cP_h) \subset \im(\cR_h)$.
It remains to show that $\im(\cR_h) \cap \left(b^{-1} \cdot W \right) = \{0\}$.
Assume that $v \in \im(\cR_h) \cap \left(b^{-1} \cdot W \right)$. Then $v = w/b \in \im(\cR_h)$ for some $w \in W$, and thus
there exists $g \in F(t)$ such that $b g^\prime + a g = w$.

Since $h$ is $t$-normalized,  we have $g \in F\langle  t \rangle$ by either \cite[Theorem 6.1.2]{BronsteinBook}
or \cite[Theorem 3.12 (iii)]{CDGL2025}.  Hence, $w \in \im(\cP_h)$.
By $\im(\cP_h) \cap W = \{0\}$, $w$ is equal to $0$, and so is $v$.
\end{proof}
\begin{remark} \label{RE:red}
With the notation from the above proposition, we further let
$\Phi_{h, W}$ be the projection from $F(t)$ to $(b^{-1} \cdot W \oplus S_{t,h})$ with respect to \eqref{EQ:dsum},
and $\Psi_{h,W}$ be the projection from $F\langle t \rangle$ to $W$
with respect to $F\langle t \rangle = \im(\cP_h) \oplus W$. Then $\Phi_{h, W}$ and $\Psi_{h,W}$
are complete reductions for $(F(t), \, \cR_h)$ and $(F\langle t \rangle, \cP_h)$, respectively.

Assume that one can compute R-pairs with respect to $\Psi_{h,W}$. Then R-pairs with respect to~$\Phi_{h,W}$ can be derived
as follows. Let  $f \in F(t)$.
\begin{enumerate}
\item  Find $(g, r, s) \in F(t) \times F\langle t \rangle \times S_{t,h}$ satisfying \eqref{EQ:pre2} by Algorithm {\sc GKSR} in  \cite[Section 3.4]{CDGL2025}. 
\item Compute an R-pair $(u,v) \in F\langle t \rangle \times W$ of $r$ with respect to $\Psi_{h,W}$.
\end{enumerate}
Then the pair $(g+u, \, b^{-1} v + s)$ is an R-pair of $f$  with respect to $\Phi_{h, W}$.
 This follows from  the fact $b^{-1} v + s \in \left(b^{-1} \cdot  W \right)+ S_{t,h}$ 
and the calculation:
\begin{align*}
f & = \cR_h(g) + b^{-1} r + s &  \text{(by \eqref{EQ:pre2})}\\
  & = \cR_h(g) + b^{-1} \left(\cP_h(u) + v \right) + s &  \text{(by $r = \cP_h(u)+v$)} \\
  & = \cR_h(g) + \cR_h(u) + b^{-1} v + s & \text{(by $\cR_h(u) = b^{-1} \cP_h(u)$)} \\ 
  & = \cR_h(g+u) + b^{-1} v + s. 
\end{align*}
\end{remark}
\begin{remark} \label{RE:der}
If $h=0$, 
then $b=1$, and \eqref{EQ:dsum} in Proposition \ref{PROP:red} simplifies to
$$F(t) = F(t)^\prime \oplus (W \oplus S_t),$$
which leads to a complete reduction for $(F(t), \, ^\prime)$.
\end{remark}

\section{ Induction hypothesis and notational convention} \label{SECT:ind}

With the problem of constructing complete reductions now reduced to the setting of differential subrings, 
we proceed to formalize the induction framework that will drive our construction.

Let $n \in \bN$ and $C(t_1, \ldots, t_{n-1}, t_n)$ be a transcendental Liouvillian extension of $C$ throughout this section.
For inductive purposes, we set  $F := C(t_1, \ldots, t_{n-1})$ and $t :=t_n$.
By \cite[equation (5.1)]{BronsteinBook}, 
the differential ring $F\langle t\rangle$ takes the form $F[t]$ when $t$ is primitive, and $F[t,t^{-1}]$ when $t$ is hyperexponential. This dichotomy will structure our inductive arguments in Sections \ref{SECT:prim} and \ref{SECT:hyperexp}.

With the field setting fully specified, we now state the induction hypothesis that will be used throughout the remainder of our inductive construction.
\begin{hyp} \label{HYP:ind}
Assume that there is a map that assigns to each element $\alpha$ of $F$ a
complete reduction $\Phi_\alpha$ for $(F, \cR_\alpha)$, and that there is an algorithm for
computing an R-pair for every element of $F$ with respect to $\Phi_\alpha$.
\end{hyp}
By the hypothesis, $\im(\cR_\alpha) \cap \im(\Phi_\alpha) = \{0\}$ for all $\alpha \in F$, and every element $\beta \in F$ has an
R-pair $(\gamma, \Phi_\alpha(\beta)) \in F^2$ with respect to $\Phi_\alpha$, that is, $\beta = \cR_\alpha(\gamma)+ 
\Phi_\alpha(\beta)$.

The base case of our induction is given by the following example.
\begin{example} \label{EX:choice0}
Let $F=C$.
The map in Hypothesis \ref{HYP:ind} assigns to zero the identity operator
and to every element of $C^\times$ the zero operator by Example \ref{EX:const}.
\end{example}

For every $h \in F(t)$, Propositions \ref{PROP:ks} and \ref{PROP:red} convert the problem 
of constructing complete reductions for $(F(t), \cR_h)$ to that for $(F\langle  t \rangle, \cP_h)$ with an additional property that $h$ is  $t$-normalized.

Let $h \in F(t)$ be $t$-normalized. In the next two sections, we will construct a complete reduction
for $(F\langle t \rangle, \cP_h)$ in three steps:
\begin{enumerate}
\item Construct a subspace $A_h \subset F\langle t \rangle $ such that $F\langle t \rangle = \im(\cP_h) + A_h$
by \eqref{EQ:paux_case}, \eqref{EQ:haux_case+} and \eqref{EQ:haux_case-}.
\item Find an echelon sequence $E_h$ of $\im(\cP_h) \cap A_h$ by Propositions \ref{PROP:image}, \ref{PROP:hbasis} and Lemma \ref{LM:elim}.
\item Determine a complement $W$ of $\im(\cP_h) \cap A_h$ in $A_h$ by $E_h$ and Lemma \ref{LM:modulo}.
\end{enumerate}
Once these steps are completed, we obtain the direct sum decomposition
$$F\langle t \rangle = \im(\cP_h) \oplus W,$$ 
and the projection from  $F\langle t \rangle$ to $W$ gives the desired complete reduction.
The above three steps have been carried out in \cite[Section 3]{DGLL2025} when $t$ is primitive and $h=0$.
However, additional technical obstacles emerge when $h \neq 0$ or $t$ is hyperexponential.

The notion of auxiliary subspaces comes  from our attempt to simplify the coefficients of elements in  $F\langle t \rangle$ modulo $\im(\cP_h)$.
Such a simplification relies on an elaborate application of integration by parts in the case $h=0$ (see \cite{CDL2018,DGLW2020}).

A basis of $\im(\cP_h) \cap A_h$ is constructed based on two algorithms presented in \cite[\S 4.3.1]{RaabThesis}: one is to recognize logarithmic derivatives in $F$  (see \cite[\S 5.12]{BronsteinBook}); the other is to find parametric logarithmic derivatives in $F$ (see \cite[\S 7.3]{BronsteinBook} and \cite[Chapter 4]{RaabThesis}). Note that these two algorithms are well developed for transcendental Liouvillian extensions, while they cannot be guaranteed to work for general differential fields.
Although such a basis may not be finite when $t$ is primitive, its regular shape described in Proposition \ref{PROP:image} allows us
to find an echelon sequence of the intersection, which induces a complement of $\im(\cP_h)$ in $F\langle t \rangle$
by Lemma \ref{LM:modulo},
 and leads to an algorithm for computing R-pairs by Lemma \ref{LM:elim}.

We present a lemma
allowing us to transform a $C$-basis of a subspace in $F\langle t \rangle$
to a $C$-basis of its image under $\cP_h$. 
\begin{lemma} \label{LM:inj}
If $h \in F(t) \setminus F$ is $t$-normalized, then both $\cR_h$ and $\cP_h$ are
injective.
\end{lemma}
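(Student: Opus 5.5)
The plan is to prove the injectivity of $\cR_h$ first and then deduce that of $\cP_h$ at once. Throughout, $t$ is regular over $F$ and is either primitive or hyperexponential over $F$, as in the setting of Section~\ref{SECT:ind}.

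For $\cP_h$, I would use the identity $\cP_h(r) = b r^\prime + a r = b\,\cR_h(r)$ for all $r \in F\langle t\rangle$. Since $b \neq 0$, we have $\cP_h(r)=0$ exactly when $\cR_h(r)=0$. So $\ker(\cP_h) = \ker(\cR_h) \cap F\langle t\rangle$, and the injectivity of $\cP_h$ follows from that of $\cR_h$.

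For $\cR_h$, I would argue by contradiction. Suppose $y \in F(t)^\times$ satisfies $y^\prime + hy = 0$. Then $h = -y^\prime/y$ is a logarithmic derivative in $F(t)$. Factor $y = \gamma\, t^m \prod_i p_i^{e_i}$ with the following data:
\begin{itemize}
\item $\gamma \in F^\times$;
\item $m \in \bZ$, where we take $m=0$ in the primitive case;
\item $e_i \in \bZ^\times$;
\item the $p_i$ are distinct monic irreducible polynomials of positive degree, different from $t$ in the hyperexponential case.
\end{itemize}
By Lemma~\ref{LM:fact} (iii) and (iv), every $p_i$ is normal. The logarithmic derivative identity gives
\[ h = -\frac{\gamma^\prime}{\gamma} - m\frac{t^\prime}{t} - \sum_i e_i \frac{p_i^\prime}{p_i}. \]
Here $\gamma^\prime/\gamma + m\, t^\prime/t \in F$, because $t^\prime/t \in F$ in the hyperexponential case.

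The key step, and the only one requiring some care, is the residue computation. I need to show that $\nu_{p_i}(h) = -1$ and $\residue_{p_i}(h) = -e_i$. This holds because $p_i^\prime/p_i$ has residue $1$ at $p_i$. Every other summand, namely the $p_j^\prime/p_j$ with $j\ne i$, the element of $F$, and $t^\prime/t$ (since $p_i \neq t$), is regular at $p_i$ and so has residue $0$ there, using Definition~\ref{DEF:residue}. Thus $h$ would have the nonzero integer residue $-e_i$ at $p_i$.

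This contradicts $h$ being $t$-normalized: by the remark after Definition~\ref{DEF:normalized}, via \cite[Theorem 4.4.3]{BronsteinBook}, $t$-normalized means no residue of $h$ lies in $\bZ^\times$. Hence no $p_i$ occurs, and $h = -\gamma^\prime/\gamma - m\,t^\prime/t \in F$. This contradicts the hypothesis $h \notin F$. Therefore $\ker(\cR_h) = \{0\}$, and by the reduction above $\ker(\cP_h)=\{0\}$ as well.
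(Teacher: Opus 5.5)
Your proof is correct and takes essentially the same route as the paper's. Both arguments write $h=-y^\prime/y$, expand it with the logarithmic derivative identity, use $t$-normalization (no residue in $\bZ^\times$) to rule out every normal irreducible factor so that $h\in F$, and then get injectivity of $\cP_h$ from $\cP_h = b\,\cR_h$ on $F\langle t\rangle$. Your explicit residue computation at each $p_i$ just spells out the step the paper states in one line.
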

\begin{proof} Suppose that $u \in \ker(\cR_h)^\times$. Then $h=-u^\prime/u$ by Definition \ref{DEF:risch}.
The logarithmic derivative identity  implies that $h$ is a $\bZ$-linear combination of logarithmic derivatives of polynomials in $F[t]$.
Since $h$ is $t$-normalized,  no logarithmic derivative of any element in $N_t$ appears in
this $\bZ$-linear combination. Since $t$ is either primitive or hyperexponential over $F$,
we derive that $h \in F$, a contradiction.
Thus, $\cR_h$ is injective. Consequently, $\cP_h$ is injective, because  it is equal to $\den(h)  \cR_h$ restricted to $F\langle  t \rangle$.
\end{proof}

As the final preparation for our induction, we make a notational convention that will be valid throughout the next two sections.
\begin{convention} \label{CON:next2}
\begin{itemize}
\item[(i)] Let $C(t_1, \ldots, t_n)$ be a transcendental Liouvillian extension of $C$. 
Set 
$$F := C(t_1, \ldots, t_{n-1}) \quad \text{and} \quad  t :=t_n.$$
\item[(ii)] Assume that $t$ is primitive over $F$ in Section \ref{SECT:prim}, and  hyperexponential over $F$ in Section \ref{SECT:hyperexp}.
\item[(iii)] $h$ is a $t$-normalized element of  $F(t)$.
\item[(iv)] Set $a :=\num(h)$, $b :=\den(h)$ and $m := \Deg(h)$. Write
\[  a =a_m t^m + a_{m-1} t^{m-1} + \cdots + a_0 \quad \text{and} \quad  b = b_m t^m + b_{m-1} t^{m-1} + \cdots + b_0,\]
where $a_i, b_i \in F$ and $b_m=1$ whenever $b_m\neq 0$.
\end{itemize}
\end{convention}

\section{The primitive case} \label{SECT:prim}

By Convention \ref{CON:next2} (ii),  $t$ is primitive throughout this section. 
Then $F\langle  t \rangle=F[t]$. 
The coefficients $a_m, b_m, a_{m-1}$ and $b_{m-1}$ in Convention \ref{CON:next2} (iv) will play a significant role
in reducing polynomials. Note that both $a_{m-1}$ and $b_{m-1}$ are zero if $m=0$.

To describe the leading terms of elements in $\im(\cP_h)$, 
we define a $C$-linear operator
\begin{equation} \label{EQ:lc}
\begin{array}{cccc}
\cL: & F & \rightarrow & F \\
     & \alpha & \mapsto     & b_m \alpha^\prime + a_m \alpha.
\end{array}
\end{equation}
Then,   for an element $f \in F[t]$ with $d = \deg(f)$ and $f_d = \lc(f)$, we have 
\begin{equation} \label{EQ:pred}
\cP_h(f) \equiv \cL(f_d) t^{m+d} \mod F[t]_{<m+d}
\end{equation}
by $\cP_h(f) \equiv (b_m f_d^\prime) t^{m+d}  + (a_m f_d) t^{m+d}  \mod F[t]_{<m+d}$.
\begin{remark} \label{RE:pred}
By Convention \ref{CON:next2} and \eqref{EQ:lc}, the following two assertions hold.
\begin{itemize}
\item[(i)] If $\nu_\infty(h) < 0$, then $a_m \neq 0$ and $b_m=0$. Consequently, $\cL$ is injective.
\item[(ii)] If $\nu_\infty(h) \ge 0$, then $b_m = 1$ and $\cL = \cR_{a_m}$, 
which is the Risch operator on $F$ associated to~$a_m$.
\end{itemize}
\end{remark}

The next lemma  will be used frequently for constructing echelon sequences in this section.
\begin{lemma} \label{LM:lead}
If $\lambda \in \ker(\cL)^\times$, then $\Phi_{a_m}(\lambda t^\prime) \neq 0$, where $\Phi_{a_m}$ is the complete reduction 
for~$(F, \, \cR_{a_m})$ given
in Hypothesis \ref{HYP:ind}.
\end{lemma}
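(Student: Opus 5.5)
The plan is to translate the conclusion into a statement about the image of $\cR_{a_m}$, and then derive a contradiction with the transcendence of $t$ by a quotient trick. By Hypothesis~\ref{HYP:ind}, $\Phi_{a_m}$ is a complete reduction for $(F,\cR_{a_m})$, so $\ker(\Phi_{a_m})=\im(\cR_{a_m})$. It therefore suffices to show that $\lambda t^\prime\notin\im(\cR_{a_m})$ whenever $\lambda\in\ker(\cL)^\times$.

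First I would dispose of the case $\nu_\infty(h)<0$. By Remark~\ref{RE:pred}~(i), $\cL$ is injective in that case, so $\ker(\cL)^\times=\emptyset$ and the statement is vacuous. Hence we may assume $\nu_\infty(h)\ge 0$. Then Remark~\ref{RE:pred}~(ii) gives $b_m=1$ and $\cL=\cR_{a_m}$. The hypothesis on $\lambda$ thus reads $\lambda^\prime+a_m\lambda=0$ with $\lambda\in F^\times$.

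Next, suppose for contradiction that $\lambda t^\prime=\cR_{a_m}(\gamma)=\gamma^\prime+a_m\gamma$ for some $\gamma\in F$. The key computation is the derivative of the quotient $\gamma/\lambda\in F$:
\[
\left(\frac{\gamma}{\lambda}\right)^\prime=\frac{\gamma^\prime\lambda-\gamma\lambda^\prime}{\lambda^2}=\frac{\gamma^\prime+a_m\gamma}{\lambda}=t^\prime .
\]
The middle equality substitutes $\lambda^\prime=-a_m\lambda$. Hence $(t-\gamma/\lambda)^\prime=0$, so $t-\gamma/\lambda$ is a constant of $F(t)$. Since $t$ is regular over $F$ (Definition~\ref{DEF:liouvillian}~(iii)), this constant lies in $C\subset F$. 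Then $t\in F$, which contradicts the transcendence of $t$ over $F$.

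Therefore $\lambda t^\prime\notin\im(\cR_{a_m})=\ker(\Phi_{a_m})$, that is, $\Phi_{a_m}(\lambda t^\prime)\neq0$. I do not expect a serious obstacle. The only points needing care are the reduction to the case $b_m=1$ via Remark~\ref{RE:pred}, and recognizing that dividing by the kernel element $\lambda$ turns $\cR_{a_m}$ into the plain derivation.
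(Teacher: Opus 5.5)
Your proof is correct and follows the paper's argument: you use Remark~\ref{RE:pred} to get $\cL=\cR_{a_m}$, suppose $\lambda t^\prime\in\im(\cR_{a_m})=\ker(\Phi_{a_m})$, and derive $(\gamma/\lambda)^\prime=t^\prime$, which contradicts the regularity of $t$. Your extra step, that the constant $t-\gamma/\lambda$ would lie in $C$ and force $t\in F$, just spells out the contradiction the paper leaves implicit.
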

\begin{proof}
Since $\ker(\cL)\neq \{0\}$, we have $\nu_\infty(h)\geq 0$ and then $\cL = \cR_{a_m}$ by 
Remark \ref{RE:pred}. It follows that $\lambda \in \ker(\cR_{a_m})^\times.$ Thus, $\lambda'+a_m\lambda=0.$
Suppose that $\Phi_{a_m}\left( \lambda t^\prime \right) = 0$. Then $\lambda t^\prime \in \im(\cR_{a_m})$.
Hence, there exists $\mu \in F$ such that
$\mu^\prime + a_m \mu = \lambda t^\prime.$
The two equalities involving $a_m$ imply that $t^\prime = (\mu/\lambda)^\prime$, a contradiction to the regularity of $t$.
\end{proof}

The rest of this section has three parts. We first define the auxiliary subspace $U_h$ associated to $(F[t], \, \cP_h)$ and develop a corresponding reduction
from $F[t]$ to $U_h$ modulo $\im(\cP_h)$ in Section~\ref{SUBSECT:paux}.
Section \ref{SUBSECT:pinter} presents a criterion to determine whether $\im(\cP_h) \cap U_h$ is trivial.
An echelon sequence of this intersection, together with a complete reduction for $(F(t), \cR_h)$,
is constructed in Section~\ref{SUBSECT:pbasis}.

\subsection{Auxiliary subspaces} \label{SUBSECT:paux}

The idea of auxiliary subspaces originates from the attempt to decompose an element $f \in F[t]$ as $g^\prime + r$,
where $^\prime = \cR_0$ is the derivation on $F[t]$, $g,r \in F[t]$, $\deg(r) \le \deg(f)$, and all coefficients of~$r$ lie
in $\im(\Phi_0)$. Note that $\Phi_0$ is the complete reduction for $(F, \, ^\prime)$ given in Hypothesis \ref{HYP:ind}.  Such a decomposition
is always possible by \cite[Lemma 3.1]{DGLL2025}. 

We generalize this idea from $^\prime$ to $\cR_h$, where
$h$ is given in Convention \ref{CON:next2} (iii) and (iv).

To this end, we let $f \in F[t]$, $d=\deg(f)$ and $f_d=\lc(f)$, and assume that $d \ge m$.  Then
\begin{equation} \label{EQ:paux_case}
f \equiv 
\begin{cases}
\cP_h\left(a_m^{-1}  f_{d}  t^{d-m}\right)  \mod  F[t]_{<d} & \text{if $\nu_\infty(h)<0$}, \\ \\
\cP_h\left(\alpha t^{d-m}\right) + \beta t^d \mod  F[t]_{<d}  & \text{if $\nu_\infty(h) \ge 0$}, 
\end{cases}
\end{equation}
where $(\alpha,\beta) \in F^2$ is an R-pair of $f_d$ with respect to $\Phi_{a_m}$ given in Hypothesis \ref{HYP:ind}.

The first congruence holds by expanding $\cP_h\left(a_m^{-1}  f_{d}  t^{d-m}\right)$ directly modulo $F[t]_{<d}$ and noticing that $b_m=0$ due to $\nu_\infty(h)<0$.
To show the second congruence, we note that $f_d = \cR_{a_m}(\alpha)+\beta$ by Definition \ref{DEF:Rpair}. 
Then $f_d =  \cL(\alpha)  + \beta $ by Remark \ref{RE:pred} (ii). So $f \equiv  \cL(\alpha) t^d + \beta t^d \mod F[t]_{<d}$.
On the other hand, $\cP_h\left(\alpha t^{d-m}\right) \equiv \cL(\alpha) t^d \mod F[t]_{<d}$ by \eqref{EQ:pred}. 
Combining the last two congruences, we prove the second congruence in \eqref{EQ:paux_case}. 

Iteratively applying the two congruences in \eqref{EQ:paux_case} to elements of $F[t]_{<d}$, 
we arrive at the following definition.
\begin{define} \label{DEF:paux}
The {\em auxiliary subspace associated to $(F[t], \cP_h)$} is defined as
\[
U_h := \left\{ \begin{array}{ll}
 F[t]_{<m} &   \text{if $\nu_\infty(h)<0$},  \\ \\
 F[t]_{<m} + \sum_{i \in \bN_0} \im(\Phi_{a_m}) \cdot t^{m+i} &  \text{if $\nu_\infty(h) \ge 0$}.
\end{array} \right.
\]
\end{define}
The set $U_h$ is a $C$-subspace of $F[t]$ because both $F[t]_{<m}$ and $\im(\Phi_{a_m})$ are $C$-subspaces of $F[t]$.
Indeed, $U_h$ is $C$-linearly isomorphic to $F[t]_{<m} + \im\left(\Phi_{a_m} \right) \otimes_C C[t]$ when $\nu_\infty(h) \ge 0$.
The above  definition generalizes the notion of auxiliary subspaces in \cite{DGLL2025} for the case where $m=0$ and $a_m=0$.
\begin{example} \label{EX:paux}
Let $F=C$, $t=x$ and $^\prime=d/dx$. 
By Convention \ref{CON:next2}, 
\begin{center}
	\begin{tabular}{c|c|c|c|c|c|c} 
		   & $a$ & $b$ & $m$ & $a_m$ & $b_m$ & $\nu_\infty(h)$ \\ \hline 
         $h=0$ &  $0$ & $1$ & $0$ & $0$ & $1$ & $\infty$    \\  \hline 
         $h=x$ &  $x$ & $1$ & $1$ & $1$ & $0$ & $-1$  \\ 
	\end{tabular}
         \label{tab:example5.5}
\end{center}
It follows from Example \ref{EX:choice0} that  the auxiliary subspaces associated to $(F[t], \cP_0)$ and $(F[t], \cP_x)$
are $C[x]$ and $C$, respectively.
\end{example}
\begin{prop} \label{PROP:paux}
Let $U_h$ be the auxiliary subspace associated to $(F[t], \, \cP_h)$, and 
$f \in F[t]^\times$ with $d = \deg(f)$. Then there exists a pair $(g, r) \in F[t] \times U_h$ with $\deg(g) \le d - m$ and $\deg(r) \le d$
such that $f = \cP_h(g)+r$. Consequently,   $F[t] = \im(\cP_h) + U_h.$
\end{prop}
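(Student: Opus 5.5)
The plan is a strong induction on $d=\deg(f)$, using the congruences \eqref{EQ:paux_case} to strip off the leading term one degree at a time. If $d<m$, then $f\in F[t]_{<m}\subset U_h$ already, so we take $g=0$ and $r=f$. Then $\deg(g)=-\infty\le d-m$ and $\deg(r)=d$. (If $f$ has already been reduced to $0$, we use the same trivial pair $(0,0)$.)

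For the inductive step, assume $d\ge m$ and write $f_d=\lc(f)$. There are two cases, according to the sign of $\nu_\infty(h)$.
\begin{itemize}
\item If $\nu_\infty(h)<0$, we have $a_m\neq 0$ by Remark~\ref{RE:pred}~(i). Set $g_1:=a_m^{-1}f_dt^{d-m}$ and $r_1:=0$.
\item If $\nu_\infty(h)\ge 0$, let $(\alpha,\beta)$ be an R-pair of $f_d$ with respect to $\Phi_{a_m}$, which Hypothesis~\ref{HYP:ind} provides. Set $g_1:=\alpha t^{d-m}$ and $r_1:=\beta t^d$. Since $\beta\in\im(\Phi_{a_m})$ and $d\ge m$, we get $r_1\in\im(\Phi_{a_m})\cdot t^{m+(d-m)}\subset U_h$.
\end{itemize}
In both cases \eqref{EQ:paux_case} gives $\tilde f:=f-\cP_h(g_1)-r_1\in F[t]_{<d}$. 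We also record the degree bounds $\deg(g_1)\le d-m$ and $\deg(r_1)\le d$.

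Now apply the induction hypothesis to $\tilde f$ when it is nonzero; if $\tilde f=0$, use the pair $(0,0)$. This yields $(\tilde g,\tilde r)\in F[t]\times U_h$ with $\tilde f=\cP_h(\tilde g)+\tilde r$, where $\deg(\tilde g)\le \deg(\tilde f)-m<d-m$ and $\deg(\tilde r)\le\deg(\tilde f)<d$. Setting $g:=g_1+\tilde g$ and $r:=r_1+\tilde r$ gives $f=\cP_h(g)+r$. Linearity of $\cP_h$ over $C$ is all that is used here, and $r\in U_h$ because $U_h$ is a $C$-subspace. The degree bounds follow immediately from those of the pieces. The consequence $F[t]=\im(\cP_h)+U_h$ is then immediate, since $0$ lies in both summands.

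The only step I expect to need care is confirming that the congruences in \eqref{EQ:paux_case} really leave a remainder of degree below $d$. For the first case this means expanding $\cP_h(a_m^{-1}f_dt^{d-m})=b(\cdot)'+a(\cdot)$ with $b_m=0$. Here the term $b\,(a_m^{-1}f_dt^{d-m})'$ has degree at most $(m-1)+(d-m)=d-1$, because $\deg b\le m-1$ and differentiating does not raise degree when $t$ is primitive (Lemma~\ref{LM:fact}~(i)). The term $a\,a_m^{-1}f_dt^{d-m}$ has leading term $f_dt^d$, so the top terms cancel as required. The second case is exactly \eqref{EQ:pred} combined with $f_d=\cL(\alpha)+\beta$, which the text has already justified. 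Beyond this, the argument is routine bookkeeping.
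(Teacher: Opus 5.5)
Your proof is correct and follows essentially the same argument as the paper: induction on $\deg(f)$, removing the leading term with the appropriate congruence in \eqref{EQ:paux_case} according to the sign of $\nu_\infty(h)$, then applying the induction hypothesis to the remainder in $F[t]_{<d}$. The only cosmetic difference is that you assemble the bound $\deg(r)\le d$ from the pieces, whereas the paper sets $r:=f-\cP_h(g)$ and bounds $\deg(\cP_h(g))$ via \eqref{EQ:pred}.
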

\begin{proof}
If $d<m$, then $f \in U_h$. So it suffices to set $g:=0$ and $r:=f$. 
Assume that $d \ge m$, and that the conclusion holds for polynomials of degrees lower than $d$.
Let $f_d = \lc(f)$.

If $\nu_\infty(h) < 0$, then  $f \equiv \cP_h\left(a_m^{-1}  f_{d}  t^{d-m}\right)  \mod  F[t]_{<d}$ by \eqref{EQ:paux_case}.
Applying the induction hypothesis to $F[t]_{<d}$, we see that the proposition holds.

Otherwise, $\nu_\infty(h) \ge 0$. Let $(\alpha, \beta)$ be an R-pair of $f_d$ with respect to $\Phi_{a_m}$.
Then 
$$f \equiv \cP_h(\alpha t^{d-m}) \mod U_h + F[t]_{<d}$$
by the second congruence in \eqref{EQ:paux_case} and $\beta \in \im\left(\Phi_{a_m}\right)$. 

It follows from the induction hypothesis and $F[t]_{<m} \subset U_h$ that $f \equiv \cP_h(g) \mod U_h$ for some $g \in F[t]$ with $\deg(g) \le d-m$.
Furthermore,  $\deg\left(\cP_h(g)\right) \le d$ by~\eqref{EQ:pred}. 
The conclusion holds by setting $r := f - \cP_h(g)$.
\end{proof}
\begin{define} \label{DEF:ppair}
Let $f$ and $(g,r)$ be given in the above proposition. We call $(g,r)$ an {\em auxiliary pair} of $f$ with respect to $(F[t], \cP_h)$,
or an {\em auxiliary pair} of $f$ if  $(F[t], \cP_h)$ is clear from context.
\end{define}
The proof of Proposition \ref{PROP:paux} leads to  Algorithm \ref{ALG:paux} 
in Section \ref{SUBSECT:palg}.
The next corollary reveals a relation between auxiliary pairs and R-pairs in the special case $h \in F$.
\begin{cor} \label{COR:pinfield}
Let $h \in F$, and $\Phi_h$ be the complete reduction for $(F, \, \cR_h)$ in Hypothesis \ref{HYP:ind}. 
Then, for every element $f \in F$,  an auxiliary pair of $f$ with respect to $(F[t], \cP_h)$ is an R-pair of~$f$
with respect to $\Phi_h$.
\end{cor}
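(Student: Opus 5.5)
Since $h \in F$, the polynomial $b = \den(h)$ equals $1$, $a = h$ and $m = \Deg(h) = 0$. Hence $\cP_h(r) = r^\prime + h r = \cR_h(r)$ on $F[t]$, and $\nu_\infty(h) \ge 0$ (it is $\infty$ when $h=0$ and $0$ otherwise). In Convention \ref{CON:next2} (iv) we have $a_m = a_0 = h$ and $b_m = 1$. By Remark \ref{RE:pred} (ii) this gives $\cL = \cR_h$ on $F$. Definition \ref{DEF:paux} then yields $U_h = F[t]_{<0} + \sum_{i\in\bN_0} \im(\Phi_h)\cdot t^i = \sum_{i \in \bN_0} \im(\Phi_h) \cdot t^i$, so $U_h \cap F = \im(\Phi_h)$.

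Now let $f \in F$ and let $(g,r)$ be an auxiliary pair of $f$. If $f=0$, the pair produced by the construction is $(0,0)$, which is trivially an R-pair; assume therefore $f \neq 0$, so $d = \deg(f) = 0$. Proposition \ref{PROP:paux} gives $\deg(g) \le d - m = 0$ and $\deg(r) \le 0$, so $g, r \in F$. Since $g \in F$, we get $\cP_h(g) = g^\prime + h g = \cR_h(g)$ with $\cR_h$ the Risch operator on $F$. Thus $f = \cR_h(g) + r$ with $r \in U_h \cap F = \im(\Phi_h)$. This is exactly the defining property of an R-pair of $f$ with respect to $\Phi_h$ in Definition \ref{DEF:Rpair}: $\Phi_h$ is a complete reduction for $(F,\cR_h)$ by Hypothesis \ref{HYP:ind}, and $r \in \im(\Phi_h)$ forces $r = \Phi_h(f)$, because $F = \im(\cR_h) \oplus \im(\Phi_h)$ and $f - r \in \im(\cR_h)$.

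The only mildly delicate point is the degree bookkeeping, namely that $g$ and $r$ land in $F$ rather than merely in $F[t]$. This follows directly from the degree bounds in Proposition \ref{PROP:paux} with $m=0$, $d=0$. One should also check that $r$ has its coefficient in $\im(\Phi_h)$, not just in $\im(\Phi_{a_m})$ for some other $a_m$; this holds because $a_m = h$. Alternatively, tracing the proof of Proposition \ref{PROP:paux}: for $d = 0 \ge m = 0$ it takes the R-pair $(\alpha,\beta)$ of $f_0 = f$ with respect to $\Phi_{a_0} = \Phi_h$, and the remainder in $F[t]_{<0} = \{0\}$ vanishes. So the auxiliary pair produced is literally $(\alpha, \beta)$, which is an R-pair of $f$ with respect to $\Phi_h$.
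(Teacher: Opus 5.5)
Your proposal is correct and follows essentially the same route as the paper. With $m=0$, $a_m=h$ and $b_m=1$ one gets $U_h\cap F=\im(\Phi_h)$, and the degree bounds of Proposition~\ref{PROP:paux} place the auxiliary pair in $F^2$. Then $\cP_h|_F=\cR_h$ together with $F=\im(\cR_h)\oplus\im(\Phi_h)$ forces the second component to be $\Phi_h(f)$. Your separate treatment of $f=0$ is a small point of care that the paper's proof glosses over, since Proposition~\ref{PROP:paux} is stated only for nonzero $f$.
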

\begin{proof}
Since $f \in F$, its degree is zero. So every auxiliary pair $(\alpha, \beta)$ of $f$ with respect to $(F[t], \cP_h)$ belongs to $F^2$ by Proposition \ref{PROP:paux}.
It remains to show that $(\alpha, \beta)$ is an R-pair of $f$
with respect to $\Phi_h$. 
With Convention \ref{CON:next2}, we have the following table for notation:
\begin{center}
	\begin{tabular}{c|c|c|c|c|c|c}
		  & $a$ & $b$ & $m$ & $a_m$ & $b_m$ & $\nu_\infty(h)$  \\ \hline
         $ h \in F$ &  $h$ & $1$ & $0$ & $h$ & $1$ & $ \ge 0$ \\  
	\end{tabular}
\end{center}
Then the auxiliary subspace $U_h$ associated to $(F[t],\cP_h)$ is $\sum_{i \in \bN_0} \im(\Phi_{h}) \cdot t^{i}$
by Definition \ref{DEF:paux}. So $\beta \in \im(\Phi_h)$ by $\beta \in U_h \cap F$. 
On the other hand,  the restrictions of $\cR_h$ and $\cP_h$ to $F$ are equal by Definitions \ref{DEF:risch} and \ref{DEF:comp}.
Therefore, $f = \cP_h(\alpha) + \beta = \cR_h(\alpha) + \beta$. 
It follows from $F = \im(\cR_h) \oplus \im(\Phi_h)$ that $\beta = \Phi_h(f)$, that is, $(\alpha, \beta)$ is an R-pair with respect to $\Phi_h$.
\end{proof}
\subsection{Intersecting $\im(\cP_h)$ with the associated auxiliary subspace} \label{SUBSECT:pinter}

In this subsection, we let $U_h$ be the auxiliary subspace associated to $(F[t], \, \cP_h)$.
By Proposition~\ref{PROP:paux}, $F[t] = \im(\cP_h) + U_h$. Then $F[t] = \im(\cP_h) \oplus U_h$ if and only if  $\im(\cP_h) \cap U_h {=} \{0\}$.
By setting $I_h := \im(\cP_h) \cap U_h$, 
we aim to derive a necessary and sufficient condition on $I_h = \{0\}.$

The following lemma describes the leading coefficient of every  element in $I_h$.
\begin{lemma} \label{LM:pinter}
If $\cP_h(f) \in U_h$ for some $f \in F[t]$, then $\lc(f) \in \ker(\cL).$
\end{lemma}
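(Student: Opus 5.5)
We need to show that if $\cP_h(f)\in U_h$ then $\lc(f)\in\ker(\cL)$. The case $f=0$ is trivial since $\lc(0)=0\in\ker(\cL)$, so we assume $f\neq 0$ and set $d=\deg(f)\ge 0$ and $f_d=\lc(f)$. The tool is the congruence \eqref{EQ:pred},
\[ \cP_h(f)\equiv \cL(f_d)\,t^{m+d} \mod F[t]_{<m+d}, \]
which says that the coefficient of $t^{m+d}$ in $\cP_h(f)$ is exactly $\cL(f_d)$. We argue by contradiction: suppose $\cL(f_d)\neq 0$, so that $\cP_h(f)$ has degree exactly $m+d\ge m$ and leading coefficient $\cL(f_d)$. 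We then compare this with the shape of $U_h$ from Definition \ref{DEF:paux}, splitting into the two cases of that definition.

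\emph{Case $\nu_\infty(h)<0$.} Here $U_h=F[t]_{<m}$, so every element of $U_h$ has degree less than $m$. This contradicts $\deg(\cP_h(f))=m+d\ge m$. (Alternatively, by Remark \ref{RE:pred} (i) the operator $\cL$ is injective, so the conclusion $f_d\in\ker(\cL)$ would force $f_d=0$. Hence this case just says that $I_h$ is trivial.)

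\emph{Case $\nu_\infty(h)\ge 0$.} By Remark \ref{RE:pred} (ii), $\cL=\cR_{a_m}$. Since $m+d\ge m$, every element of $U_h$ has its coefficient of $t^{m+d}$ in $\im(\Phi_{a_m})$. Thus $\cL(f_d)=\cR_{a_m}(f_d)$ lies in both $\im(\cR_{a_m})$ and $\im(\Phi_{a_m})$. By Hypothesis \ref{HYP:ind}, these two subspaces intersect trivially, so $\cL(f_d)=0$, contradicting our assumption.

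The only point that needs care is reading off the coefficient of $t^{m+d}$ in an element of $U_h$. The sum $F[t]_{<m}+\sum_i \im(\Phi_{a_m})\,t^{m+i}$ is a sum of subspaces with disjoint supports in degree, so each coefficient of degree at least $m$ comes from a single summand $\im(\Phi_{a_m})\,t^{m+i}$. This is straightforward, so no real obstacle is expected.
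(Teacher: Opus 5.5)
Your proof is correct and follows essentially the same route as the paper: read off the coefficient $\cL(\lc(f))$ of $t^{m+d}$ from \eqref{EQ:pred}, then split on $\nu_\infty(h)$, using $U_h=F[t]_{<m}$ in the first case and $\cL=\cR_{a_m}$ together with $\im(\cR_{a_m})\cap\im(\Phi_{a_m})=\{0\}$ in the second. The only differences are presentational: you argue by contradiction and treat $f=0$ separately, whereas the paper argues directly.
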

\begin{proof}
Let $d = \deg(f)$ and $f_d = \lc(f)$.

If $\nu_\infty(h)<0$, then $U_h = F[t]_{<m}$
by Definition \ref{DEF:paux}. So the degree of $\cP_h(f)$ is less than $m$.
It follows from \eqref{EQ:pred} that $\cL(f_d)=0$.

Assume that $\nu_\infty(h) \ge 0$.  Then $\cL(f_d) \in \im(\Phi_{a_m})$ by \eqref{EQ:pred}, $\cP_h(f) \in U_h$ and Definition \ref{DEF:paux}.
 Note that
$\cL=\cR_{a_m}$ by Remark \ref{RE:pred} (ii). Since $\im\left(\cR_{a_m} \right) \cap  \im\left(\Phi_{a_m}\right)$ is equal to $\{0\}$,
so is $\im\left(\cL \right) \cap  \im\left(\Phi_{a_m}\right)$. 
Accordingly, $\cL(f_d)=0$.
\end{proof}
The next proposition is a criterion for $I_h = \{0\}$.
\begin{prop} \label{PROP:pinter}
$I_h=\{0\}$ if and only if $\ker(\cL)=\{0\}$.  In particular, $I_h=\{0\}$ if $\nu_{\infty}(h)<0$.
\end{prop}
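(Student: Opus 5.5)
We prove the two directions separately; the ``in particular'' clause then follows from Remark~\ref{RE:pred}~(i), since $\nu_\infty(h)<0$ makes $\cL$ injective.

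\emph{($\Leftarrow$).} Assume $\ker(\cL)=\{0\}$ and let $v\in I_h$. Write $v=\cP_h(f)$ with $f\in F[t]$. If $f\neq 0$, then Lemma~\ref{LM:pinter} gives $\lc(f)\in\ker(\cL)$. Since $\lc(f)\neq 0$, this contradicts $\ker(\cL)=\{0\}$. Hence $f=0$, so $v=0$ and $I_h=\{0\}$.

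\emph{($\Rightarrow$).} We argue by contraposition: assume there is $\lambda\in\ker(\cL)^\times$ and build a nonzero element of $I_h$. By Remark~\ref{RE:pred}, $\nu_\infty(h)\ge 0$, so $b_m=1$ and $\cL=\cR_{a_m}$.

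We start from $f=\lambda t$. By \eqref{EQ:pred}, $\cP_h(\lambda t)\equiv\cL(\lambda)t^{m+1}=0 \bmod F[t]_{<m+1}$, so $\deg\cP_h(\lambda t)\le m$. Its coefficient of $t^m$ is
\[
\bigl(\lambda^\prime+a_m\lambda\bigr)+\lambda t^\prime+\bigl(b_{m-1}\lambda^\prime+a_{m-1}\lambda\bigr),
\]
coming from $b\,(\lambda t)^\prime+a\,\lambda t$. Using $\lambda^\prime=-a_m\lambda$, this simplifies to $\lambda t^\prime+\lambda(a_{m-1}-a_mb_{m-1})$.

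We now apply Proposition~\ref{PROP:paux} to $\cP_h(\lambda t)$, which has degree at most $m$. This gives $g\in F[t]$ with $\deg g\le 0$, i.e.\ $g=\gamma\in F$, such that $r:=\cP_h(\lambda t)-\cP_h(\gamma)\in U_h$. Then $r=\cP_h(\lambda t-\gamma)$ lies in $I_h$.

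It remains to show $r\neq 0$. The direct route is injectivity: by Lemma~\ref{LM:inj}, $\cP_h$ is injective when $h\notin F$, and $\lambda t-\gamma\ne 0$, so $r\neq 0$. If $h\in F$, then $m=0$, $a=h$, $b=1$, and $\cP_h$ acts as $f\mapsto f^\prime+hf$ on $F[t]$. In this case the degree-$0$ computation above shows that the $t^0$-coefficient of $\cP_h(\lambda t)$ is $\lambda t^\prime$. The reduction step producing $\gamma$ is exactly the R-pair construction for $\Phi_{a_m}$ in the proof of Proposition~\ref{PROP:paux}. So the $t^m$-coefficient of $r$ is $\Phi_{a_m}$ applied to the $t^m$-coefficient of $\cP_h(\lambda t)$, and Lemma~\ref{LM:lead} gives $\Phi_{a_m}(\lambda t^\prime)\ne0$. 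Hence $r\neq 0$ in this case as well.

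\emph{Main obstacle.} The difficulty is confirming $r\ne0$ in general. When $h\notin F$ it is cleanest to invoke Lemma~\ref{LM:inj}, and I would use that route. For $h\in F$, the argument via Lemma~\ref{LM:lead} requires care: I must check that the extra term $\lambda(a_{m-1}-a_mb_{m-1})$ vanishes when $m=0$, which holds because $a_{m-1}=b_{m-1}=0$ in that case.
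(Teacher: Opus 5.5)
Your proof is correct and follows the paper's argument. The ($\Leftarrow$) direction via Lemma~\ref{LM:pinter} is identical. For $h\in F$ you reduce $\cP_h(\lambda t)=\lambda t^\prime$ and invoke Lemma~\ref{LM:lead}, exactly as the paper does; the fact that the remainder equals $\Phi_h(\lambda t^\prime)$ for any auxiliary pair is Corollary~\ref{COR:pinfield}. The only difference is the case $h\notin F$: the paper uses the simpler witness $\cP_h(\lambda)\in F[t]_{<m}\subset U_h$, which is nonzero by Lemma~\ref{LM:inj}, rather than auxiliary-reducing $\cP_h(\lambda t)$. One harmless slip: the term $\lambda^\prime+a_m\lambda$ you listed is the vanishing $t^{m+1}$-coefficient, not part of the $t^m$-coefficient.
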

\begin{proof} Assume that $\ker(\cL) = \{0\}.$
Let $\cP_h(f) \in U_h$ for some $f \in F[t]$.
By Lemma \ref{LM:pinter}, $\lc(f)$ is equal to $0$, and so is $f$. Thus, $I_h = \{0\}$.

Conversely, assume that $I_h = \{0\}$.  Let $\lambda \in \ker(\cL)$. We need to show $\lambda=0$.
By \eqref{EQ:pred}, we have that $\cP_h(\lambda) \in F[t]_{<m}$. 
Then $\cP_h(\lambda)  \in U_h$  by Definition \ref{DEF:paux}.
Consequently, $\cP_h(\lambda)=0$ by $I_h=\{0\}$.

If $h \in F(t) \setminus F$, then $\lambda=0$ by Lemma \ref{LM:inj}.

Assume that 
$h \in F$. Then $m=0$, $a=a_m=h$ and $b=b_m=1$ by Convention \ref{CON:next2}. 
So $\cP_h( \lambda t) = \lambda  t^\prime$ by $\lambda \in \ker(\cL)$.
Since $\lambda t^\prime \in F$, it has an auxiliary pair of the form 
$\left(\mu, \Phi_{h}\left( \lambda t^\prime\right) \right)$ for some $\mu \in F$ by Corollary \ref{COR:pinfield},
where $\Phi_h$ is the complete reduction for $(F, \, \cR_h)$ given in Hypothesis~\ref{HYP:ind}.
Thus $\cP_h( \lambda t) = \lambda  t^\prime = \cP_h(\mu) + \Phi_{h}\left( \lambda t^\prime\right).$
Consequently,
$\cP_h( \lambda t- \mu) = \Phi_{h}\left( \lambda t^\prime\right)$,
which, together with $\Phi_{h}\left( \lambda t^\prime\right) \in U_h$, implies that
 $\Phi_{h}\left( \lambda t^\prime\right) \in I_h$.
We conclude   from $I_h = \{0\}$ that $\Phi_{h}\left( \lambda t^\prime\right) = 0$. Then $ \lambda =0$ by Lemma \ref{LM:lead}.

If $\nu_\infty(h)<0$, then $\ker(\cL)=\{0\}$ by Remark \ref{RE:pred} (i). So $I_h=\{0\}$.
\end{proof}

We distinguish whether $I_h$ is trivial or not by types defined below.
\begin{define} \label{DEF:pindex}
The {\em type} of $I_h$ is defined as $0$ if $I_h=\{0\}$ (i.e., $\ker(\cL) = \{0\}$) and otherwise, as an element $\lambda\in \ker(\cL)^\times$.
\end{define}
The type of $I_h$ is unique up to a multiplicative constant in $C^\times$, because $\dim_C\ker(\cL) \le 1$.
It can be found by recognizing logarithmic derivatives in $F$.
\begin{example} \label{EX:type}
Let $h=0$. Then $m=0$, $a_m=0$ and $b_m=1$ by Convention \ref{CON:next2}. Thus, $\cL= \, ^\prime$ by \eqref{EQ:lc}.
Consequently, $\ker(\cL)=C$. The type of $I_0$ can be taken as $1$.
\end{example}

\subsection{Echelon sequences and induced complements} \label{SUBSECT:pbasis}

The goal of this subsection is to construct an echelon sequence of $I_h$,  yielding a complement of~$\im(\cP_h)$ in $F[t]$.
Let $U_h$ be the auxiliary subspace associated to $(F[t], \, \cP_h)$,
and set $I_h := \im(\cP_h) \cap U_h$ as before. 
Assume further that $I_h \neq \{0\}$ and its type is a fixed element $\lambda \in \ker(\cL)^\times.$ 
 We proceed  in the following four steps:
\begin{enumerate}
\item Construct a $C$-basis $B_1$ of $\cP_h^{-1}(I_h)$ in Lemma \ref{LM:preimage}.
\item Describe a $C$-basis $B_2$ of $I_h$ in Proposition \ref{PROP:image} by mapping $B_1$ 
to $I_h$ via~$\cP_h$ with a minor modification.
\item Convert $B_2$ into an echelon sequence of $I_h$ 
in Corollaries \ref{COR:m=0} and \ref{COR:pivot0}.
\item Construct a complement of $\im(\cP_h)$ in $F[t]$ by Lemma \ref{LM:modulo}.
\end{enumerate}
In the rest of this subsection, every basis is a $C$-basis. So we omit the prefix \lq\lq $C$- \rq\rq\ for brevity.
\subsubsection{Standard basis of $\cP_h^{-1}(I_h)$} \label{SUBSUBSECT:standard}

Recall the notation given in Convention \ref{CON:next2}. We have  
$$m=\Deg(h),  \quad \num(h)=\sum_{i=0}^m a_i t^i \quad \text{and} \quad  \den(h) = \sum_{i=0}^m b_i t^i.$$
By Proposition \ref{PROP:pinter}, $\nu_{\infty}(h) \ge 0$. So $b_m=1$ and $\cL=\cR_{a_m}$ by Remark \ref{RE:pred} (ii).

In view of Lemma \ref{LM:pinter} and $\dim_C(\ker(\cL))=1$, all elements of $\cP_h^{-1}(I_h)^\times$ must have leading coefficient $c \lambda$ for 
some $c \in C^\times$. This observation motivates us to compute the leading term of $\cP_h(\lambda  t^i)$ for all $i \in \bN_0$. 
By Definition \ref{DEF:comp} and $\cL( \lambda )=0$,  we have that 
\begin{equation} \label{EQ:image}
\cP_h(\lambda  t^i) \equiv  (i \lambda t^\prime + b_{m-1} \lambda^\prime +  a_{m-1} \lambda) \cdot t^{m+i-1}  \mod  F[t]_{<m+i-1}
\end{equation}
for all $i \in \bN_0$. To express the coefficient $i \lambda t^\prime + b_{m-1} \lambda^\prime +  a_{m-1} \lambda$ of $t^{m+i-1}$
in terms of R-pairs, we introduce the following definition.
\begin{define} \label{DEF:pairs}
Let $\Phi_{a_m}$ be the complete reduction for $(F, \, \cR_{a_m})$ given in Hypothesis \ref{HYP:ind}. 
A {\em first R-pair associated to $(F[t], \cP_h)$} is an R-pair of $\lambda t^\prime$ with respect to  $\Phi_{a_m}$,
and a {\em second one} 
is an R-pair of $b_{m-1} \lambda^\prime  +  a_{m-1} \lambda$ with respect to $\Phi_{a_m}$.
\end{define}
An element may admit R-pairs with different first components. For the rest of this section, 
we fix the first and second associated R-pairs, and denote them by $(\tilde{\sigma},\sigma)$ and~$(\tilde{\tau}, \tau)$, respectively.
Note that $(\tilde{\tau}, \tau):=(0,0)$ if $m=0$.
\begin{remark} \label{RE:nonzero}
The entry $\sigma$ in the first associated R-pair $(\tilde{\sigma}, \sigma)$ is nonzero by Lemma \ref{LM:lead}.
\end{remark}

With $\lambda t^\prime = \cR_{a_m}(\tilde{\sigma})+ \sigma$ and $b_{m-1} \lambda^\prime +  a_{m-1} \lambda = 
\cR_{a_m}(\tilde{\tau})+\tau$, we rewrite \eqref{EQ:image} as 
\begin{equation} \label{EQ:pairs}
\cP_h(\lambda  t^i) \equiv  \cR_{a_m}\left(i \tilde{\sigma} + \tilde{\tau} \right) \cdot t^{m+i-1}  + (i \sigma + \tau) \cdot t^{m+i+1} \mod  F[t]_{<m+i-1}.
\end{equation}

In order to construct a basis for $\cP_h^{-1}(I_h)$, we first define a sequence $\{p_i\}_{i \in \bN_0}$ in $F[t]$ such that $\deg(p_i) =i$, $\lc(p_i)=\lambda$ and
$p_i \in \cP_h^{-1}(I_h)$.

For $i=0$, we let $p_0=\lambda$. Then $\cP_h(p_0) \in F[t]_{<m}$ by \eqref{EQ:pred}.
Since $F[t]_{<m} \subset U_h$, we have $\cP_h(p_0) \in I_h$ 
and, consequently, $p_0 \in \cP_h^{-1}(I_h)$.

Assume that $i>0$. It follows from \eqref{EQ:pairs} and  $\cL=\cR_{a_m}$ that 
$$\cP_h(\lambda t^i) \equiv  \cL(i \tilde{\sigma} + \tilde{\tau}) \cdot t^{m+i-1} + (i \sigma + \tau) \cdot t^{m+i-1}  \mod  F[t]_{<m+i-1},$$
which, together with \eqref{EQ:pred}, implies that there exists a polynomial $g_i \in F[t]_{<m+i-1}$ such that
$$\cP_h(\lambda t^i) =  \cP_h\left((i \tilde{\sigma} + \tilde{\tau}) \cdot t^{i-1} \right) + (i \sigma+  \tau) \cdot  t^{m+i-1} + g_i.$$
Let $(q_i, r_i)$ be the auxiliary pair of $g_i$ obtained from Algorithm \ref{ALG:paux}.
Then $\deg(q_i)<i-1$, $\deg(r_i) < m+i-1$ by Proposition \ref{PROP:paux}. Moreover, 
$$\cP_h(\lambda t^i) =  \cP_h\left((i \tilde{\sigma} + \tilde{\tau}) \cdot t^{i-1} \right) + (i \sigma+ \tau) \cdot t^{m+i-1} + \cP_h(q_i)+r_i.$$
Moving the images under $\cP_h$ in the above equality to the left-hand side, we arrive at  
\[ \cP_h \left(\lambda t^i - (i \tilde{\sigma} + \tilde{\tau}) \cdot t^{i-1} -q_i \right) =  (i \sigma+ \tau) \cdot t^{m+i-1} +r_i. \]
Set
\begin{equation} \label{EQ:prel1}
p_i := \lambda t^i - (i \tilde{\sigma} + \tilde{\tau}) \cdot t^{i-1}  - q_i.
\end{equation}
Then $\deg(p_i)=i$ and
\begin{equation} \label{EQ:prel2}
\cP_h(p_i) = (i \sigma+  \tau) \cdot t^{m+i-1} + r_i.
\end{equation}
Since $\sigma, \tau \in \im(\Phi_{a_m})$ and $r_i \in U_h$, we see that $\cP_h(p_i) \in U_h$, that is,
$p_i \in \cP_h^{-1}(I_h)$ for all $i \in \bN$.
\begin{lemma} \label{LM:preimage}
Let $p_0= \lambda$ and $p_i$ be defined by \eqref{EQ:prel1} for all $i \in \bN$.
Then $\{p_i\}_{i \in \bN_0}$ is a basis of~$\cP_h^{-1}(I_h)$.
\end{lemma}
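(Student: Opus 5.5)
We first check that $\cP_h^{-1}(I_h)$ is a $C$-subspace of $F[t]$, since it is the preimage of the $C$-subspace $I_h$ under the $C$-linear map $\cP_h$. The construction preceding the lemma shows $p_i \in \cP_h^{-1}(I_h)$ for all $i \in \bN_0$. For $i=0$ this holds because $\cP_h(\lambda) \in F[t]_{<m} \subset U_h$ by \eqref{EQ:pred}. For $i \ge 1$ it follows from \eqref{EQ:prel2}, since $\sigma, \tau \in \im(\Phi_{a_m})$ and $r_i \in U_h$. Moreover, $\deg(p_i)=i$ and $\lc(p_i)=\lambda \neq 0$, so the $p_i$ have pairwise distinct degrees and are therefore linearly independent over $C$, and even over $F$. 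It remains to prove that they span.

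For spanning we argue by induction on the degree. Let $f \in \cP_h^{-1}(I_h)$ be nonzero with $d=\deg(f)$, so that $\cP_h(f) \in U_h$. By Lemma \ref{LM:pinter}, $\lc(f) \in \ker(\cL)^\times$. By Proposition \ref{PROP:pinter} we are in the case $\nu_\infty(h)\ge 0$, so $\cL=\cR_{a_m}$. Since $\ker(\cL)$ is the kernel of a first-order Risch operator on $F$, whose constants are $C$, we have $\dim_C \ker(\cL)\le 1$. (If $\mu_1,\mu_2$ are nonzero solutions, then $(\mu_1/\mu_2)'=0$.) Hence $\lc(f)=c\lambda$ for some $c\in C^\times$. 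Put $\tilde f := f - c\,p_d$. Since $\cP_h^{-1}(I_h)$ is a $C$-space, $\tilde f$ lies in it, and $\deg(\tilde f)<d$ because the leading terms cancel. By induction (the base case $\tilde f=0$ being trivial), $\tilde f$ is a finite $C$-combination of $p_0,\dots,p_{d-1}$, and hence so is $f$.

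The only point needing care is the claim $\dim_C\ker(\cL)\le 1$ together with the use of Lemma \ref{LM:pinter}. The latter needs $\cP_h(f)\in U_h$, and this is exactly the defining membership condition. The dimension claim reduces to the standard quotient argument above, using that $C$ is the constant field of $F$. No infinite-sum issues arise, because every element of $F[t]$ has finite degree and the descent terminates after at most $d+1$ steps. Therefore $\{p_i\}_{i\in\bN_0}$ is a basis of $\cP_h^{-1}(I_h)$.
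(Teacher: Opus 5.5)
Your proof is correct and follows the paper's argument essentially step for step. Membership together with the distinct degrees gives independence; Lemma~\ref{LM:pinter} and $\dim_C\ker(\cL)\le 1$ give $\lc(f)=c\lambda$; and subtracting $c\,p_d$ lowers the degree for the induction. Your explicit checks that $\nu_\infty(h)\ge 0$ and that $\dim_C\ker(\cR_{a_m})\le 1$ (via the constant quotient $\mu_1/\mu_2$) fill in facts the paper uses without re-deriving.
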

\begin{proof}
We have seen that $\{p_i\}_{i \in \bN_0} \subset \cP_h^{-1}(I_h)$.
Since $\deg(p_i)=i$ for all $i \in \bN_0$, the set $\{p_i\}_{i \in \bN_0}$ is $C$-linearly independent.
Taking $f \in F[t]$ such that $\cP_h(f) \in U_h$, we see that $\lc(f) \in \ker(\cL)$ by Lemma \ref{LM:pinter}.
Then $\lc(f)=c \lambda$ for some $c \in C$ by  $\lambda \in \ker(\cL)^\times$ and $\dim_C \left(\ker(\cL)\right)=1$. 

Let $d = \deg(f)$.
Then $f - c p_d \in \cP_h^{-1}(I_h)$ with $\deg(f-cp_d) < d$.
Thus,  $f$ is a $C$-linear combination of $p_d,$  \ldots, $p_0$ by a straightforward induction on $d$.
\end{proof}
We call $\{p_i\}_{i \in \bN_0}$ the {\em standard basis} of $\cP_h^{-1}(I_h)$, where $p_0=\lambda$ and $p_i$ is given in \eqref{EQ:prel1} for $i \in \bN$.
This basis is unique, because the type of $I_h$ and the two associated R-pairs are all fixed, and every auxiliary pair
$(q_i, r_i)$ is computed by Algorithm \ref{ALG:paux}.

Let us summarize the notation introduced so far in the following table.
\begin{center}
	\begin{tabular}{c|c|c|c|c} 
		 & Type & 1st R-pair & 2nd R-pair  & Standard basis \\ \hline
      Expression   & $\lambda$       &  $(\tilde{\sigma}, \sigma)$ & $(\tilde{\tau}, \tau)$ &  $p_0 =\lambda$, $p_i$ in \eqref{EQ:prel1}, $i \in \bN$ \\ \hline
      Property & $\lambda \in \ker(\cL)^\times$ & 
      $\lambda t^\prime = \cR_{a_m}(\tilde{\sigma})+ \sigma$ & 
      $ \begin{array}{l}
      b_{m-1} \lambda^\prime +  a_{m-1} \lambda \\ 
      = 
\cR_{a_m}(\tilde{\tau})+\tau 
\end{array}$ &  A basis of $\cP_h^{-1}(I_h)$ \\ 
	\end{tabular}
         \captionof{table}{Notational summary in Section \ref{SUBSUBSECT:standard}} 
         \label{tab:not}
\end{center}

\subsubsection{Basis of $I_h$}

We transform the standard basis derived above into a basis of $I_h$ using the companion operator~$\cP_h$ as 
formulated below.
\begin{prop} \label{PROP:image}
With the notation given in Table \ref{tab:not}, we further let 
$r_i$ be given in \eqref{EQ:prel2}.
\begin{itemize}
\item[(i)] If $h \in F$, then $\{ \cP_h(p_i) \}_{i \in \bN}$ is a basis of $I_h$, and $\cP_h(p_i) = i \sigma t^{i-1}+ r_i$ with degree $i-1$.
\item[(ii)] If $h \in F(t) \setminus F$,  then $\{ \cP_h(p_i) \}_{i \in \bN_0}$ is a basis of $I_h$, $\cP_h(p_0) \in F[t]_{<m}^\times$,
and, for all $i \in \bN$,
$$\cP_h(p_{i}) = (i\sigma + \tau) t^{m+i-1}+ r_{i}.$$
\end{itemize}
\end{prop}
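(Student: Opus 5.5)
Since $\{p_i\}_{i\in\bN_0}$ is a basis of $\cP_h^{-1}(I_h)$ by Lemma \ref{LM:preimage}, the images $\{\cP_h(p_i)\}_{i\in\bN_0}$ span $I_h = \cP_h(\cP_h^{-1}(I_h))$. The plan is to decide in each case which images are zero and to show that the remaining ones are $C$-linearly independent. The explicit shapes come from \eqref{EQ:prel2} together with the degree bounds $\deg(r_i)<m+i-1$ from Proposition \ref{PROP:paux}.

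\emph{Case (ii), $h\in F(t)\setminus F$.} Lemma \ref{LM:inj} says that $\cP_h$ is injective. Hence $\{\cP_h(p_i)\}_{i\in\bN_0}$ is the image of a basis under an injective linear map, and so it is a basis of $I_h$. For $p_0=\lambda$, \eqref{EQ:pred} and $\cL(\lambda)=0$ give $\cP_h(p_0)\in F[t]_{<m}$, and injectivity gives $\cP_h(p_0)\neq 0$. The formula for $i\in\bN$ is \eqref{EQ:prel2} verbatim.

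\emph{Case (i), $h\in F$.} Here $m=0$, $b=1$, $a=a_m=h$, and $(\tilde\tau,\tau)=(0,0)$. First, $\cP_h(p_0)=\cP_h(\lambda)=\cL(\lambda)=0$, so $p_0$ contributes nothing and $\{\cP_h(p_i)\}_{i\in\bN}$ already spans $I_h$. By \eqref{EQ:prel2}, $\cP_h(p_i)=i\sigma t^{i-1}+r_i$ with $\deg(r_i)<i-1$. Remark \ref{RE:nonzero} gives $\sigma\neq 0$, and $i\neq 0$ in characteristic zero, so $\deg\cP_h(p_i)=i-1$ exactly. These degrees are pairwise distinct, which gives $C$-linear independence, and therefore a basis.

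The only delicate point is the exact degree in case (i), namely that the leading coefficient $i\sigma$ does not vanish. This rests on Lemma \ref{LM:lead} via Remark \ref{RE:nonzero}. I would also check that for $m=0$ the bound $\deg(r_i)<i-1$ is indeed what Proposition \ref{PROP:paux} gives when applied to $g_i\in F[t]_{<i-1}$; this holds because $\deg(r_i)\le\deg(g_i)$.
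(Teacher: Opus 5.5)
Your proposal is correct and follows essentially the same route as the paper. In (ii) you combine the injectivity of $\cP_h$ (Lemma \ref{LM:inj}) with Lemma \ref{LM:preimage}. In (i) you use $\cP_h(p_0)=\cL(\lambda)=0$ and the exact degree $i-1$, which comes from $\sigma\neq 0$ (Remark \ref{RE:nonzero}) and $\deg(r_i)<i-1$. Your explicit appeal to injectivity for $\cP_h(p_0)\neq 0$ in (ii) makes precise a step the paper leaves implicit.
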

\begin{proof} 
(i) Since $h \in F$, the restriction  of $\cP_h$ to $F$ is equal to $\cL$. Then $\cP_h(p_0)=0$ by $p_0= \lambda$.
Hence, $\{ \cP_h(p_i) \}_{i \in \bN}$ spans $I_h$ over $C$ by Lemma \ref{LM:preimage}.
Moreover, $h \in F$ implies that $m=0$ and $(\tilde{\tau}, \tau)=(0, 0)$.
It follows from \eqref{EQ:prel2} that $\cP_h(p_i) = i \sigma t^{i-1}+ r_i.$
This image is of degree $i-1$ by Remark \ref{RE:nonzero}
and $\deg(r_i)<i-1$.
Thus,
$\{ \cP_h(p_i) \}_{i \in \bN}$ is $C$-linearly independent, and then it is a basis of $I_h$.

(ii) Since $h \in F(t) \setminus F$, the companion operator $\cP_h$ is injective by Lemma \ref{LM:inj}, which, together with  Lemma \ref{LM:preimage}, implies that $\{ \cP_h(p_{i}) \}_{i \in \bN_0}$ is  a basis
of $I_h$. Note that $\cP_h(p_0)=\cP_h(\lambda)$, which is a nonzero polynomial of degree less than $m$ by $\lambda \in \ker(\cL)$ and \eqref{EQ:pred}.
The rest follows by \eqref{EQ:prel2}. 
\end{proof}
The following example rediscovers the complete reduction in
Example \ref{EX:rational}.
\begin{example} \label{EX:cr-rational}
Let $F=C$ and $t=x$ with $x^\prime=1$. Then $F(t)=C(x)$. 
 Set $h=0$. Then the auxiliary subspace $U_0$ is equal to $C[x]$ by Example \ref{EX:paux}.
 The calculations carried out 
for constructing the standard basis of $\cP_0^{-1}(I_0)$ lead to the following table:
\begin{center}
	\begin{tabular}{c|c|c|c|c} 
		 Type $\lambda$ & 1st R-pair $(\tilde{\sigma},\sigma)$ & 2nd R-pair $(\tilde{\tau}, \tau)$ & $p_i$, $i \in \bN_0$ &
$\cP_0(p_i)$, $i \in \bN$ \\ \hline
                $1$       &  $(0,1)$ & $(0,0)$ & $x^i$  & $i x^{i-1}$ \\ 
	\end{tabular}
\end{center}
Then $I_0 = C[x]$ by Proposition \ref{PROP:image} (i). Consequently, 
$C[x] = \im(\cP_0)$. Hence, $\{0\}$ is the only complement of $\im(\cP_0)$ in $C[x]$.
By Proposition \ref{PROP:red}, $C(x) = C(x)^\prime \oplus S_x$.
\end{example}

\subsubsection{Echelon sequences}
We construct echelon sequences of $I_h$ according to whether $h \in F$ or not.
We refer to Definition \ref{DEF:eseq} for the definition of echelon sequences.

Let $\Theta$ be an effective basis of $F$ (see Definition \ref{DEF:eff}). Then 
\begin{equation} \label{EQ:eeff}
\bfOmega := \{ \theta t^i \mid \theta \in \Theta, i \in \bN_0\}
\end{equation} is an effective basis of $F[t]$
by \cite[Remark 2.7]{DGLL2025}.

Recall that $(\tilde{\sigma}, \sigma)$ and $(\tilde{\tau}, \tau)$ are the first and second R-pairs associated 
to $(F[t], \cP_h)$, respectively. 
In the rest of this section, we fix an element $\theta_\sigma \in \Theta$ with $\theta_\sigma^*(\sigma) \neq 0$.
By Remark \ref{RE:nonzero}, such an element exists. It will be used for describing most of the pivots 
in an echelon sequence.

The next corollary  is immediate from Proposition \ref{PROP:image} (i).
\begin{cor} \label{COR:m=0}
With the notation in Table \ref{tab:not},
we further let  $h \in F$. 
Then 
$$\left\{ \left(p_i, \cP_h(p_i), \theta_\sigma t^{i-1} \right) \right\}_{i \in \bN}$$ is an echelon sequence of $I_h$
with respect to $(F[t], \cP_h)$ and $\bfOmega$.
\end{cor}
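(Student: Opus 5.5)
We have to check Definition \ref{DEF:eseq}: $\{\cP_h(p_i)\}_{i\in\bN}$ must be an echelon basis of $I_h$ with pivots $\theta_\sigma t^{i-1}$. That these images form a basis of $I_h$ is Proposition \ref{PROP:image} (i). Each $p_i$ lies in $F[t]$, and $\cP_h(p_i)\in I_h\subset\im(\cP_h)$, so the triples have the correct shape. What remains is the two pivot conditions for each $i\in\bN$.

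First, we need $(\theta_\sigma t^{i-1})^*(\cP_h(p_i))\neq 0$. By Proposition \ref{PROP:image} (i), $\cP_h(p_i)=i\sigma t^{i-1}+r_i$ with $\deg(r_i)<i-1$. In the effective basis $\bfOmega$ of \eqref{EQ:eeff}, the coordinate of $\theta t^{j}$ in a polynomial $\sum_j c_j t^j$ (with $c_j \in F$) is $\theta^*(c_j)$. Hence
\[
(\theta_\sigma t^{i-1})^*(\cP_h(p_i)) = \theta_\sigma^*(i\sigma)=i\,\theta_\sigma^*(\sigma).
\]
This is nonzero because $i\ge1$, the characteristic is zero, and $\theta_\sigma$ was chosen with $\theta_\sigma^*(\sigma)\neq0$ (possible by Remark \ref{RE:nonzero}).

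Second, for $j<i$ we need $(\theta_\sigma t^{i-1})^*(\cP_h(p_j))=0$. Proposition \ref{PROP:image} (i) gives $\deg\cP_h(p_j)=j-1<i-1$, so the coefficient of $t^{i-1}$ in $\cP_h(p_j)$ vanishes and so does this coordinate.

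These two conditions are the definition of an echelon basis with the stated pivots, and the corollary follows. There is no genuine obstacle. The one point needing care is the identification of the dual functional $(\theta t^{k})^*$ on $F[t]$ with "take the $t^k$-coefficient, then apply $\theta^*$". This identification follows directly from how $\bfOmega$ is built from $\Theta$.
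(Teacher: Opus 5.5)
Your proof is correct and follows the paper's approach: the paper simply declares the corollary immediate from Proposition~\ref{PROP:image}~(i). You have written out that deduction, namely $\cP_h(p_i)=i\sigma t^{i-1}+r_i$ with $\deg(r_i)<i-1$, $\theta_\sigma^*(\sigma)\neq 0$, and the degrees $i-1$ strictly increasing.
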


When $h \in F(t) \setminus F$, constructing an echelon sequence of $I_h$ with respect to $(F[t], \cP_h)$ and $\bfOmega$
requires a careful and tedious case study,
because the elements of  $\{ \cP_h(p_{i}) \}_{i \in \bN_0}$  in Proposition \ref{PROP:image} (ii)
do not necessarily have distinct degrees. 
A detailed case study is carried out in \cite[Section 4.1]{BCCLX2013} for the special case 
in which $F=C$, $t=x$ and $^\prime=d/dx$.
\begin{cor} \label{COR:pivot0}
With the notation given in Table \ref{tab:not},
we further let $h \in F(t) \setminus F$ with degree $m$.
Assume that $d_0$ and $l_0$ are the degree and leading coefficient of $\cP_h(p_0)$, respectively, and set $\theta_0$ to be an element of $\Theta$ such that $\theta_0^*(l_0) \neq 0$.
\begin{itemize}
\item[(i)] If $\theta_\sigma^*(i\sigma+ \tau) \neq 0$ for all $i \in \bN$, then
\[ \left(p_0, \cP_h(p_0), \theta_0 t^{d_0} \right),
\left(p_1, \cP_h(p_1), \theta_\sigma t^{m} \right),
\left(p_2, \cP_h(p_2), \theta_\sigma t^{m+1} \right),
\ldots,
\left(p_i, \cP_h(p_i), \theta_\sigma t^{m+i-1} \right), \ldots \]
is an echelon sequence of $I_h$ with respect to $(F[t], \cP_h)$ and $\bfOmega$,
which is illustrated as
\begin{center}
	\begin{tabular}{crcc} 
preimages &  \hspace{0.5cm}  basis  elements  & degrees of basis elements  &  pivots  \\ \vspace{0.1cm}
  \vdots          & \vdots \hspace{0.8cm}  &  \vdots  &  \vdots\\ \vspace{0.2cm}
$p_i$ &   $\stackrel{\cP_h(p_i)}{\text{\rule{2cm}{0.8pt}}}$ & $m+i-1$ &  $\theta_\sigma t^{m+i-1}$ \\ \vspace{0.2cm}
    \vdots           & \vdots \hspace{0.8cm} &  \vdots       &  \vdots  \\  \vspace{0.1cm}
$p_1$ &  $\stackrel{\cP_h(p_1)}{\text{\rule{1.5cm}{0.8pt}}}$ &  $m$ &  $\theta_\sigma t^{m}$ \\ \vspace{0.1cm}
$p_0$ &  $\stackrel{\cP_h(p_0)}{\text{\rule{1cm}{0.8pt}}}$ &    \hspace{0.8cm} $d_0$ with $d_0<m$ &  $ \theta_0 t^{d_0}$          
	\end{tabular}
\end{center}
where the length of a line segment corresponds to the degree of an element in the basis, and so does it in the next two
pictures.
\item[(ii)]  If $\theta_\sigma^*(j \sigma + \tau)=0$ for some $j \in \bN$ but $j \sigma + \tau \neq 0$, then
there exists an element $\theta \in \Theta$ with $\lc\left(\cP_h(p_j)\right) \notin \ker(\theta^*)$ such that 
\[ \left(p_0, \cP_h(p_0), \theta_0 t^{d_0} \right),
\left(p_1, \cP_h(p_1), \theta_\sigma t^{m} \right),
\ldots,  \left(p_{j-1}, \cP_h(p_{j-1}), \theta_\sigma t^{m+j-2} \right), \]
\[ \left(p_{j}, \cP_h(p_{j}), \theta t^{m+j-1} \right), \left(p_{j+1}, \cP_h(p_{j+1}), \theta_\sigma t^{m+j} \right),
\ldots,
\left(p_i, \cP_h(p_i), \theta_\sigma t^{m+i-1} \right), \ldots \]
is an echelon sequence of $I_h$ with respect to $(F[t], \cP_h)$ and $\bfOmega$, which is illustrated as
\begin{center}
	\begin{tabular}{crcc} 
preimages &  \hspace{0.5cm}  basis elements  \hspace{0.5cm} & degrees of basis elements  & pivots  \\ \vspace{0.1cm}
   \vdots         & \vdots \hspace{1.6cm} & \vdots    & \vdots \\  \vspace{0.2cm}
$p_i$ &   $\stackrel{\cP_h(p_i)}{\text{\rule{3.5cm}{0.8pt}}}$ & $m+i-1$ & $\theta_\sigma t^{m+i-1}$ \\ \vspace{0.2cm}
   \vdots          & \vdots \hspace{1.6cm} & \vdots     &  \vdots \\  \vspace{0.2cm}
$p_{j+1}$ &   $\stackrel{\cP_h(p_{j+1})}{\text{\rule{3cm}{0.8pt}}}$ & $m+j$  & $\theta_\sigma t^{m+j}$ \\ \vspace{0.1cm}
$p_j$ &   $\stackrel{\cP_h(p_j)}{\text{\rule{2.5cm}{0.8pt}}}$ & $m+j-1$  & $\theta t^{m+j-1}$ \\ \vspace{0.1cm}
$p_{j-1}$ &   $\stackrel{\cP_h(p_{j-1})}{\text{\rule{2cm}{0.8pt}}}$ & $m+j-2$ &  $\theta_\sigma t^{m+j-2}$ \\ \vspace{0.2cm}
     \vdots        & \vdots \hspace{1.6cm}       &  \vdots & \vdots \\ \vspace{0.2cm}
$p_1$ &  $\stackrel{\cP_h(p_1)}{\text{\rule{1.5cm}{0.8pt}}}$ & $m$ &  $\theta_\sigma t^{m}$ \\ \vspace{0.1cm}
$p_0$ &  $\stackrel{\cP_h(p_0)}{\text{\rule{1cm}{0.8pt}}}$ &  \hspace{0.8cm} $d_0$ with $d_0<m$  & $\theta_0 t^{d_0}$          
	\end{tabular}
\end{center}
\item[(iii)] If $j \sigma + \tau=0$ for some $j \in \bN$,  then
there exist $q \in F[t]^\times$, $d \in \bN_0$, and $\theta \in \Theta$ with $\lc\left(\cP_h(q)\right) \notin \ker(\theta^*)$ such that
\[ \left(q, \cP_h(q), \theta t^{d} \right), \left(p_0, \cP_h(p_0), \theta_0 t^{d_0} \right),
\left(p_1, \cP_h(p_1), \theta_\sigma t^{m} \right),
\ldots, \left(p_{j-1}, \cP_h(p_{j-1}), \theta_\sigma t^{m+j-2} \right), \]
\[
\left(p_{j+1}, \cP_h(p_{j+1}), \theta_\sigma t^{m+j} \right),
\ldots,
\left(p_i, \cP_h(p_i), \theta_\sigma t^{m+i-1} \right), \ldots \]
is an echelon sequence of $I_h$ with respect to $(F[t], \cP_h)$ and $\bfOmega$, which is illustrated as
\begin{center}
	\begin{tabular}{crcc} 
preimages &  \hspace{0.5cm} basis elements \hspace{0.5cm} &  degrees of basis elements   &  pivots  \\ \vspace{0.1cm}
   \vdots         & \vdots \hspace{1.6cm}  & \vdots   &   \vdots \\ \vspace{0.2cm}
$p_i$ &   $\stackrel{\cP_h(p_i)}{\text{\rule{3.5cm}{0.8pt}}}$ &  $m+i-1$ & $\theta_\sigma t^{m+i-1}$ \\ \vspace{0.2cm}
     \vdots        & \vdots \hspace{1.6cm}      & \vdots & \vdots\\  \vspace{0.2cm}
$p_{j+1}$ &   $\stackrel{\cP_h(p_{j+1})}{\text{\rule{3cm}{0.8pt}}}$ & $m+j$  &$\theta_\sigma t^{m+j}$ \\ \vspace{0.1cm}
$p_{j-1}$ &   $\stackrel{\cP_h(p_{j-1})}{\text{\rule{2cm}{0.8pt}}}$ & $m+j-2$ & $\theta_\sigma t^{m+j-2}$ \\ \vspace{0.1cm}
    \vdots         & \vdots \hspace{1.6cm}       & \vdots & \vdots \\ \vspace{0.1cm}
$p_1$ &  $\stackrel{\cP_h(p_1)}{\text{\rule{1.5cm}{0.8pt}}}$ & $m$ & $\theta_\sigma t^{m}$ \\ \vspace{0.1cm}
$p_0$ &  $\stackrel{\cP_h(p_0)}{\text{\rule{1cm}{0.8pt}}}$ & \hspace{0.8cm} $d_0$ with $d_0<m$ &   $\theta_0 t^{d_0}$ \\ \vspace{0.1cm}
$q$ &   $\stackrel{\cP_h(q)}{\text{\rule{1.8cm}{0.8pt}}}$ &  \hspace{1.6cm} $d$ with $d<m+j-1$  & $\theta t^d$           
	\end{tabular}
\end{center}
\end{itemize}
\end{cor}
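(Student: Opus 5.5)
The plan is to read off everything from Proposition~\ref{PROP:image}~(ii). That result says $\{\cP_h(p_i)\}_{i\in\bN_0}$ is a basis of $I_h$, that $\cP_h(p_0)\in F[t]_{<m}^\times$ has degree $d_0<m$, and that $\cP_h(p_i)=(i\sigma+\tau)t^{m+i-1}+r_i$ with $\deg(r_i)<m+i-1$ for $i\in\bN$. The whole argument then rests on one observation about $\bfOmega$: if $u\in F[t]$ has $\deg(u)<k$, then $u\in\ker\big((\theta t^k)^*\big)$ for every $\theta\in\Theta$. Consequently, a sequence of basis elements with strictly increasing degrees, each given a pivot $\theta t^{\deg}$ whose $\theta^*$ does not vanish on its leading coefficient, is automatically an echelon sequence in the sense of Definition~\ref{DEF:eseq}.

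For (i), the degrees $d_0<m<m+1<\cdots$ are strictly increasing. The pivots are $\theta_0 t^{d_0}$ for $p_0$ and $\theta_\sigma t^{m+i-1}$ for $p_i$, and they satisfy the non-vanishing requirement because $\theta_0^*(l_0)\neq 0$ and $\theta_\sigma^*(i\sigma+\tau)\neq0$. The observation above then gives (i) immediately.

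For (ii), we first note that $\theta_\sigma^*(i\sigma+\tau)=i\,\theta_\sigma^*(\sigma)+\theta_\sigma^*(\tau)$ is a nonconstant affine function of $i$, since $\theta_\sigma^*(\sigma)\ne0$ by Remark~\ref{RE:nonzero}. Hence $j$ is the unique index at which it vanishes. Since $j\sigma+\tau\neq0$, the element $\cP_h(p_j)$ still has degree $m+j-1$, and effectiveness of $\Theta$ lets us pick $\theta$ with $\theta^*(j\sigma+\tau)\neq0$. All degrees remain strictly increasing, so the same argument applies.

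Case (iii) is the real work. Again $j$ is unique, and by the affine-function argument $\theta_\sigma^*(i\sigma+\tau)=(i-j)\theta_\sigma^*(\sigma)\neq0$ for all $i\ne j$. The trouble is that $\cP_h(p_j)=r_j$ now has degree below $m+j-1$, so it may collide with the degrees of $\cP_h(p_0),\ldots,\cP_h(p_{j-1})$. I would proceed as follows.
\begin{enumerate}
\item The finite family $(p_0,\cP_h(p_0),\theta_0t^{d_0}),\ldots,(p_{j-1},\cP_h(p_{j-1}),\theta_\sigma t^{m+j-2})$ is an echelon sequence by the degree argument above.
\item Apply Lemma~\ref{LM:elim} to $x:=r_j$ with this finite sequence. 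This yields $v\in\spa_C\{p_0,\ldots,p_{j-1}\}$ such that $r_j-\cP_h(v)$ lies in the kernels of all these pivots.
\item Set $q:=p_j-v$. Then $q\neq0$ since $\deg(q)=j$, hence $\cP_h(q)\neq0$ by Lemma~\ref{LM:inj}, and $\cP_h(q)=r_j-\cP_h(v)\in I_h$.
\item Let $d:=\deg\cP_h(q)$. Then $d\le\max(\deg r_j,m+j-2)<m+j-1$. Choose $\theta$ with $\theta^*(\lc(\cP_h(q)))\ne0$.
\end{enumerate}
Replacing $p_j$ by $q$ is a unitriangular change of basis, so $\{\cP_h(q)\}\cup\{\cP_h(p_i)\}_{i\ne j}$ is still a basis of $I_h$.

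It remains to check the echelon conditions with $q$ placed first, followed by $p_0,p_1,\ldots$ with $p_j$ omitted. For $q$ itself only the non-vanishing of its pivot is required. Every later pivot must kill $\cP_h(q)$. For the pivots of $p_0,\ldots,p_{j-1}$ this is exactly what the elimination in step~2 guaranteed. For $i>j$ the pivot is at degree $m+i-1>d$, so it kills $\cP_h(q)$ by degree. Among the $p_i$ themselves the degree argument works as before.

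The main obstacle is this elimination step: arranging that the single low-degree element $\cP_h(q)$ is invisible to all other pivots, while its own pivot stays valid.
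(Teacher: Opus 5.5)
Your proposal is correct and follows the paper's proof essentially step for step. Parts (i) and (ii) are read off from Proposition~\ref{PROP:image}~(ii) using strictly increasing degrees and the uniqueness of $j$. Part (iii) applies Lemma~\ref{LM:elim} to the echelon sequence of $\spa_C\{\cP_h(p_0),\ldots,\cP_h(p_{j-1})\}$ and replaces $p_j$ by $q=p_j-v$.

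Your explicit checks of the bound $d<m+j-1$ and of the echelon conditions for the reordered sequence fill in details the paper leaves implicit. One small slip: $\theta_\sigma^*(\sigma)\neq 0$ holds by the choice of $\theta_\sigma$, and Remark~\ref{RE:nonzero} only guarantees that such a $\theta_\sigma$ exists.
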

\begin{proof}
(i) 
The conclusion is immediate from Proposition \ref{PROP:image} (ii).

(ii) Since $j \sigma + \tau \neq 0$,  we see that $\cP_h(p_j)$ is of degree $m+j-1$ by Proposition \ref{PROP:image} (ii).
Note that $j$ is the unique integer such that $\theta_\sigma^*(j \sigma + \tau)=0$. So
$\theta_\sigma^*(i \sigma + \tau) \neq 0$ for all $i \in \bN$ with $i \neq j$.
Hence, the degree of $\cP_h(p_i)$ is $m+i-1$ for all $i \in \bN$ with $i \neq j$.
Let $\theta$ be an element of $\Theta$ such that the leading coefficient of $\cP_h(p_j)$ does not belong to $\ker(\theta^*)$. We see that 
(ii) holds.

(iii) 
Since $j \sigma+ \tau=0$, the degree of $\cP_h(p_j)$ is less than $m+j-1$ by Proposition \ref{PROP:image} (ii).
Let~$U$ be the subspace spanned by $\cP_h(p_0)$, $\cP_h(p_1),$ \ldots, $\cP_h(p_{j-1})$ over $C$. Then
\[  \left(p_0, \cP_h(p_0), \theta_0 t^{d_0}\right), \,
\left(p_1, \cP_h(p_1), \theta_\sigma t^{m} \right), \,
\ldots, \, \left(p_{j-1}, \cP_h(p_{j-1}), \theta_\sigma t^{m+j-2} \right) \]
is an echelon sequence of $U$ with respect to $(F[t], \, \cP_h)$ and $\Omega$.

Set $\bfomega_0 := \theta_0 t^{d_0}$, $\bfomega_1 := \theta_\sigma t^{m}$, \ldots, $\bfomega_{j-1} := \theta_\sigma t^{m+j-2}$.
It follows from Lemma \ref{LM:elim} that there exists 
a polynomial $r \in F[t]$ such that 
$\cP_h(p_j) - \cP_h(r) \in \cap_{k=0}^{j-1}\ker\left(\bfomega_k^*\right)$ and $\cP_h(r) \in U.$

Set $q := p_j-r$. Then $\cP_h(q)$ lies in both $I_h$ and $\cap_{k=0}^{j-1}\ker\left(\bfomega_k^*\right)$.
Hence,  
$$\cP_h(q), \cP_h(p_0), \cP_h(p_1), \ldots, \cP_h(p_{j-1}), \cP_h(p_{j+1}), \cP_h(p_{j+2}), \ldots$$
form a basis of $I_h$, because $\{ \cP_h(p_0), \cP_h(p_1), \ldots, \cP_h(p_{j-1}), \cP_h(p_j), \cP_h(p_{j+1}), \cP_h(p_{j+2}), \ldots \}$
is a basis of $I_h$.
Let $d$ be the degree of $\cP_h(q)$, and choose an element $\theta \in \Theta$ such that 
the leading coefficient of $\cP_h(q)$ does not belong to $\ker(\theta^*)$. We conclude that (iii) holds. 
\end{proof}

\subsubsection{Induced complements}
Let $E := \left\{  (f_i, \cP_h(f_i), \bfomega_i) \right\}_{i \in \bN}$ be an echelon sequence of $I_h$
given in Corollaries \ref{COR:m=0} or \ref{COR:pivot0}. It follows from  Lemma \ref{LM:modulo} that the subspace
\begin{equation} \label{EQ:complement}
U_h \cap \big( \mathop{\cap}\limits_{i \in \bN} \ker(\bfomega_i^*) \big)
\end{equation} 
is a complement of $\im(\cP_h)$ in $F[t]$, which is
called the {\em complement of $\im(\cP_h)$ induced by $E$}.

Every complement of $\im(\cP_h)$ in $F[t]$ is finite-dimensional over $C$ if $F=C$ 
by Example \ref{EX:cr-rational} and \cite[Lemma 8]{BCCLX2013}.
However, such a subspace is of  infinite dimension if $\dim_C(F)=\infty$.
\begin{example} \label{EX:prim}
Let $F=C(x)$, $t=\log(x)$ and $d/dx$ be the derivation on $F(t)$. Let
$$h = \dfrac{2x^2-2t}{xt^2+x},$$
which is $t$-normalized by a direct verification.

We construct an echelon sequence of $I_h$ with respect to $(F[t], \cP_h)$ and $\bfOmega$ in \eqref{EQ:eeff}. 
By  Convention \ref{CON:next2}, we have the following table:
\begin{center}
	\begin{tabular}{c|c|c|c|c|c|c}
$a$ &$b$ & $m$ & $a_m$ & $b_m$ & $a_{m-1}$ & $b_{m-1}$  \\ \hline
$- 2 x^{-1} t + 2 x $ & $t^2+1$ & $2$ & $0$ & $1$& $-2 x^{-1}$ & $0$  
	\end{tabular}
\end{center}
Since $a_m=0$, 
we may take the Hermite-Ostrogradsky reduction to be the complete reduction $\Phi_0$ for~$(F, \, ^\prime)$ 
in Hypothesis \ref{HYP:ind}.
The auxiliary subspace $U_h$ associated to~$(F[t], \cP_h)$ is equal to
\begin{equation} \label{EQ:pauxex}
F[t]_{<2} + \sum_{i \in \bN_0}  S_x \cdot  t^{i+2}
\end{equation} by 
Definition \ref{DEF:paux} and $\im(\Phi_0)=S_x$.

Note that $\cL$ in \eqref{EQ:lc} is equal to $d/dx$ by $a_2=0$ and $b_2=1$. Then we can set the type $\lambda$ of $I_h$ to be $1$. 
By Algorithm \ref{ALG:paux},  the first associated R-pair 
is $(\tilde{\sigma}, \sigma)=\left(0,x^{-1} \right)$  and 
the second R-pair 
is $(\tilde{\tau}, \tau)=\left(0, -2x^{-1} \right)$. 
Since $\sigma=x^{-1}$,  we choose $\theta_\sigma$ given before Corollary \ref{COR:m=0} to be $x^{-1}$. 
Corollary \ref{COR:pivot0} (iii) is needed to construct an echelon sequence $E$, because $2 \sigma+ \tau=0$.

We first compute $p_0$ and $p_1$ in the standard basis $\{p_i\}_{i \in \bN_0}$ of $\cP_h^{-1}(I_h)$.
Since the type $\lambda$ is equal to $1$, 
we have $p_0=1$, and then  $\cP_h(p_0) = -2x^{-1} t+2x$. Hence, $d_0=1$. Choose $\theta_0=x^{-1}$.
Then 
the second member of $E$ is
\begin{equation} \label{EQ:p0}
(p_0, \, \cP_h(p_0), \, \theta_0 t^{d_0}) = \left(1, \, -2x^{-1} t+2x, \,  x^{-1} t \right).
\end{equation}
Note that $\theta_0=\theta_\sigma$ is just a coincidence.
By \eqref{EQ:prel1} and $\tilde{\sigma}=\tilde{\tau}=0$, we see that $p_1 = t$. Thus, 
the third member of $E$ is 
\begin{equation} \label{EQ:p1}
(p_1, \, \cP_h(p_1), \, \theta_\sigma t^2) =\left(t, \, -x^{-1}t^2+2xt+x^{-1}, \, x^{-1} t^2 \right).
\end{equation}

Since $2 \sigma+ \tau=0$, we compute the first member of $E$ by $p_2$. 
By \eqref{EQ:prel1} and Algorithm \ref{ALG:paux}, we find  $p_2 = t^2 - x^2$. Then
$\cP_h(p_2) = (2x+2x^{-1})t-2x^3-2x.$
Eliminating $x^{-1}t$ from $\cP_h(p_2)$ by $\cP_h(p_0)$ yields  
$\left(q, \cP_h(q)\right) = \left(t^2-x^2+1, 2xt - 2x^3 \right).$
Choose $\theta=x$. Then the first member is 
\begin{equation} \label{EQ:q}
(q, \, \cP_h(q), \, \theta t ) =\left(t^2-x^2+1, \, 2xt - 2x^3, \, x t \right).
\end{equation}
For all integers $k>3$, the $k$th member of $E$ can be found by \eqref{EQ:prel1}, \eqref{EQ:prel2} and Algorithm \ref{ALG:paux}.
Its pivot is equal to $x^{-1} t^{k}$ by Corollary \ref{COR:pivot0} (iii). 
We summarize the results in the following table:

\vspace{-0.1cm}
\begin{center}
	\begin{tabular}{c|c|c|c|c|c|c|c} 
$\lambda$ &$(\tilde{\sigma}, \sigma)$ & $(\tilde{\tau}, \tau)$ & $\theta_\sigma$ & $j$ & $(q,  \cP_h(q),  \theta t )$ & $(p_0, \cP_h(p_0),  \theta_0 t^{d_0}) $ 
&  $(p_1,  \cP_h(p_1),  \theta_\sigma t^2)$  \\ \hline
$1$ & $\left(0,x^{-1}\right)$ & $\left(0, -2x^{-1}\right)$ & $x^{-1}$ & $2$ & r.h.s.\ of \eqref{EQ:q} & r.h.s.\ of \eqref{EQ:p0} & r.h.s.\ of \eqref{EQ:p1} 
	\end{tabular}
\end{center}
where $j$ stands for the positive integer with $j\sigma+ \tau=0$. 

By \eqref{EQ:complement} and \eqref{EQ:pauxex}, the complement $W$ of $\im(\cP_h)$ induced by $E$ is
equal to 
$$\big(C(x)[t]_{<2} + \sum_{i \in \bN_0}  S_x \cdot  t^{i+2} \big) \cap  \big( \mathop{\cap}\limits_{i \in \bN} \ker(\bfomega_i^*) \big),$$
where $\bfomega_1 = xt$, $\bfomega_2 = x^{-1}t$, $\bfomega_3 = x^{-1} t^2$ as listed in the table above,
and $\bfomega_k = x^{-1} t^{k}$ for $k>3$ following our conclusion about the pivot of the $k$th member in~$E$. Any element $r \in W$
can be written in  the form 
$$r = r_0 + r_1 t + r_2 t^2 + r_3 t^3 + r_4 t^4 + \cdots + r_k t^k$$
with $r_i \in C(x)$ satisfying the following conditions:
\begin{itemize}
\item in the irreducible partial fraction decomposition of~$r_1$, the polynomial part has no term of degree~$1$, and
the proper part contains no term of the form $c/x$ for any $c \in C^\times$;
\item $r_3$ is an arbitrary element of $S_x$;
\item $r_2, r_4, \ldots, r_k$ all belong to $S_x$ and have no pole at $x=0$.  
\end{itemize} 
\end{example}
We have made three choices for pivots in the above example.  Together with \eqref{EQ:prel1} and Algorithm \ref{ALG:paux},
these choices uniquely determine an echelon sequence of $I_h$. 
To systematically study such constructions, we introduce the notion of initial sequences in Definition \ref{DEF:iseq}
of the appendix. Every initial sequence is finite and 
carries all the information required for the 
unique construction of an echelon sequence. 
For instance, the final table in the above example encodes an initial sequence.
Moreover, Algorithm \ref{ALG:iseq} computes
initial sequences, and Algorithm \ref{ALG:pseq} generates arbitrary many members of the echelon sequence 
determined by a given initial sequence. 

With Remark \ref{RE:algcr}, we are ready to present the main result of this section.
\begin{thm} \label{TH:pcr}
Assume that Hypothesis \ref{HYP:ind} holds and $t$ is primitive. Then  
there exists an algorithm that, for every $t$-normalized element $h$ in $F(t)$,  
constructs a complete reduction $\widetilde{\Phi}_h$ for~$(F(t), \, \cR_h)$.
\end{thm}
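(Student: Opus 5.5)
By Remark \ref{RE:algcr}, it suffices to fix a complement of $\im(\cR_h)$ in $F(t)$ and give an algorithm that, for every $f\in F(t)$, computes $g\in F(t)$ and a component in that complement with $f=\cR_h(g)+(\text{component})$. Proposition \ref{PROP:red} and Remark \ref{RE:red} reduce this to the companion operator. We will fix a complement $W$ of $\im(\cP_h)$ in $F\langle t\rangle=F[t]$ and compute R-pairs with respect to the projection $\Psi_{h,W}$. Remark \ref{RE:red} then gives R-pairs with respect to $\widetilde{\Phi}_h:=\Phi_{h,W}$: first apply Algorithm {\sc GKSR} of \cite{CDGL2025} to obtain $(g,r,s)$ as in \eqref{EQ:pre2}, then reduce $r$.

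\emph{Choice of $W$.} First compute $a,b,m$ and $\nu_\infty(h)$, and decide whether $\ker(\cL)=\{0\}$. If $\nu_\infty(h)<0$, this holds by Remark \ref{RE:pred} (i). Otherwise $\cL=\cR_{a_m}$, and we must decide whether $\lambda'+a_m\lambda=0$ has a nonzero solution in $F$. This is the logarithmic-derivative recognition problem for $-a_m$ in $F$, which is solvable by the algorithms of \cite[\S 4.3.1]{RaabThesis}. There are two cases.
\begin{itemize}
\item If $\ker(\cL)=\{0\}$, then $I_h=\{0\}$ by Proposition \ref{PROP:pinter}, so $F[t]=\im(\cP_h)\oplus U_h$ by Proposition \ref{PROP:paux}. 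Take $W:=U_h$. An auxiliary pair computed by Algorithm \ref{ALG:paux} is then already an R-pair with respect to $\Psi_{h,W}$.
\item Otherwise, fix the type $\lambda$. Compute the two associated R-pairs $(\tilde\sigma,\sigma)$ and $(\tilde\tau,\tau)$ with the algorithm of Hypothesis \ref{HYP:ind}, choose $\theta_\sigma$ using effectiveness of $\Theta$, and obtain the standard basis $p_i$ through \eqref{EQ:prel1} and Algorithm \ref{ALG:paux}. Corollary \ref{COR:m=0} when $h\in F$, or Corollary \ref{COR:pivot0} otherwise, yields an echelon sequence $E$ of $I_h$ with respect to $\bfOmega$. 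Take $W$ to be the induced complement \eqref{EQ:complement}, which is a complement of $\im(\cP_h)$ by Lemma \ref{LM:modulo}.
\end{itemize}

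\emph{Computing R-pairs in the second case.} Given $r\in F[t]$, first compute an auxiliary pair $(g_1,r_1)$ with $r_1\in U_h$. Then apply Lemma \ref{LM:elim} to $r_1$ and $E$ to get $v$ with $\cP_h(v)\in I_h\subset U_h$ and $r_1-\cP_h(v)\in\bigcap_i\ker(\bfomega_i^*)$. The remainder $r_1-\cP_h(v)$ lies in $U_h\cap\bigcap_i\ker(\bfomega_i^*)=W$, so $(g_1+v,\,r_1-\cP_h(v))$ is the required R-pair.

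\emph{Main obstacle: effectiveness.} The echelon sequence is infinite, so the algorithm must produce only the finitely many members it needs. Pivots of the $p_i$ for $i\ge1$ sit at degree $m+i-1$ (Proposition \ref{PROP:image}), except for the one exceptional index $j$ in Corollary \ref{COR:pivot0}. Hence only members with pivot degree at most $\deg(r_1)\le\deg(r)$ can be involved, plus the finitely many exceptional members. The exceptional index is the unique $j\in\bN$ with $\theta_\sigma^*(j\sigma+\tau)=0$. It is found by solving a linear equation in $C$ and checking whether the solution is a positive integer; then we test whether $j\sigma+\tau=0$ to decide between cases (ii) and (iii). All of this is computable because $\Theta$ is effective. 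Members with larger index are generated on demand by \eqref{EQ:prel1}, \eqref{EQ:prel2} and Algorithm \ref{ALG:paux}, as done later by Algorithms \ref{ALG:iseq} and \ref{ALG:pseq}. With this finite truncation, the elimination in Lemma \ref{LM:elim} terminates, and the theorem follows from Remark \ref{RE:algcr}.
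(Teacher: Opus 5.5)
Your proposal is correct and follows essentially the same route as the paper. You take $W=U_h$ when the type of $I_h$ is zero, and otherwise the complement induced by the echelon sequence of Corollary~\ref{COR:m=0} or~\ref{COR:pivot0}; you then obtain R-pairs from an auxiliary pair (Algorithm~\ref{ALG:paux}) followed by the elimination of Lemma~\ref{LM:elim}, lifted to $F(t)$ via Proposition~\ref{PROP:red} and Remark~\ref{RE:red}. Your extra discussion of generating only finitely many members of the infinite echelon sequence is exactly what the paper defers to Algorithms~\ref{ALG:iseq}, \ref{ALG:pseq} and~\ref{ALG:pproj} in the appendix.
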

\begin{proof}
Recall that $U_h$ is the auxiliary subspace  associated to $(F[t], \, \cP_h)$, and that $\lambda$ is the type of $I_h$.
If $\lambda=0$, then we can set $W := U_h$ by Proposition \ref{PROP:paux}.

Otherwise, let 
$E = \{(q_i, Q_i, \bfomega_i)\}_{i \in \bN}$
be the echelon sequence of~$I_h$ obtained from Corollary \ref{COR:m=0} or Corollary \ref{COR:pivot0}.
Set $W$ to be the complement of $\im(\cP_h)$ induced by $E$. Then $F[t] = \im(\cP_h) \oplus W$.
It follows from Proposition \ref{PROP:red} that
\begin{equation} \label{EQ:directh}
F(t) = \im(\cR_h) \oplus \left(b^{-1} \cdot W \oplus S_{t, h}\right),
\end{equation}
where $b=\den(h)$ by Convention \ref{CON:next2}, and $S_{t,h}$ is given in \eqref{EQ:subspaceh}.
Therefore, the projection from $F(t)$ to $\left(b^{-1} \cdot W \oplus S_{t, h}\right)$ with respect to \eqref{EQ:directh}
gives a complete reduction $\widetilde{\Phi}_h$ for $(F(t), \cR_h)$.

It remains to describe an algorithm for computing R-pairs with respect to $\widetilde{\Phi}_h$.
Let $\widetilde{\Psi}_h$ be the projection from $F[t]$ to $W$ with respect to $F[t] = \im(\cP_h) \oplus W$.
By Remark \ref{RE:red}, it suffices to develop an algorithm for computing R-pairs with respect to $\widetilde{\Psi}_h$.

Take any $p\in F[t]$. An auxiliary pair $(q, r)$ of $p$ can be found by Algorithm \ref{ALG:paux}.
If $\lambda=0$, then $(q,r)$ is an R-pair with respect to $\widetilde{\Psi}_h$.
Otherwise, one can compute $\tilde{q} \in F[t]$ and $\tilde{r} \in I_h$ such that
$\tilde{r} = \cP_h(\tilde{q})$ and $r-\tilde{r} \in \cap_{ i \in \bN} \ker(\bfomega_i^*)$ by Lemma \ref{LM:elim}.
On the other hand, $r-\tilde{r} \in U_h$ by $r, \tilde{r} \in U_h$.
Thus, $r-\tilde{r} \in W$ by \eqref{EQ:complement}.
Consequently,
$\left(q + \tilde{q},  \, r - \tilde{r}\right)$ is an R-pair of $p$ with respect to $\widetilde{\Psi}_h$.
\end{proof}

The next corollary 
relates the complete reduction $\widetilde{\Phi}_h$ for $(F(t),  \cR_h)$ from the preceding theorem to 
the complete reduction $\Phi_h$ for $(F,\cR_h)$ specified in Hypothesis \ref{HYP:ind} whenever $h\in F$.
It is useful 
for elementary integration in Section \ref{SECT:app}.

\begin{cor} \label{COR:prestrict}
Let $h \in F$, $\Phi_h$ be the complete reduction for $(F, \, \cR_h)$ in
Hypothesis~\ref{HYP:ind}, and~$\widetilde{\Phi}_h$ be the complete reduction for $(F(t), \, \cR_h)$ in 
Theorem~\ref{TH:pcr}. The following two assertions hold.
\begin{itemize}
\item[(i)] For all $f \in F$,
$\widetilde{\Phi}_h(f) = \Phi_h(f)$ if $I_h=\{0\}$, and, otherwise,  there exists a constant $c \in C$ such that $\widetilde{\Phi}_h(f)=\Phi_h(f)+c\Phi_h\left(\lambda t^\prime\right),$
where $\lambda$ is the type of $I_h$.
\item[(ii)] $\im(\widetilde{\Phi}_0 ) \subset \im(\Phi_0) \oplus \left( t \cdot F[t] \right) \oplus  S_t.$
\end{itemize}
\end{cor}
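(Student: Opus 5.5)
The plan is to specialize the construction in the proof of Theorem~\ref{TH:pcr} to $h \in F$, where everything becomes explicit. For $h\in F$, Convention~\ref{CON:next2} gives $m=0$, $a=a_m=h$ and $b=b_m=1$. Hence $S_{t,h}=S_t$, the companion operator $\cP_h$ coincides with $\cR_h$ on $F[t]$, and \eqref{EQ:directh} reads $F(t)=\im(\cR_h)\oplus(W\oplus S_t)$. Here $W$ is the complement of $\im(\cP_h)$ in $F[t]$ chosen in that proof. Since $\widetilde{\Phi}_h$ is the projection for this direct sum, its value on $f$ is determined once we exhibit any decomposition $f=\cR_h(g)+w$ with $w\in W\oplus S_t$. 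So for (i) I will not trace Algorithm {\sc GKSR}; I will simply produce such a decomposition for $f\in F$.

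For (i), fix $f\in F$ and let $(\alpha,\Phi_h(f))$ be an R-pair with respect to $\Phi_h$. Then $f=\cR_h(\alpha)+\Phi_h(f)$. By Definition~\ref{DEF:paux} with $m=0$, we have $U_h=\sum_{i\in\bN_0}\im(\Phi_h)\cdot t^i$, so $\Phi_h(f)\in U_h$.

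If $I_h=\{0\}$, then $W=U_h$, so $\Phi_h(f)\in W$ and $\widetilde{\Phi}_h(f)=\Phi_h(f)$.

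Otherwise, by Corollary~\ref{COR:m=0}, $W=U_h\cap\bigcap_{i\in\bN}\ker\big((\theta_\sigma t^{i-1})^*\big)$. Among these pivots, only $\theta_\sigma=\theta_\sigma t^0$ has degree $0$. By Proposition~\ref{PROP:image}~(i), $\cP_h(p_1)=\sigma+r_1$ with $\deg(r_1)<0$, so $\cP_h(p_1)=\sigma=\Phi_h(\lambda t')$, where $\sigma\neq0$ by Remark~\ref{RE:nonzero}. Put $c:=-\theta_\sigma^*(\Phi_h(f))/\theta_\sigma^*(\sigma)$ and $w:=\Phi_h(f)+c\sigma$. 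Then $w\in F\cap U_h$, and $\theta_\sigma^*(w)=0$. The remaining pivots $\theta_\sigma t^{i-1}$ with $i\ge2$ have positive degree, so they vanish on $w\in F$ automatically. Thus $w\in W$. Moreover $f-w=\cR_h(\alpha)-c\,\cR_h(p_1)=\cR_h(\alpha-cp_1)\in\im(\cR_h)$. By uniqueness of the projection, $\widetilde{\Phi}_h(f)=\Phi_h(f)+c\,\Phi_h(\lambda t')$.

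For (ii), take $h=0$, which is trivially $t$-normalized. Then $\im(\widetilde{\Phi}_0)=W\oplus S_t$ with $W\subset U_0=\sum_{i\in\bN_0}\im(\Phi_0)\cdot t^i$. Splitting off the constant term in $t$ gives $U_0\subset\im(\Phi_0)+t\cdot F[t]$. It remains to check directness of $\im(\Phi_0)\oplus(t\cdot F[t])\oplus S_t$. The first two summands meet trivially by degree in $t$, and $S_t\cap F[t]=\{0\}$ because elements of $S_t$ are $t$-proper. This yields the inclusion.

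The only delicate point is in (i): we must confirm that the degree-$0$ pivot is exactly $\theta_\sigma$ and that $\cP_h(p_1)$ equals $\sigma$ itself rather than merely having leading coefficient $\sigma$. The vanishing of $r_1$ for degree reasons handles this.
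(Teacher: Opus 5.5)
Your proof is correct and follows the paper's argument. In (i) you use the R-pair $f=\cR_h(\alpha)+\Phi_h(f)$ and note that $W=U_h$ when $I_h=\{0\}$; otherwise you eliminate the single degree-zero pivot $\theta_\sigma$ using the first echelon member $(\lambda t-\tilde{\sigma},\sigma,\theta_\sigma)$. Part (ii) is the paper's inclusion $W\subset U_0\subset\im(\Phi_0)+t\cdot F[t]$, and your explicit check that $r_1=0$ (so $\cP_h(p_1)=\sigma$ exactly) supplies a detail the paper states without comment.
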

\begin{proof} 
 (i) 
Since $h \in F$, we see that $b=1$ by Convention \ref{CON:next2}. Then $S_{t, h}$ in \eqref{EQ:directh} is equal to $S_t$. 
Hence, equation \eqref{EQ:directh} becomes
\begin{equation} \label{EQ:direct0}
F(t) = \im(\cR_h) \oplus \left( W \oplus S_t\right).
\end{equation} 
Moreover, $\widetilde{\Phi}_h$ is the projection from $F(t)$ to $W \oplus S_t$ with respect to this direct sum.

For every $f\in F$, it follows from $h \in F$ and Corollary  \ref{COR:pinfield} 
that
\begin{equation} \label{EQ:Fdecomp}
f=\cR_h(\alpha)+\Phi_h(f)
\end{equation}
for some $\alpha \in F$, and that $(\alpha, \Phi_h(f))$ is an auxiliary pair of $f$ with respect to $(F[t], \, \cP_h)$.

If $I_h = \{0\}$, then $W$ in \eqref{EQ:direct0} is equal to $U_h$. Since $\cR_h(\alpha) \in \im(\cR_h)$ and $\Phi_h(f) \in U_h$,
we conclude by \eqref{EQ:Fdecomp} 
that $\widetilde{\Phi}_h(f) = \Phi_h(f)$. 

Assume that $I_h \neq \{0\}$. Then $W$ in \eqref{EQ:direct0} is the complement induced by the echelon sequence~$E$ of $I_h$ given by  Corollary \ref{COR:m=0}.
It follows from $\Phi_h(f) \in U_h$ and   \eqref{EQ:Fdecomp} 
that $\widetilde{\Phi}_h(f)$ is the projection of~$\Phi_h(f)$ on $W$ with respect to 
$U_h = I_h \oplus W$.  

Let $(\tilde{\sigma}, \sigma)$ be the first R-pair associated to $(F[t], \cP_h)$.
Then  $\sigma=\Phi_h( \lambda t^\prime)$ by Definition \ref{DEF:pairs}.
Note that the first member of $E$ is $( \lambda t-\tilde{\sigma}, \Phi_h( \lambda t^\prime), \theta_\sigma)$. 
Set $c = -\theta_\sigma^*\left( \Phi_h( \lambda t^\prime) \right)^{-1} \cdot \theta_\sigma^* \left(  \Phi_h(f) \right)$.
We find that $\Phi_h(f) + c \Phi_h( \lambda t^\prime)$ belongs to $W$, because other pivots in $E$ are of positive degree in~$t$. Therefore,
$\widetilde{\Phi}_h(f)=\Phi_h(f)+c\Phi_h\left(\lambda t^\prime\right).$ 

(ii) By Definition \ref{DEF:paux}, the auxiliary subspace associated to $(F[t], \cP_0)$ is 
$$  U_0 = \sum_{i \in \bN_0} \im(\Phi_0) \cdot t^i = \im(\Phi_0) +   \sum_{i \in \bN} \im(\Phi_0) \cdot t^i,$$ 
which is contained in $\im(\Phi_0) + \left(t \cdot F[t]\right)$. 
 Since $W \subset U_0$, we have $W \subset \im(\Phi_0) + \left(t \cdot F[t]\right)$. 
 Thus, $\im(\widetilde{\Phi}_0 ) \subset \im(\Phi_0) + \left(t \cdot F[t] \right)  + S_t$.
 The sum of the three subspaces is  evidently direct.  
\end{proof}

\section{The hyperexponential  case} \label{SECT:hyperexp}

This section is organized in the same way as in Section \ref{SECT:prim}.
In terms of content, there are two substantive differences. 
First, the definition of auxiliary subspaces
becomes more involved due to the presence of negative powers. Second, the construction of echelon sequences 
turns out to be considerably  simpler,
since the intersection of $\im(\cP_h)$ with the auxiliary subspace associated to $(F[t,t^{-1}], \, \cP_h)$ is of dimension at most two over $C$
(see Proposition \ref{PROP:hbasis}).

The results in this section generalize those in \cite[Chapter 4]{GaoThesis} for the exponential case and those in \cite[Section 4.2]{CDGL2025} for the rationally hyperexponential case (see \cite[Definition 4.2]{CDGL2025}).

We keep the assumptions and notation in 
Convention \ref{CON:next2}.  Then $t$ is hyperexponential and~$F\langle  t \rangle=F[t, t^{-1}]$.
The tail coefficients $a_0$ and $b_0$ will play a significant role in reducing negative powers of $t$.

To describe the head and tail terms of an element in $\im(\cP_h)$,
we define two $C$-linear operators as follows.
For every $d \in \bZ$, let
\begin{equation} \label{EQ:coeff}
 \begin{array}{cccc}
     \cH_d: & F & \rightarrow & F \\
              & \alpha &  \mapsto & b_m  \alpha^\prime+\left(a_m + d  b_m   \dfrac{ t^\prime}{t} \right) \alpha 
   \end{array}  
\, \text{and} \,
   \begin{array}{cccc}
     \cT_d: & F & \rightarrow & F \\
              & \alpha &  \mapsto & b_0 \alpha^\prime+\left(a_0 + d  b_0   \dfrac{ t^\prime}{t} \right) \alpha.
   \end{array}
\end{equation}
For every $f \in F[t, t^{-1}]^\times $ with $k=\hdeg(f),$ $f_k = \hc(f)$, $l=\tdeg(f)$ and $f_l = \tc(f)$, we have 
\begin{equation} \label{EQ:hred}
\cP_h( f ) =   \cH_k(f_k) t^{m+k} + \cdots + \cT_l(f_l)t^{l}
\end{equation}
by Lemma \ref{LM:fact} (ii).
It follows that $\cP_h(f^+) \in F[t]$ and $\cP_h(f^-) \in F[t]_{<m} + t^{-1} F[t^{-1}]$.

\begin{remark} \label{RE:hred}
By Convention \ref{CON:next2} and \eqref{EQ:coeff}, the following assertions hold for all $k,l \in \bZ$.
\begin{itemize}
\item[(i)] If $\nu_\infty(h)<0$, then $a_m \neq 0$, $b_m = 0$ and $\ker(\cH_k) = \{0\}$.
\item[(ii)] If $\nu_\infty(h) \ge 0$, then $b_m=1$ and $\cH_k = \cR_{\lambda_k}$, where
$\lambda_k = a_m + k \dfrac{t^\prime}{t}.$
 \item[(iii)] If $\nu_t(h)<0$, then $a_0 \neq 0$, $b_0 = 0$, and $\ker(\cT_l) = \{0\}$.
\item[(iv)] If $\nu_t(h) \ge 0$, then $b_0 \neq 0$ and $\cT_l = b_0 \cR_{\mu_l}$, where
$\mu_l = \dfrac{a_0}{b_0} + l \dfrac{t^\prime}{t}.$
\end{itemize}
The $\lambda_k$'s and $\mu_l$'s defined above will be used throughout  the rest of this section.
\end{remark}
In this section, we will mainly prove conclusions concerning tail degrees. 
For the head-degree case, the reasoning is analogous to the arguments developed in the previous section.
Actually,  the head-degree case is easier to handle than the tail-degree case because either $b_m=1$ or $b_m=0$.

\subsection{Auxiliary subspaces} \label{SUBSECT:haux}

As in Section \ref{SUBSECT:paux}, we aim to reduce all coefficients of elements in $F[t, t^{-1}]$ modulo $\im(\cP_h)$.

Let $f \in F[t,t^{-1}]$ with $k=\hdeg(f)$, $f_k=\hc(f)$, $l=\tdeg(f)$ and $f_l = \tc(f)$. Assume further that $k \ge m$ and $l<0$.
By arguments analogous to those at the beginning of Section \ref{SUBSECT:paux}, we obtain the following four congruences.
\begin{equation} \label{EQ:haux_case+}
f^+  \equiv \begin{cases}
 \cP_h\left(a_m^{-1} f_k t^{k-m}\right) \mod F[t]_{<k}  & \text{if $\nu_\infty(h)<0$},  \\ \\
\cP_h(\alpha t^{k-m}) + \beta t^k \mod F[t]_{<k} &  \text{if $\nu_\infty(h) \ge 0$}, 
\end{cases}
\end{equation}
where $(\alpha, \beta) \in F^2$ is an R-pair of $f_k$ with respect to $\Phi_{\lambda_{k-m}}$ given in Hypothesis \ref{HYP:ind}, 
and 
\begin{equation} \label{EQ:haux_case-}
f^-  \equiv  
\begin{cases}
\cP_h\left(a_0^{-1} f_l t^{l}\right)  \mod   F[t]_{<m} + \left(t^{-1} \cdot F[t^{-1}]\right)_{>l}  & \text{if $\nu_t(h)<0$}, \\ \\
\cP_h( \alpha t^l) + (b_0 \beta)  t^l   \mod F[t]_{<m} +  \left(t^{-1} \cdot F[t^{-1}]\right)_{>l} & \text{if $\nu_t(h) \ge 0$}, 
\end{cases}
\end{equation}
where $(\alpha, \beta) \in F^2$ is an R-pair of $b_0^{-1} f_l$ with respect to $\Phi_{\mu_l}$ given in Hypothesis \ref{HYP:ind}. 
Note that $\nu_t(h) < 0$ implies $a_0 \neq 0$ by Remark \ref{RE:hred} (iii) and  $\nu_t(h) \ge  0$ implies $b_0 \neq 0$ by Remark \ref{RE:hred} (iv).

The two congruences in \eqref{EQ:haux_case+} can be verified by the same reasoning used to establish \eqref{EQ:paux_case}. The first congruence in \eqref{EQ:haux_case-} holds
by \eqref{EQ:hred} and Remark \ref{RE:hred} (iii). To show the second congruence in \eqref{EQ:haux_case-}, 
we note that $b_0^{-1} f_l = \cR_{\mu_l}(\alpha)+ \beta$, which, together with $\cT_l= b_0 \cR_{\mu_l}$ given in Remark \ref{RE:hred} (iv),
implies that $f_l = \cT_l(\alpha) + b_0 \beta$.
So 
$f^- \equiv  \cT_l(\alpha)  t^l  + \left(b_0 \beta \right)  t^l \mod F[t]_{<m}+ \left(t^{-1} \cdot F[t^{-1}]\right)_{>l}$. 
The congruence follows from \eqref{EQ:hred}. 

The above deduction motivates the notion of auxiliary subspaces below.
\begin{define} \label{DEF:haux}
The auxiliary subspace associated to $(F[t,t^{-1}], \cP_h)$ is defined as 
$$V_h = V_h^{+} + F[t]_{<m} + V_h^-,$$
where
\[
V_h^{+}=
\left\{ \begin{array}{ll}
\{0\} & \text{if $\nu_\infty(h)<0$}, \\ \\
\displaystyle \sum_{i \ge m} \im  \left(\Phi_{\lambda_{i-m}}\right) \cdot t^i & \text{if $\nu_\infty(h) \ge 0$,}
\end{array} \right.\, {\text and}~V_h^{-} = \left\{
\begin{array}{ll}
\{0\} & \text{ if $\nu_t(h)<0$}, \\ \\
  \displaystyle \sum_{j < 0}   b_0 \cdot \im  \left(\Phi_{\mu_{j}}\right) \cdot  t^j
 & \text{if $\nu_t(h) \ge 0$}.
\end{array} \right.
\]
\end{define}

\begin{example} \label{EX:haux}
Let $F=C$ and $t^\prime/t=1$, that is, $t$ models $\exp(x)$. Let $h=0$. Convention \ref{CON:next2} and Remark \ref{RE:hred} lead to the following table:
\begin{center}
	\begin{tabular}{c|c|c|c|c|c|c} 
		 $m$ & $a_m$ & $a_0$ & $b_m$ & $b_0$ & $\lambda_k$, $k \ge 0$  &  $\mu_l$, $l<0$ \\ \hline
                $0$       &  $0$ & $0$ & $1$  & $1$ & $k$ & $l$ 
	\end{tabular}
\end{center}
From Example \ref{EX:choice0}, we obtain $V_h^{+}=C$ and $V_h^{-}=\{0\}$.
So $C$ is the auxiliary subspace~$V_h$ associated to~$(F[t,t^{-1}], \cP_h)$.
\end{example}
Analogous to Proposition \ref{PROP:paux} and Definition \ref{DEF:ppair}, we have
\begin{prop} \label{PROP:haux}
Let $f \in  F[t,t^{-1}]$,
$k=\hdeg(f)$ and $l = \tdeg(f)$. Then
\begin{itemize}
\item[(i)] $f^+ = \cP_h(p) + r$ for some $p \in F[t]$ with $\hdeg(p) \le k-m$ and $r \in V_h$,
\item[(ii)]  $f^- = \cP_h(q) + s$ for some $q \in t^{-1} \cdot F[t^{-1}]$ with $\tdeg(q) \ge l$ and $s \in V_h$.
\end{itemize}
Consequently, $F[t, t^{-1}] = \im(\cP_h) + V_h$.
\end{prop}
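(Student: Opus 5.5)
We argue both parts by induction, mirroring the proof of Proposition \ref{PROP:paux}, and use the congruences \eqref{EQ:haux_case+} and \eqref{EQ:haux_case-} as the reduction steps.

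For (i), we induct on $k=\hdeg(f^+)$. If $k<m$ (including $f^+=0$), then $f^+\in F[t]_{<m}\subset V_h$, and we take $p:=0$, $r:=f^+$. Assume $k\ge m$. If $\nu_\infty(h)<0$, the first congruence in \eqref{EQ:haux_case+} gives $f^+-\cP_h(a_m^{-1}f_kt^{k-m})\in F[t]_{<k}$. If $\nu_\infty(h)\ge 0$, let $(\alpha,\beta)$ be an R-pair of $f_k$ with respect to $\Phi_{\lambda_{k-m}}$. Then $f^+-\cP_h(\alpha t^{k-m})-\beta t^k\in F[t]_{<k}$, and $\beta t^k\in V_h^+$. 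In both cases, the polynomial $\cP_h(\cdot t^{k-m})$ lies in $F[t]$ by \eqref{EQ:hred}, since its argument is a polynomial. So the difference is a genuine polynomial of degree $<k$, to which the induction hypothesis applies. The preimage collected at each step has head degree $\le k-m$, so $\hdeg(p)\le k-m$.

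For (ii), we induct on $-l$, where $l=\tdeg(f^-)<0$; the case $f^-=0$ is trivial. The remainders now live in $F[t]_{<m}+\bigl(t^{-1}\cdot F[t^{-1}]\bigr)_{>l}$, so the induction must be set up on a slightly larger class. Define $G_l:=F[t]_{<m}+\bigl(t^{-1}\cdot F[t^{-1}]\bigr)_{\ge l}$ and prove the following: every $g\in G_l$ can be written as $\cP_h(q)+s$ with $q\in t^{-1}\cdot F[t^{-1}]$, $\tdeg(q)\ge l$, and $s\in V_h$. The component of $g$ in $F[t]_{<m}$ already lies in $V_h$, so we may assume $g=g^-$.
\begin{itemize}
\item If $\nu_t(h)<0$, the first congruence in \eqref{EQ:haux_case-} shows that $g^--\cP_h(a_0^{-1}\tc(g)t^{l})\in G_{l+1}$.
\item If $\nu_t(h)\ge 0$, take an R-pair $(\alpha,\beta)$ of $b_0^{-1}\tc(g)$ with respect to $\Phi_{\mu_l}$. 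The second congruence shows that $g^--\cP_h(\alpha t^l)-(b_0\beta)t^l\in G_{l+1}$, and $(b_0\beta)t^l\in V_h^-$.
\end{itemize}
The fact that $\cP_h(\gamma t^l)\in F[t]_{<m}+t^{-1}F[t^{-1}]$ for $l<0$ is the remark following \eqref{EQ:hred}. Its tail is $\cT_l(\gamma)t^l$, and the head degree is at most $m+l<m$. The base case is $G_0=F[t]_{<m}\subset V_h$. The tail degrees of the collected preimages are all $\ge l$, which gives $\tdeg(q)\ge l$.

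Finally, every $f\in F[t,t^{-1}]$ splits as $f=f^++f^-$. Adding the decompositions from (i) and (ii) gives $f=\cP_h(p+q)+(r+s)$ with $r+s\in V_h$, so $F[t,t^{-1}]=\im(\cP_h)+V_h$.

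The main obstacle is the bookkeeping in (ii). When $m>0$, a negative-degree term produces a positive part of degree $<m$, so one must check that the induction stays inside $G_l$ and that these low-degree polynomial pieces are absorbed by $F[t]_{<m}\subset V_h$ rather than re-entering (i).
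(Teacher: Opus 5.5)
Your proposal is correct and takes essentially the same route as the paper. Part (i) is the degree induction of Proposition \ref{PROP:paux} driven by \eqref{EQ:haux_case+}, and part (ii) is an induction on the tail degree driven by \eqref{EQ:haux_case-}, where the paper absorbs the low-degree polynomial pieces by working modulo $F[t]_{<m}\subset V_h$, exactly what your enlarged class $G_l$ makes explicit (one cosmetic fix: the coefficient you eliminate in (ii) should be the coefficient of $t^l$ in $g$, possibly zero, rather than $\tc(g)$ when $\tdeg(g^-)>l$).
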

\begin{proof}
(i) The proof is analogous to that of Proposition \ref{PROP:paux} (i).

(ii) If $f^-=0$, then $l \ge 0$.  It suffices to set $q:=0$ and $s:=0$. 
Assume that $l<0$ and the conclusion holds for all elements of  $\left(t^{-1} \cdot F[t^{-1}]\right)_{>l}$.

If $\nu_t(h)<0$, then  $f^- \equiv \cP_h\left(a_0^{-1} f_l t^{l}\right) + g \mod   F[t]_{<m}$ for some $g \in t^{-1} F[t^{-1}]_{>l}$
by the first congruence in \eqref{EQ:haux_case-}. The conclusion holds by an application of the induction hypothesis to $g$.

Otherwise, $\nu_t(h) \ge 0$.
Let $\left(\alpha, \beta \right)$ be an R-pair of $b_0^{-1} f_l$ with respect to $\Phi_{\mu_{l}}$.
It follows from the second congruence in \eqref{EQ:haux_case-}, $F[t]_{<m} \subset V_h$ and $(b_0 \beta) \cdot  t^l \in V_h^-$ 
that $f^- \equiv \cP_h(\alpha t^{l}) +g \mod  V_h$
for some $g  \in \left( t^{-1} \cdot F[t^{-1}]\right)_{>l}$.
The conclusion holds by the induction hypothesis.

By (i) and (ii), $f = \cP_h(p+q)+r+s$ and $r+s \in V_h$. 
Hence, $F[t, t^{-1}] = \im(\cP_h) + V_h$.
\end{proof}
\begin{define} \label{DEF:hpair}
Let $f, p, q, r$ and $s$ be as in the above proposition. We call $(p+q,r+s)$ an {\em auxiliary pair} of $f$ with respect to $(F[t,t^{-1}], \cP_h)$ or
an {\em auxiliary pair} of $f$ if $(F[t,t^{-1}], \cP_h)$ is clear from context.
\end{define}
Based on the congruences in \eqref{EQ:haux_case+}, \eqref{EQ:haux_case-}, 
and the proof of Proposition \ref{PROP:haux}, an algorithm is developed
for computing auxiliary pairs (see Algorithm \ref{ALG:haux} in the appendix).
\begin{cor} \label{COR:hinfield}
Let $h \in F$, and $\Phi_h$ be the complete reduction for $(F, \, \cR_h)$ in Hypothesis \ref{HYP:ind}. 
Then, for every element $f \in F$, an auxiliary pair of $f$ with respect to $\left(F[t,t^{-1}], \, \cP_h\right)$ is an R-pair of $f$
with respect to $\Phi_h$.
\end{cor}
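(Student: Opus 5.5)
The argument follows the proof of Corollary~\ref{COR:pinfield}, adjusted for Laurent polynomials. First record the data for $h \in F$ under Convention~\ref{CON:next2}: $a = h$, $b = 1$, $m = 0$, $a_m = a_0 = h$, $b_m = b_0 = 1$, so $\nu_\infty(h) \ge 0$ and $\nu_t(h) \ge 0$. By Remark~\ref{RE:hred}, $\lambda_k = h + k\,t^\prime/t$ and $\mu_l = h + l\,t^\prime/t$; in particular $\lambda_0 = h$. Definition~\ref{DEF:haux} then gives
\[ V_h = \sum_{i \ge 0} \im\left(\Phi_{\lambda_i}\right)\cdot t^i + \sum_{j<0} \im\left(\Phi_{\mu_j}\right)\cdot t^j, \]
since $F[t]_{<0} = \{0\}$. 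This sum is direct by degree, so $V_h \cap F = \im(\Phi_{\lambda_0}) = \im(\Phi_h)$.

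Next, let $f \in F$ with an auxiliary pair $(p+q, r+s)$ as in Proposition~\ref{PROP:haux}. Since $\hdeg(f) \le 0$ and $\tdeg(f) \ge 0$, we have $f^- = 0$. So we may take $q = 0$ and $s = 0$, as in the proof of part (ii). Also $\hdeg(p) \le k - m \le 0$ with $p \in F[t]$, so $p \in F$. Because $\cP_h(p) = p^\prime + hp \in F$, the remainder $r = f - \cP_h(p)$ lies in $F \cap V_h = \im(\Phi_h)$.

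For $p \in F$ and $h \in F$, $\cP_h(p) = \cR_h(p)$, so $f = \cR_h(p) + r$ with $r \in \im(\Phi_h)$. The direct sum $F = \im(\cR_h) \oplus \im(\Phi_h)$ from Hypothesis~\ref{HYP:ind} then forces $r = \Phi_h(f)$. Hence $(p, r)$ is an R-pair of $f$ with respect to $\Phi_h$, which is the claim.

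The only step that needs care is showing that both components of the auxiliary pair lie in $F$. The degree bound on $p$ from Proposition~\ref{PROP:haux}(i) handles the first component. For the second, the point is that no negative powers arise: $f^- = 0$, and the $V_h^-$ part of $V_h$ contributes nothing to $V_h \cap F$, because its terms all have negative degree in $t$.
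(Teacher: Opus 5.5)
Your proof is correct and is essentially the paper's argument. The paper only observes that $\lambda_0=h$ when $h\in F$ and defers to the proof of Corollary~\ref{COR:pinfield}, which is exactly the chain you spell out: $V_h\cap F=\im(\Phi_h)$, both components lie in $F$, $\cP_h=\cR_h$ on $F$, and the direct sum $F=\im(\cR_h)\oplus\im(\Phi_h)$ gives $r=\Phi_h(f)$.

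One wording fix: $q=0$ and $s=0$ are forced, not chosen. Since $\tdeg(q)\ge\tdeg(f)\ge 0$ and $q\in t^{-1}\cdot F[t^{-1}]$, only $q=0$ is possible. You need this because the claim is about every auxiliary pair. For example, for $h=t^\prime/t$ one has $\cP_h(t^{-1})=0$, so without the tail-degree bound a first component outside $F$ could occur.
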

\begin{proof} Note that $\lambda_0$ defined in Remark \ref{RE:hred} (ii) is equal to $h$ under the assumption $h\in F$.
The rest is similar to the proof of Corollary \ref{COR:pinfield}.
\end{proof}
\subsection{Intersecting $\im(\cP_h)$ with the associated auxiliary subspace} \label{SUBSECT:hinter}

In this subsection, $V_h$ stands for the auxiliary subspace associated to $(F[t,t^{-1}], \cP_h)$, and $J_h$ denotes 
$\im(\cP_h) \cap V_h$.
We will derive a necessary and sufficient condition on $J_h = \{0\}$.

The next two lemmas connect $J_h$ with the kernels of $\cH_d$ and $\cT_d$ in \eqref{EQ:coeff}.
We only prove the second assertion in each lemma, and the first assertion can be shown likewise.
\begin{lemma} \label{LM:nonzero2}
Let $f \in F[t,t^{-1}]^\times$, $k=\hdeg(f)$ and $l = \tdeg(f)$. Assume that $\cP_h(f) \in V_h$. Then
\begin{itemize}
\item[(i)] $\hc(f) \in \ker(\cH_k)$ if $k \ge 0$, and
\item[(ii)] $\tc(f) \in \ker(\cT_l)$  if $l < 0$.
\end{itemize}
\end{lemma}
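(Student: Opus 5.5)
We prove (ii); the argument for (i) is the head-degree analogue of Lemma \ref{LM:pinter}. Write $f_l=\tc(f)$ with $l<0$. By \eqref{EQ:hred}, the tail term of $\cP_h(f)$ is $\cT_l(f_l)t^l$, and all other terms have degree greater than $l$. Since $l<0$, the coefficient of $t^l$ in $\cP_h(f)$ is therefore exactly $\cT_l(f_l)$; if $\cT_l(f_l)=0$, this coefficient is simply zero. On the other hand, $\cP_h(f)\in V_h=V_h^++F[t]_{<m}+V_h^-$, and $V_h^+ + F[t]_{<m}\subset F[t]$ contributes nothing to negative powers of $t$. Hence the coefficient of $t^l$ in $\cP_h(f)$ equals the coefficient of $t^l$ in an element of $V_h^-$.

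We then split into two cases according to $\nu_t(h)$. If $\nu_t(h)<0$, then $V_h^-=\{0\}$, so $\cT_l(f_l)=0$ and we are done. (Remark \ref{RE:hred} (iii) then shows that $\ker(\cT_l)$ is trivial, so $f_l=0$. This contradicts $f\ne0$, so this case cannot actually occur with $l<0$, but the claimed membership holds regardless.)

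If $\nu_t(h)\ge 0$, then $b_0\neq 0$ and, by Definition \ref{DEF:haux}, the coefficient of $t^l$ of any element of $V_h^-$ lies in $b_0\cdot\im(\Phi_{\mu_l})$. Thus $\cT_l(f_l)=b_0\beta$ for some $\beta\in\im(\Phi_{\mu_l})$. By Remark \ref{RE:hred} (iv), $\cT_l=b_0\cR_{\mu_l}$, so $\cR_{\mu_l}(f_l)=\beta\in\im(\cR_{\mu_l})\cap\im(\Phi_{\mu_l})$. This intersection is $\{0\}$ by Hypothesis \ref{HYP:ind}, since $\Phi_{\mu_l}$ is a complete reduction for $(F,\cR_{\mu_l})$. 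Hence $\cR_{\mu_l}(f_l)=0$, that is, $\cT_l(f_l)=0$.

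The only point that needs care is the bookkeeping that the $t^l$-coefficient of an element of $V_h$ is governed solely by $V_h^-$ and lies in $b_0\cdot\im(\Phi_{\mu_l})$. This holds because the sum over $j<0$ in $V_h^-$ is graded by powers of $t$, and $F[t]_{<m}$ and $V_h^+$ contain only nonnegative powers. For (i), when $k\ge 0$ the top coefficient of $\cP_h(f)$ at $t^{m+k}$ is $\cH_k(f_k)$ (from \eqref{EQ:hred}), and degree $m+k\ge m$ is seen only by $V_h^+$. We then use Remark \ref{RE:hred} (i)/(ii) and $\im(\cR_{\lambda_k})\cap\im(\Phi_{\lambda_k})=\{0\}$ in the same way.
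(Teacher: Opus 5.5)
Your proof is correct and follows essentially the same route as the paper. You read off the $t^l$-coefficient $\cT_l(f_l)$ of $\cP_h(f)$ via \eqref{EQ:hred}, observe that it must come from $V_h^-$, and conclude with $\cT_l=b_0\cR_{\mu_l}$ and $\im(\cR_{\mu_l})\cap\im(\Phi_{\mu_l})=\{0\}$. The differences are cosmetic: the paper argues by contradiction (assuming $\cT_l(f_l)\neq 0$ forces $V_h^-\neq\{0\}$ and hence $\nu_t(h)\ge 0$), whereas you split directly on the sign of $\nu_t(h)$ and make the graded coefficient bookkeeping more explicit.
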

\begin{proof}
(ii) Assume that $l<0$ and let $f_l = \tc(f)$. Suppose that $\cT_l(f_l) \neq 0.$ Then
$\cT_l(f_l) t^l \in V_h^-$ by~$\cP_h(f) \in V_h$ and $l<0$.
In particular, $V_h^{-} \neq \{0\}$. So $\nu_t(h) \ge 0$ and $ \cT_l(f_l) \in b_0 \cdot \im(\Phi_{\mu_l})$ by Definition~\ref{DEF:haux}.
By Remark \ref{RE:hred} (iv), $\cT_l = b_0 \cR_{\mu_l}$. Thus,  
$\cR_{\mu_l} (f_l) \in  \im(\Phi_{\mu_l})$.
Since $\im(\cR_{\mu_l}) \cap \im(\Phi_{\mu_l})$ is equal to $\{0\},$  we derive that $\cT_{l}(f_l)=0$, a contradiction.
\end{proof}
\begin{lemma} \label{LM:nonzero1}
\begin{itemize}
\item[(i)] If $\ker(\cH_k) \neq \{0\}$ for some $k \in \bN_0$, then for all $\sigma \in \ker(\cH_k)^\times$, there exists $p \in F[t]$
with $\hdeg(p)=k$ and $\hc(p)=\sigma$  such that
$\cP_h(p) \in V_h.$
Moreover, for every $f \in F[t, t^{-1}]$
with $\cP_h(f) \in V_h$, there exists a  constant $c \in C$ such that $f - c p \in t^{-1} \cdot F[t^{-1} ]$. 
\item[(ii)] If $\ker(\cT_l) \neq \{0\}$ for some $l \in \bN^-$, then for all $\tau \in \ker(\cT_l)^\times$, there exists $q \in t^{-1} \cdot F[t^{-1}]$
with $\tdeg(q)=l$ and $\tc(q)=\tau$ such that
$\cP_h(q) \in V_h.$
Moreover, for every $f \in F[t, t^{-1}]$
with $\cP_h(f) \in V_h$, there exists a constant $c \in C$ such that $f - c q \in F[t]$.
\end{itemize}
\end{lemma}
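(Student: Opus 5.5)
Following the convention of the preceding lemma, I will prove only (ii); the head-degree assertion (i) is the same argument with $\cH_k$, $V_h^+$ and $F[t]$ in place of $\cT_l$, $V_h^-$ and $t^{-1}\cdot F[t^{-1}]$.

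\emph{Existence.} Fix $l\in\bN^-$ with $\ker(\cT_l)\neq\{0\}$ and $\tau\in\ker(\cT_l)^\times$. By Remark \ref{RE:hred} (iii), $\ker(\cT_l)\neq\{0\}$ forces $\nu_t(h)\ge 0$, so $b_0\neq 0$ and $\cT_l=b_0\cR_{\mu_l}$. By \eqref{EQ:hred}, $\cP_h(\tau t^l)$ lies in $F[t]_{<m}+\left(t^{-1}\cdot F[t^{-1}]\right)_{>l}$, because its $t^l$-coefficient is $\cT_l(\tau)=0$ and its head degree is at most $m+l<m$. Write $\cP_h(\tau t^l)=g_1+g_2$ with $g_1\in F[t]_{<m}\subset V_h$ and $g_2\in\left(t^{-1}\cdot F[t^{-1}]\right)_{>l}$. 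Applying Proposition \ref{PROP:haux} (ii) to $g_2$ gives $q_2\in t^{-1}\cdot F[t^{-1}]$ with $\tdeg(q_2)\ge \tdeg(g_2)>l$ and $g_2-\cP_h(q_2)\in V_h$. Set $q:=\tau t^l-q_2$. Then $q\in t^{-1}\cdot F[t^{-1}]$, $\tdeg(q)=l$, $\tc(q)=\tau$, and $\cP_h(q)=g_1+(g_2-\cP_h(q_2))\in V_h$.

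\emph{Uniqueness modulo $F[t]$.} Let $f\in F[t,t^{-1}]$ with $\cP_h(f)\in V_h$. I will show $f^-\in C\cdot q$ by descending induction on the tail degree.
\begin{enumerate}
\item First I check that $\cP_h(f^-)\in V_h$. Since $\cP_h(f^+)\in F[t]$, the negative-power part of $\cP_h(f)$ equals that of $\cP_h(f^-)$. By \eqref{EQ:hred}, $\cP_h(f^-)\in F[t]_{<m}+t^{-1}\cdot F[t^{-1}]$. Also $V_h$ splits as a direct sum of its pieces in $t^{\ge m}$, in $F[t]_{<m}$ and in $t^{<0}$. Comparing components shows that the negative part of $\cP_h(f^-)$ lies in $V_h^-$, and hence $\cP_h(f^-)\in V_h$.
\item Next I show $\tdeg(f^-)\ge l$. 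If $f^-\neq 0$, put $l'=\tdeg(f^-)$. Lemma \ref{LM:nonzero2} (ii) gives $\tc(f^-)\in\ker(\cT_{l'})^\times$, so $\cR_{\mu_{l'}}$ has nonzero kernel. If $l'\neq l$, then with $\tau\in\ker\cR_{\mu_l}^\times$ the quotient $\tc(f^-)/\tau=:\eta$ satisfies $\eta'/\eta=\mu_l-\mu_{l'}=(l-l')t'/t$. Consequently $\eta t^{l'-l}$ is a constant. This contradicts the algebraic independence of $t$ over $F$, i.e. regularity via \cite[Theorem 5.1.2]{BronsteinBook}. Hence $l'=l$, or else $f^-=0$.
\item Since $\dim_C\ker(\cR_{\mu_l})\le 1$ (a ratio of two solutions is a constant), we get $\tc(f^-)=c\tau$ for some $c\in C$. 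Then $f^- - cq$ has tail degree greater than $l$ and still satisfies $\cP_h(f^- - cq)\in V_h$.
\item Applying the previous two steps again, $f^- - cq$ has no negative part, so $f^- - cq=0$. Therefore $f-cq=f^+\in F[t]$.
\end{enumerate}

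The main obstacle is the first step, namely transferring membership in $V_h$ from $\cP_h(f)$ to $\cP_h(f^-)$. It needs the direct-sum splitting of $V_h$ by degree ranges, and it needs the observation that $\cP_h(f^+)$ contributes nothing in negative degrees. The argument in the second step excluding a second kernel degree is the only genuinely new input.
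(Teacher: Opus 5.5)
Your proof is correct and follows the paper's argument. Existence comes from the auxiliary reduction of $\cP_h(\tau t^l)$ via Proposition \ref{PROP:haux} (ii), and the ``moreover'' part comes from Lemma \ref{LM:nonzero2} (ii), the uniqueness of the integer $l$ with $\ker(\cT_l)\neq\{0\}$, $\dim_C\ker(\cT_l)=1$, and subtracting $cq$.

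The only differences are minor. You prove the uniqueness of $l$ directly via $\eta t^{l'-l}\in C$, where the paper cites \cite[Lemma 6.2.2]{BronsteinBook}. Your passage to $f^-$ in step 1 is correct but unnecessary, since Lemma \ref{LM:nonzero2} (ii) applies to $f$ itself whenever $\tdeg(f)<0$.
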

\begin{proof}
%

(ii) By Remark \ref{RE:hred} (iii), $\nu_t(h) \ge 0$.   
By \eqref{EQ:hred} and $\tau \in \ker(\cT_l)$, the image $\cP_h(\tau t^l)$ belongs to~$F[t]_{<m} + \left(t^{-1} \cdot F[t^{-1}]\right)_{>l}$. 
Let $(g,r)$ be an auxiliary pair of $\cP_h\left(\tau t^l\right)$.
Then 
$$\cP_h\left(\tau t^l \right) = \cP_h(g) +r, \quad g \in \left(t^{-1} \cdot F[t^{-1}]\right)_{>l}, \quad \text{and} \quad r \in V_h$$
by Proposition \ref{PROP:haux} (ii). Set $q :=\tau t^l - g$. Then
$\tdeg(q)=l,$ $\tc(q) =\tau $ and $\cP_h(q) \in V_h.$

Assume that $f \in F[t,t^{-1}]$ with $d = \tdeg(f)$ and $\cP_h(f) \in V_h$. If $d \ge  0$, then we set $c:=0$.
Otherwise, $\tc(f) \in \ker(\cT_d)$ by Lemma \ref{LM:nonzero2} (ii). It follows from \cite[Lemma 6.2.2]{BronsteinBook} that $d=l$.
Since $\dim_C(\ker(\cT_l))=1$, there exists a constant $c \in C$ such that $\tc(f)=c \tau$. Then $\cP_h(f - c q) \in V_h$ and $\tdeg(f - c q) > l$.
So $\tdeg(f - c q) \ge 0$ by Lemma \ref{LM:nonzero2} (ii) and \cite[Lemma 6.2.2]{BronsteinBook}.
\end{proof}

We now present a criterion for $J_h=\{0\}$.
\begin{prop} \label{PROP:hinter}
$J_h = \{0\}$ if and only if one of the following two conditions holds.
\begin{itemize}
\item[(i)] $h \in F$,
\item[(ii)] $\ker(\cH_k)=\{0\}$ for all $k \in \bN_0$ and $\ker(\cT_l)=\{0\}$ for all $l \in \bN^-$.
\end{itemize}
In particular, $J_h=\{0\}$ if $\nu_{\infty}(h)<0$ and $\nu_t(h)<0$.
\end{prop}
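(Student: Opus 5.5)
The plan is to prove the two directions separately. Sufficiency of (i) follows from a coefficientwise computation. Sufficiency of (ii) follows from Lemma \ref{LM:nonzero2}. Necessity follows from Lemma \ref{LM:nonzero1} combined with the injectivity in Lemma \ref{LM:inj}.

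\emph{Sufficiency of (ii).} Let $f \in F[t,t^{-1}]$ with $\cP_h(f) \in V_h$, and suppose $f \neq 0$. Put $k = \hdeg(f)$.
\begin{itemize}
\item[(a)] If $k \ge 0$, then $\hc(f) \in \ker(\cH_k) = \{0\}$ by Lemma \ref{LM:nonzero2} (i). This is impossible, so $k<0$.
\item[(b)] Hence $f \in t^{-1}\cdot F[t^{-1}]$ and $l=\tdeg(f)<0$. Then $\tc(f) \in \ker(\cT_l)=\{0\}$ by Lemma \ref{LM:nonzero2} (ii), again impossible.
\end{itemize}
Therefore $f=0$ and $J_h=\{0\}$.

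\emph{Sufficiency of (i).} If $h \in F$, then $m=0$, $a=a_m=a_0=h$ and $b=b_m=b_0=1$. It follows that $\lambda_i = \mu_i = h + i\,t^\prime/t$ for all $i$. Moreover, $F[t]_{<0}=\{0\}$, and Definition \ref{DEF:haux} yields
\[ V_h = \sum_{i\in\bZ} \im(\Phi_{\lambda_i})\cdot t^i. \]
Lemma \ref{LM:fact} (ii) gives $\cP_h(\alpha t^i) = (\alpha^\prime + \lambda_i \alpha) t^i = \cR_{\lambda_i}(\alpha)\, t^i$ for every $\alpha\in F$ and $i\in\bZ$. So for $f=\sum_i f_i t^i$ we get $\cP_h(f)=\sum_i \cR_{\lambda_i}(f_i)t^i$.

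Now suppose $\cP_h(f)\in V_h$. Comparing coefficients of $t^i$ shows $\cR_{\lambda_i}(f_i) \in \im(\cR_{\lambda_i})\cap\im(\Phi_{\lambda_i})$, and this intersection is $\{0\}$ by Hypothesis \ref{HYP:ind}. Hence $\cP_h(f)=0$, and so $J_h=\{0\}$. Here $\cP_h$ need not be injective, but that does not matter, since only the image is involved.

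\emph{Necessity.} Assume $h\notin F$ and that (ii) fails.
\begin{itemize}
\item[(a)] Suppose $\ker(\cH_k)\neq\{0\}$ for some $k\in\bN_0$. Lemma \ref{LM:nonzero1} (i) provides $p\in F[t]$ with $\hdeg(p)=k$ and a nonzero head coefficient, such that $\cP_h(p)\in V_h$. Since $h\in F(t)\setminus F$ is $t$-normalized, $\cP_h$ is injective by Lemma \ref{LM:inj}. Thus $\cP_h(p)$ is a nonzero element of $J_h$.
\item[(b)] Suppose instead $\ker(\cT_l)\neq\{0\}$ for some $l\in\bN^-$. Lemma \ref{LM:nonzero1} (ii) gives $q \in t^{-1}\cdot F[t^{-1}]$ with $\tdeg(q)=l$ and a nonzero tail coefficient, such that $\cP_h(q)\in V_h$. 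The same injectivity argument shows $\cP_h(q)$ is a nonzero element of $J_h$.
\end{itemize}
In either case $J_h \neq \{0\}$.

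\emph{The final assertion.} If $\nu_\infty(h)<0$ and $\nu_t(h)<0$, then Remark \ref{RE:hred} (i) and (iii) give $\ker(\cH_k)=\{0\}$ and $\ker(\cT_l)=\{0\}$ for all $k,l$. So (ii) holds and $J_h=\{0\}$.

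The main point of care is the case $h\in F$. There $\cP_h$ may have a kernel, so injectivity cannot be used. Instead one must check that the pieces $V_h^+$ and $V_h^-$ together cover every power $t^i$ with exactly the matching $\im(\Phi_{\lambda_i})$. This relies on $b_0=1$ and $\mu_i=\lambda_i$ when $h\in F$.
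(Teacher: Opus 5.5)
Your proof is correct and takes essentially the same route as the paper. Necessity comes from Lemma \ref{LM:nonzero1} together with the injectivity in Lemma \ref{LM:inj}. Sufficiency of (i) is the coefficientwise comparison of $\cP_h(f)=\sum_i \cR_{\lambda_i}(f_i)t^i$ against $V_h=\sum_i \im(\Phi_{\lambda_i})t^i$. Sufficiency of (ii) comes from Lemma \ref{LM:nonzero2}. The differences are cosmetic: you assume $f\neq 0$ rather than $\cP_h(f)\neq 0$, which gives the slightly stronger conclusion that $\cP_h$ is injective under (ii), and you merge the $\lambda_i$ and $\mu_j$ sums using $\mu_i=\lambda_i$ when $h\in F$.
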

\begin{proof}
Assume that $J_h=\{0\}$ and $h \notin F$. Suppose that $\ker(\cH_k)$ is nontrivial for some $k \in \bN_0$.
Then there exists a polynomial $p \in F[t]^\times$
such that $\cP_h(p) \in V_h$ by Lemma \ref{LM:nonzero1} (i).
Moreover, $\cP_h(p) \neq 0$ by Lemma \ref{LM:inj}. So $J_h$ is nontrivial, a contradiction.
The same contradiction is reached by  Lemma \ref{LM:nonzero1} (ii) and Lemma \ref{LM:inj} if  $\ker(\cT_l)$ is nontrivial for some $l \in \bN^-$.

To show the converse, we let $f  \in F[t,t^{-1}]$ with $\cP_h(f) \in V_h$.
It suffices to show that $\cP_h(f) = 0$.

First, we consider the case in which $h \in F$. Then $m=0$, $b_0=1$,  $\nu_{\infty}(h) \ge 0$ and $\nu_{t}(h) \ge 0$.
It follows from Definition \ref{DEF:haux} that
$$V_h = \sum_{i \in \bN_0} \im(\Phi_{\lambda_i}) \cdot t^i + \sum_{j \in \bN^-}  \im(\Phi_{\mu_j}) \cdot t^j,$$
where $\lambda_i = it^\prime/t+a_0$ and $\mu_j = j t^\prime/t + a_0$ by Remark \ref{RE:hred} (ii) and (iv).

Write $f = \sum_{i=l}^k f_i t^i$ with $f_i \in F$. 
Since $\cP_h(f)=f^\prime+a_0f$, we have 
$$\cP_h(f) = \sum_{i=0}^k \cR_{\lambda_i}(f_i) t^i + \sum_{j=l}^{-1} \cR_{\mu_j}(f_j) t^j$$
by Lemma \ref{LM:fact} (ii). 
Hence, $\cR_{\lambda_i}(f_i) \in \im(\Phi_{\lambda_i})$ for all $i \in \bN_0$
and $\cR_{\mu_j}(f_j) \in \im(\Phi_{\mu_j})$ for all $j \in \bN^{-}$.
Thus, $\cP_h(f)=0$ by $\im(\cR_{\lambda_i}) \cap  \im(\Phi_{\lambda_i})= \{0\}$ and $\im(\cR_{\mu_j}) \cap  \im(\Phi_{\mu_j}) = \{0\}$.

Next, we assume that $h \notin F$, $\ker(\cH_k)=\{0\}$ for all $k \in \bN_0$ and $\ker(\cT_l)=\{0\}$ for all $l \in \bN^-$.
Suppose that $\cP_h(f) \neq 0$. If $k=\hdeg(f) \ge 0$, then $f_k \in \ker(\cH_k)^\times$ by Lemma \ref{LM:nonzero2} (i),
a contradiction. So $f^+ = 0$. Thus, we may further assume that $\tdeg(f)=l<0$.
It follows from Lemma \ref{LM:nonzero2} (ii) that $f_l \in \ker(\cT_l)^\times$. We have reached   another contradiction.

In particular, if $\nu_{\infty}(h)<0$ and $\nu_t(h)<0$, then both $\cH_k$ and $\cT_l$ are injective by Remark \ref{RE:hred} (i) and~(iii), respectively.
So $J_h = \{0\}$ by the second condition.
\end{proof}
 We introduce the notion of types to characterize $J_h$.
\begin{define} \label{DEF:htype}
The {\em type} of $J_h$ is defined to be
\begin{itemize}
\item[(i)] $0$ if either $h \in F$, or  $\ker(\cH_k)=\{0\}$ for all $k \in \bN_0$ and $\ker(\cT_l)=\{0\}$ for all $l \in \bN^-$,
\item[(ii)] $(k, \sigma)$ if $h \notin F$, $\sigma  \in \ker(\cH_k)^\times$ for some $k \in \bN_0$ and
$\ker(\cT_l)=\{0\}$ for all $l \in \bN^-$,
\item[(iii)] $(l,\tau)$ if $h \notin F$, $\ker(\cH_k) = \{0\}$ for all $k \in \bN_0$ and
$\tau \in \ker(\cT_l)^\times$ for some $l \in \bN^-$,
\item[(iv)] $(k,\sigma),(l,\tau)$ if $h \notin F$, $\sigma  \in \ker(\cH_k)^\times$ for some $k \in \bN_0$ and
$\tau \in \ker(\cT_l)^\times$ for some $l \in \bN^-$.
\end{itemize}
\end{define}
The type of $J_h$ can be computed by an algorithm for solving the parametric logarithmic derivative problem in $F$. 
The type is zero if and only  if $J_h = \{0\}$ by Proposition \ref{PROP:hinter}.
If it is nonzero, then $k, l$ are unique by \cite[Lemma 6.2.2]{BronsteinBook}, and $\sigma, \tau$ are unique up to
multiplication by a  constant in $C^\times$. The type
is subsequently fixed once and for all. 
\begin{example} \label{EX:hindex}
Let $F=C$, $t=\exp(x)$ and $h=-t^2/(t^2+t+1)$. Convention \ref{CON:next2} and Remark~\ref{RE:hred} lead to the following table:
\begin{center}
	\begin{tabular}{c|c|c|c|c|c|c} 
		 $m$ & $a_m$ & $a_0$ & $b_m$ & $b_0$ & $\cH_k(z)$, $k \ge 0$  &  $\cT_l(z)$, $l<0$ \\ \hline
        $2$  &  $-1$ & $0$ & $1$  & $1$ & $(k-1)z$ & $lz$ 
	\end{tabular}
\end{center}
Hence, the type of $J_h$ is equal to $(1,1)$.
\end{example}
\subsection{Echelon sequences and induced complements}

In contrast to the primitive setting, we have $\dim_C(J_h) \le 2$, as stated in the following proposition.
\begin{prop} \label{PROP:hbasis}
\begin{itemize}
\item[(i)] $\dim_C(J_h)=1$ if the type of $J_h$ is equal to $(i,\kappa)$ for some $i \in \bZ$.
\item[(ii)] $\dim_C(J_h)=2$ if the type of $J_h$ is equal to $(k,\sigma), (l,\tau)$ for some $k \in \bN_0$ and $l \in \bN^-$.
\end{itemize}
\end{prop}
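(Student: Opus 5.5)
The plan is to describe $\cP_h^{-1}(J_h)$ explicitly using Lemma \ref{LM:nonzero1}, and then transport it to $J_h$ via the injectivity of $\cP_h$. Whenever the type of $J_h$ is nonzero we have $h \notin F$ by Definition \ref{DEF:htype}. Hence $\cP_h$ is injective by Lemma \ref{LM:inj}, and it restricts to a $C$-linear isomorphism from $\cP_h^{-1}(J_h)=\{f\in F[t,t^{-1}] \mid \cP_h(f)\in V_h\}$ onto $J_h$. So in both cases it suffices to compute $\dim_C \cP_h^{-1}(J_h)$.

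For (i), first take the type to be $(k,\sigma)$ with $k\in\bN_0$, so that $\ker(\cT_l)=\{0\}$ for all $l\in\bN^-$. Lemma \ref{LM:nonzero1} (i) gives $p\in F[t]$ with $\hdeg(p)=k$, $\hc(p)=\sigma$ and $\cP_h(p)\in V_h$. Since $p\neq 0$, $\cP_h(p)$ is a nonzero element of $J_h$. Now let $f$ satisfy $\cP_h(f)\in V_h$. The same lemma yields $c\in C$ with $g:=f-cp\in t^{-1}\cdot F[t^{-1}]$, and $\cP_h(g)\in V_h$ because $V_h$ is a subspace.

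We must show $g=0$. If $g\neq 0$, then $\tdeg(g)<0$, and Lemma \ref{LM:nonzero2} (ii) gives $\tc(g)\in\ker(\cT_{\tdeg(g)})^\times$, which contradicts the triviality of all these kernels. Hence $\cP_h^{-1}(J_h)=C\,p$ and $\dim_C J_h=1$. The case $(l,\tau)$ with $l\in\bN^-$ is symmetric. Here Lemma \ref{LM:nonzero1} (ii) supplies $q$, and the remainder $f-cq$ lies in $F[t]$. If that remainder is nonzero, its head degree is $\ge 0$, and Lemma \ref{LM:nonzero2} (i) places its head coefficient in a trivial $\ker(\cH_{k})$, a contradiction.

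For (ii), Lemma \ref{LM:nonzero1} supplies both $p\in F[t]$ with head term $\sigma t^k$ and $q\in t^{-1}\cdot F[t^{-1}]$ with tail term $\tau t^l$, both lying in $\cP_h^{-1}(J_h)$. They are $C$-linearly independent because $p\neq 0$ lies in $F[t]$, $q\neq 0$ lies in $t^{-1}\cdot F[t^{-1}]$, and $F[t]\cap t^{-1}\cdot F[t^{-1}]=\{0\}$.

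For spanning, take $f$ with $\cP_h(f)\in V_h$. Lemma \ref{LM:nonzero1} (ii) gives $c_2\in C$ with $f-c_2q\in F[t]$, and $\cP_h(f-c_2q)\in V_h$. Applying Lemma \ref{LM:nonzero1} (i) to $f-c_2q$ gives $c_1$ with $f-c_2q-c_1p\in F[t]\cap t^{-1}\cdot F[t^{-1}]=\{0\}$. Hence $\cP_h^{-1}(J_h)=\spa_C\{p,q\}$ is $2$-dimensional, and by injectivity $\dim_C J_h=2$.

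The only delicate point is that the ``moreover'' parts of Lemma \ref{LM:nonzero1} are used one after the other, so each time one must check that the modified element still has its $\cP_h$-image in $V_h$. This holds simply because $p$, $q$ and $f$ all do and $V_h$ is a $C$-subspace. The uniqueness of $k$ and $l$, via \cite[Lemma 6.2.2]{BronsteinBook}, is already built into Lemma \ref{LM:nonzero1}, so no further obstacle is expected.
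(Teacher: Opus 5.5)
Your proof is correct and takes essentially the same route as the paper's. A nonzero type forces $h\notin F$, so $\cP_h$ is injective by Lemma~\ref{LM:inj}, and Lemmas~\ref{LM:nonzero1} and~\ref{LM:nonzero2} then show that $\{f \mid \cP_h(f)\in V_h\}$ is spanned by the one or two elements supplied by Lemma~\ref{LM:nonzero1}; the only cosmetic differences are that you read off $h\notin F$ from Definition~\ref{DEF:htype} rather than Proposition~\ref{PROP:hinter}, and in part (ii) you remove the tail element before the head element.
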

\begin{proof} Since the type of $J_h$ is nonzero in  both (i) and (ii), we see that $h \in F(t) \setminus F$ by
Proposition~\ref{PROP:hinter}. Consequently, $\cP_h$ is injective by Lemma \ref{LM:inj}.

(i) Let us consider the case in which $i<0$.
By Lemma \ref{LM:nonzero1} (ii), there exists an element $q \in t^{-1} \cdot F[t^{-1}]$ with $\tdeg(q)=i$ such that $\cP_h(q) \in V_h$ and
for every $f \in F[t,t^{-1}]$ with $\cP_h(f) \in V_h$,  $g := f - c q \in F[t]$ for some $c \in C$.
Then $\cP_h(g) \in V_h$ by  $\cP_h(f),  \cP_h(q)\in V_h$.
Suppose that $g \neq 0$. Let $k = \hdeg(g)$. Then $k \ge 0$ by $g \in F[t]$. It follows from Lemma \ref{LM:nonzero2} (i) that
$\ker(\cH_k) \neq \{0\}$, a contradiction to the type of~$J_h$.
So $g=0$. Consequently, $\cP_h(f) = c \cP_h(q)$. 
Note that $\cP_h(q) \neq 0$ by the injectivity of $\cP_h$.    So $\dim_C(J_h)=1$.

Likewise, one shows that $\dim_C(J_h)=1$ if $i \ge 0$.

(ii) By Lemma \ref{LM:nonzero1}, there exist two elements $p \in F[t]$ with $\hdeg(p)=k$
and $q \in t^{-1} \cdot F[t^{-1}]$ with $\tdeg(q)=l$ such that $\cP_h(p), \cP_h(q) \in V_h$. 
Let $f \in F[t,t^{-1}]$ with $\cP_h(f) \in V_h$. 
Then $g:=f - c_p p \in t^{-1} \cdot F[t^{-1}]$ for some $c_p \in C$ by Lemma \ref{LM:nonzero1} (i).
Moreover, $\cP_h(g) \in V_h$ by $ \cP_h(f), \cP_h(p) \in V_h$.
It follows from  Lemma \ref{LM:nonzero1} (ii) that $g - c_q q \in F[t]$ for some $c_q \in C$.
Then $g - c_q q =0$ because both  $g$ and $q$ belong to $t^{-1} \cdot F[t^{-1}]$. Thus, 
$f = c_p p + c_q q$. Consequently, $\cP_h(f) = c_p \cP_h(p) + c_q \cP_h(q)$.
Since $p \in F[t]$ and $q \in t^{-1}F[t^{-1}]$, 
$\cP_h(p)$ and $\cP_h(q)$ are $C$-linearly independent by the injectivity of $\cP_h$. We conclude that $\dim_C(J_h)=2$.
\end{proof}

As in Section \ref{SECT:prim}, we let $\Theta$ be an effective basis of $F$ and set $\bfOmega := \{\theta t^i \mid \theta \in \Theta, i \in \bZ\}$,
which is an effective  basis of $F[t, t^{-1}]$. We are going to construct an echelon sequence of $J_h$.
For the sake of completeness,
the echelon sequence is set to be $\emptyset$ if the type of $J_h$ is equal to $0$. 
\begin{prop} \label{PROP:hechelon}
Assume that $J_h \neq \{0\}$.
\begin{itemize}
\item[(i)] If the type of $J_h$ is equal to $(i,\kappa)$ for some $i \in \bZ$, then
$J_h$ has an echelon sequence: $(p, \cP_h(p), \bfomega)$, and $V_h \cap \ker(\bfomega^*)$ is a complement of $\im(\cP_h)$ in $F[t,t^{-1}]$.
\item[(ii)] If the type of $J_h$ is equal to $(k,\sigma), (l,\tau)$ for some $k \in \bN_0$ and $l \in \bN^-$,
then $J_h$ has an echelon sequence:
$\left(p_1, \cP_h(p_1), \bfomega_1 \right),  \, \left(p_2, \cP_h(p_2), \bfomega_2\right),$ 
and $V_h \cap \ker(\bfomega_1^*) \cap \ker(\bfomega_2^*)$ is a complement of $\im(\cP_h)$ in $F[t, t^{-1}]$.  
\end{itemize}
\end{prop}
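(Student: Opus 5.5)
The plan is to build the echelon sequences directly from the bases of $J_h$ obtained in the proof of Proposition \ref{PROP:hbasis}. Then Lemma \ref{LM:modulo}, with $U=\im(\cP_h)$ and $A=V_h$, gives the stated complements; its hypothesis $F[t,t^{-1}]=\im(\cP_h)+V_h$ is Proposition \ref{PROP:haux}. Throughout, $h\in F(t)\setminus F$ by Proposition \ref{PROP:hinter}, so $\cP_h$ is injective by Lemma \ref{LM:inj}.

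For (i), let $p$ be the element from Lemma \ref{LM:nonzero1}: part (i) of that lemma when $i\ge 0$, and part (ii) when $i<0$ (in which case $p$ plays the role of $q$). By the proof of Proposition \ref{PROP:hbasis} (i), $\{\cP_h(p)\}$ is a basis of $J_h$, and $\cP_h(p)\neq 0$. Since $\bfOmega$ is effective, we pick $\bfomega\in\bfOmega$ with $\bfomega^*(\cP_h(p))\neq 0$; for concreteness, take $\bfomega=\theta t^{e}$ with $e=\hdeg(\cP_h(p))$ and $\theta^*(\hc(\cP_h(p)))\neq 0$. A single vector with a pivot is trivially an echelon basis, so $(p,\cP_h(p),\bfomega)$ is an echelon sequence by Definition \ref{DEF:eseq}. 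Lemma \ref{LM:modulo} then yields that $V_h\cap\ker(\bfomega^*)$ is a complement of $\im(\cP_h)$.

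For (ii), take $p\in F[t]$ with $\hdeg(p)=k$ and $q\in t^{-1}F[t^{-1}]$ with $\tdeg(q)=l$ from Lemma \ref{LM:nonzero1}. By the proof of Proposition \ref{PROP:hbasis} (ii), $\cP_h(p)$ and $\cP_h(q)$ form a basis of $J_h$. To obtain echelon form, we first set $p_1:=q$ and choose $\bfomega_1\in\bfOmega$ with $\bfomega_1^*(\cP_h(q))\neq0$. Next we set
\[p_2:=p-c\,q,\qquad c:=\bfomega_1^*(\cP_h(p))\,\bfomega_1^*(\cP_h(q))^{-1}.\]
Then $\cP_h(p_2)\in\ker(\bfomega_1^*)$, and $\cP_h(p_2)\neq0$ by injectivity, because $p_2\neq0$ (its nonnegative part is $p\neq 0$). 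We choose $\bfomega_2\in\bfOmega$ with $\bfomega_2^*(\cP_h(p_2))\neq 0$, which is possible by effectiveness. The pair $\cP_h(p_1),\cP_h(p_2)$ is still a basis of $J_h$, and the ordering condition of Definition \ref{DEF:echelon} holds: for index $2$ only $j=1$ needs checking, and this requires $\cP_h(p_1)\in\ker(\bfomega_2^*)$.

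That last condition is the main obstacle, since Definition \ref{DEF:echelon} requires the earlier vector $u_1$ to vanish at the later pivot $\bfomega_2$, not the other way round. The remedy is to reverse roles: first choose a pivot for one vector, eliminate it from the other vector (as in Lemma \ref{LM:elim}), and index the eliminated vector as the \emph{first} element. Concretely, we choose $\bfomega_2$ with $\bfomega_2^*(\cP_h(q))\neq0$, replace $p$ by $p_1:=p-c'q$ so that $\bfomega_2^*(\cP_h(p_1))=0$, and set $p_2:=q$. We then pick $\bfomega_1$ with $\bfomega_1^*(\cP_h(p_1))\neq0$; here $\cP_h(p_1)\neq 0$ again by injectivity. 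Now $u_1=\cP_h(p_1)\in\ker(\bfomega_2^*)$, $u_1\notin\ker(\bfomega_1^*)$ and $u_2\notin\ker(\bfomega_2^*)$, which is exactly the echelon condition.

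Finally, Lemma \ref{LM:modulo} gives that $V_h\cap\ker(\bfomega_1^*)\cap\ker(\bfomega_2^*)$ is a complement of $\im(\cP_h)$ in $F[t,t^{-1}]$.
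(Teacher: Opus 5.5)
Your proposal is correct and is essentially the paper's proof. In (i) you pivot at the head term of $\cP_h(p)$. In (ii) you choose a pivot for one basis vector of $J_h$, eliminate it from the other vector, index the eliminated vector first, and conclude with Lemma~\ref{LM:modulo} and Proposition~\ref{PROP:haux}. The only difference is that the paper takes the polynomial preimage $p$ as the second member and $q-c_p^{-1}c_q\,p$ as the first, so the roles of your $p$ and $q$ are swapped. You were also right to discard your first ordering, which violates Definition~\ref{DEF:echelon}.
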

\begin{proof}
(i) By Proposition \ref{PROP:hbasis} (i), we have $\dim_C(J_h)=1$. Then $J_h = \spa_C\{\cP_h(p)\}$ for some $p \in F[t, t^{-1}]$.
Let $d$ and~$\gamma$ be the head degree and head coefficient of $\cP_h(p)$, respectively.
We choose an element $\theta \in \Theta$ such that $\theta^*(\gamma) \neq 0$.
Then $\left(p, \cP_h(p), \theta t^d\right)$ is an echelon sequence of $J_h$. 
It follows from Lemma \ref{LM:modulo} that $V_h \cap \ker(\bfomega^*)$ is a complement of $\im(\cP_h)$ in $F[t,t^{-1}]$.

(ii) By Proposition \ref{PROP:hbasis} (ii), we have $\dim_C(J_h)=2$. 
Then $J_h = \spa_C\{\cP_h(p), \cP_h(q)\}$ for some $p, q \in F[t, t^{-1}]$.
Let $d$ and $\gamma$ be the head degree and head coefficient of $\cP_h(p)$, respectively,
and let $\theta_p \in \Theta$ with $\theta_p^*(\gamma) \neq 0$.
Set $\bfomega_p := \theta_p t^{d}$, $c_p :=  \bfomega_p^* \left(\cP_h(p)\right)$, $c_q := \bfomega_p^*  \left( \cP_h(q) \right)$, and
$r := q - c_p^{-1} c_q p.$
Then $\cP_h(r)$ belongs to $\ker(\bfomega_p^*)$ and $\{\cP_h(r), \, \cP_h(p)\}$ is also a basis of $J_h$.
Let $e$ and  $\rho$ be the head degree and head coefficient of $ \cP_h(r)$, respectively, and choose $\theta_r \in \Theta$ with $\theta_r^*(\rho) \neq 0$.
Set $\bfomega_r = \theta_r t^{e}$.
Then $\left(r, \cP_h(r), \bfomega_r \right)$, $\left(p, \cP_h(p), \bfomega_p\right)$ is an echelon sequence of $J_h$.
By Lemma \ref{LM:modulo},  $V_h \cap \ker(\bfomega_r^*) \cap \ker(\bfomega_p^*)$ is a complement of $\im(\cP_h)$ in $F[t, t^{-1}]$.
The proof is completed by setting $p_1:=r, \bfomega_1:=\bfomega_r, p_2 :=p$ and $\bfomega_2:=\bfomega_p$. 
\end{proof}

An algorithm for constructing an echelon sequence of $J_h$
can be developed according to the proofs of Propositions \ref{PROP:hbasis} and \ref{PROP:hechelon} (see Algorithm \ref{ALG:hseq} in the appendix). 
\begin{define} \label{DEF:ecomp}
Let $E$ be the echelon sequence in Proposition \ref{PROP:hechelon} (i) or (ii).
The complement of $\im(\cP_h)$ in $F[t, t^{-1}]$ in the same proposition is called the {\em complement  induced by $E$}.
\end{define}

In analogy to Theorem \ref{TH:pcr} and Corollary \ref{COR:prestrict}, we have the following.
\begin{thm} \label{TH:hcr}
Assume that Hypothesis \ref{HYP:ind} holds and  $t$ is hyperexponential. Then 
there exists an algorithm that, for every $t$-normalized element $h$ of $F(t)$,  constructs a complete reduction~$\widetilde{\Phi}_h$  for $(F(t), \, \cR_h)$.
\end{thm}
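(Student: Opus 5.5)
The proof will mirror that of Theorem \ref{TH:pcr}, with $F[t]$ replaced by $F\langle t\rangle = F[t,t^{-1}]$, $U_h$ replaced by the auxiliary subspace $V_h$ of Definition \ref{DEF:haux}, and $I_h$ replaced by $J_h=\im(\cP_h)\cap V_h$. The plan is to construct a complement $W$ of $\im(\cP_h)$ in $F[t,t^{-1}]$ and then invoke Proposition \ref{PROP:red} and Remark \ref{RE:red}.

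\textbf{Step 1: the complement $W$.} We first compute the type of $J_h$ (Definition \ref{DEF:htype}). This uses the parametric logarithmic derivative algorithm in $F$ from \cite[Section 4.3.1]{RaabThesis}, applied to decide whether some $\cH_k$ with $k\ge 0$ or some $\cT_l$ with $l<0$ has a nontrivial kernel. By \cite[Lemma 6.2.2]{BronsteinBook}, at most one $k$ and at most one $l$ can occur.
\begin{itemize}
\item If the type is $0$, then $J_h=\{0\}$ by Proposition \ref{PROP:hinter}. Proposition \ref{PROP:haux} then gives $F[t,t^{-1}]=\im(\cP_h)\oplus V_h$, so we set $W:=V_h$.
\item If the type is nonzero, we build an echelon sequence $E$ of $J_h$ by Proposition \ref{PROP:hechelon}, following the constructive proofs of Lemma \ref{LM:nonzero1} and Proposition \ref{PROP:hbasis}. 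Concretely, starting from $\sigma t^k$ or $\tau t^l$, we subtract an auxiliary pair obtained from Proposition \ref{PROP:haux}. We then take $W$ to be the complement induced by $E$ (Definition \ref{DEF:ecomp}).
\end{itemize}
Proposition \ref{PROP:red} now yields
\[F(t)=\im(\cR_h)\oplus\left(b^{-1}\cdot W\oplus S_{t,h}\right).\]
We define $\widetilde{\Phi}_h$ to be the projection onto the second summand. By Remark \ref{RE:red}, $\widetilde{\Phi}_h$ is a complete reduction for $(F(t),\cR_h)$.

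\textbf{Step 2: computing R-pairs.} By Remark \ref{RE:red}, it suffices to compute R-pairs for the projection $\widetilde{\Psi}_h\colon F[t,t^{-1}]\to W$. Given $p\in F[t,t^{-1}]$, we compute an auxiliary pair $(q,r)$ of $p$ using Proposition \ref{PROP:haux}. Each step of that computation only requires R-pairs with respect to $\Phi_{\lambda_i}$ or $\Phi_{\mu_j}$, and these are available by Hypothesis \ref{HYP:ind}.
\begin{itemize}
\item If the type is $0$, then $(q,r)$ is already an R-pair for $\widetilde{\Psi}_h$.
\item Otherwise, $E$ is finite (of length $1$ or $2$). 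Since $\bfOmega$ is an effective basis, Lemma \ref{LM:elim} gives $\tilde q$ with $\tilde r=\cP_h(\tilde q)\in J_h$ and $r-\tilde r$ lying in the kernels of all pivots. Because $r,\tilde r\in V_h$, we get $r-\tilde r\in W$. Hence $(q+\tilde q,\,r-\tilde r)$ is the required R-pair.
\end{itemize}

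\textbf{Main obstacle.} The delicate point is effectivity. We must decide the type of $J_h$ and find $k$, $l$, $\sigma$, $\tau$ algorithmically. This reduces to the parametric logarithmic derivative problem in $F$: for instance, $\ker(\cR_{\lambda_k})\neq\{0\}$ means that $-a_m-k\,t'/t$ is a logarithmic derivative in $F$. Raab's algorithms handle this problem in transcendental Liouvillian extensions. The remaining ingredients follow directly from the earlier propositions.
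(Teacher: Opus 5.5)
Your proposal is correct and follows the paper's proof essentially step for step. You take $W:=V_h$ when the type is $0$, and otherwise the complement induced by the echelon sequence of $J_h$ from Proposition \ref{PROP:hechelon}; you then apply Proposition \ref{PROP:red}, and compute R-pairs via the auxiliary reduction plus Lemma \ref{LM:elim}, exactly as in the proof of Theorem \ref{TH:pcr}. One point to keep explicit when computing the type is that $h\in F$ is declared type $0$ even when some $\cH_k$ has a nontrivial kernel (for $h=0$, $\ker(\cH_0)=C$, yet $J_h=\{0\}$); your reference to Definition \ref{DEF:htype} covers this.
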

\begin{proof}
It suffices to construct a complement $W$ of $\im(\cP_h)$ in $F[t, t^{-1}]$ by Proposition \ref{PROP:red}.

Recall that $V_h$ denotes the auxiliary subspace  associated to $(F[t, t^{-1}], \, \cP_h)$,  and that $J_h$ denotes $\im(\cP_h) \cap V_h$.
If the type of $J_h$ is equal to zero, 
then $J_h = \{0\}$ and we simply set $W := V_h$ by Proposition \ref{PROP:haux}.
Assume that the type of $J_h$ is equal to $(i,\kappa)$ for some $i \in \bZ$.
By Proposition \ref{PROP:hechelon} (i), $J_h$
has an echelon sequence $(p, \cP_h(p), \bfomega)$, and we take $W := V_h \cap \ker(\bfomega^*)$ by Lemma \ref{LM:modulo}.  
Otherwise, $J_h$ has an echelon sequence:
$\left(p_1, \cP_h(p_1), \bfomega_1 \right),  \, \left(p_2, \cP_h(p_2), \bfomega_2\right)$ by Proposition \ref{PROP:hechelon} (ii)
and we set $W := V_h \cap \ker(\bfomega_1^*) \cap \ker(\bfomega_2^*)$ by Lemma \ref{LM:modulo}.

An algorithm for computing R-pairs with respect to $\widetilde{\Phi}_h$ can be developed 
along the same lines as the construction outlined in the last two paragraphs in the proof of Theorem \ref{TH:pcr}.
\end{proof}
\begin{cor} \label{COR:hrestrict}
Let $h \in F$, $\widetilde{\Phi}_h$ be the complete reduction for $(F(t), \cR_h)$ in Theorem \ref{TH:hcr},
and~$\Phi_h$ be the complete reduction for $(F, \cR_h)$ in Hypothesis \ref{HYP:ind}.
Then 
\begin{itemize}
\item[(i)] $\widetilde{\Phi}_h(f) = \Phi_h(f)$ for all $f \in F$, and 
\item[(ii)] $\im(\widetilde{\Phi}_0) \subset\im(\Phi_0) \oplus  \left(t^{-1} \cdot F[t^{-1}]\right) \oplus  
\left(t \cdot F[t] \right) \oplus  S_t.$
\end{itemize}
\end{cor}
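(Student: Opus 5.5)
The plan is to mirror the proof of Corollary \ref{COR:prestrict}, using the simplification that for $h \in F$ the type of $J_h$ is $0$. Since $h \in F$, Convention \ref{CON:next2} gives $b=1$, $m=0$, $a_0=h$ and $S_{t,h}=S_t$. The decomposition used in the proof of Theorem \ref{TH:hcr} then reads
\[ F(t) = \im(\cR_h) \oplus \left( W \oplus S_t \right), \]
and $\widetilde{\Phi}_h$ is the projection onto $W \oplus S_t$. By Proposition \ref{PROP:hinter} (i), $J_h=\{0\}$ whenever $h\in F$, so the type of $J_h$ is $0$ and the proof of Theorem \ref{TH:hcr} takes $W=V_h$. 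This is the key structural difference from the primitive case: no echelon correction is needed, and this is exactly why no extra term $c\,\Phi_h(\lambda t')$ appears in (i).

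For (i), take $f\in F$. By Corollary \ref{COR:hinfield}, an auxiliary pair $(\alpha,\beta)$ of $f$ with respect to $(F[t,t^{-1}],\cP_h)$ is an R-pair of $f$ with respect to $\Phi_h$. Hence $f=\cR_h(\alpha)+\Phi_h(f)$ with $\alpha\in F$. It remains to check that $\Phi_h(f)\in V_h=W$. By Remark \ref{RE:hred} (ii), $\lambda_0=a_0=h$, so Definition \ref{DEF:haux} gives $\im(\Phi_{\lambda_0})\cdot t^0=\im(\Phi_h)\subset V_h^+\subset V_h$. Thus $f-\Phi_h(f)\in\im(\cR_h)$ and $\Phi_h(f)\in W\oplus S_t$. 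Uniqueness of the projection in the direct sum then yields $\widetilde{\Phi}_h(f)=\Phi_h(f)$.

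For (ii), set $h=0$. Then $\lambda_i=i\,t'/t$ and $\mu_j=j\,t'/t$, with $b_0=1$, and $\im(\widetilde{\Phi}_0)=V_0\oplus S_t$. Definition \ref{DEF:haux} gives
\[ V_0=\im(\Phi_0)+\sum_{i\ge1}\im(\Phi_{\lambda_i})\cdot t^i+\sum_{j<0}\im(\Phi_{\mu_j})\cdot t^j , \]
which lies in $\im(\Phi_0)+\left(t\cdot F[t]\right)+\left(t^{-1}\cdot F[t^{-1}]\right)$. The four summands $\im(\Phi_0)\subset F$, $t^{-1}F[t^{-1}]$, $tF[t]$ and $S_t$ are clearly independent. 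The first three lie in $F[t,t^{-1}]$ and are distinguished by $t$-degrees, and $S_t\cap F[t,t^{-1}]=\{0\}$ since $S_t\cap F\langle t\rangle=\{0\}$. Hence the sum is direct.

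No step is a real obstacle. The only point needing care is confirming that the $t^0$ component of $V_h$ is exactly $\im(\Phi_h)$, which follows from $\lambda_0=h$ together with $m=0$, so that $F[t]_{<m}=\{0\}$.
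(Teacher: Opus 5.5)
Your proposal is correct and follows essentially the same route as the paper. For $h\in F$ you use $J_h=\{0\}$, hence $W=V_h$, and Corollary \ref{COR:hinfield} together with $\lambda_0=h$ places $\Phi_h(f)$ in $V_h$; for (ii) you read off $V_0$ from Definition \ref{DEF:haux} with $\lambda_0=0$, mirroring Corollary \ref{COR:prestrict}, and you spell out the directness argument that the paper leaves as evident.
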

\begin{proof}
(i) By Proposition \ref{PROP:hinter} and $h \in F$, we have $J_h=\{0\}$. 
The corollary follows from 
Corollary \ref{COR:hinfield}, via the same reasoning used in the proof of Corollary \ref{COR:prestrict} (i) to verify
$\widetilde{\Phi}_h(f) = \Phi_h(f)$ for all $f \in F$ when $I_h = \{0\}$.

(ii) The auxiliary subspace associated to $(F[t, t^{-1}], \cP_0)$ is 
$$\sum_{k \in \bN} \im\left(\Phi_{\lambda_k}\right) \cdot t^k +  \im\left(\Phi_{ \lambda_0}\right) + \sum_{l \in \bN^-} \im \left(\Phi_{ \mu_l}\right) \cdot t^l$$ 
by Convention \ref{CON:next2} and Definition \ref{DEF:haux}. Moreover, $\lambda_0=0$ by $h=0$.
The rest of the proof proceeds in a fashion analogous to that of Corollary \ref{COR:prestrict} (ii).
\end{proof}

\section{Completing the induction} \label{SECT:main}

In this section, we let $F_0=C$ and $F_n=C(t_1, \ldots, t_n)$ be a transcendental Liouvillian extension introduced 
in Definition \ref{DEF:liouvillian}. 
Then $F_i = C(t_1, \ldots, t_i)$ is also a transcendental Liouvillian extension of $C$ for all $i \in [n]$. 
With Remark \ref{RE:algcr}, we state the main result of this paper as follows. 
\begin{thm} \label{TH:main}
There exists an algorithm that, for every element $h \in F_n$, constructs
a complete reduction $\Phi_{n,h}$ for $(F_n, \, \cR_h)$.
\end{thm}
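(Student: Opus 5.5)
We argue by induction on $n$, proving the stronger, uniform statement that Hypothesis~\ref{HYP:ind} holds for $F_n$. That is, there is a map $h \mapsto \Phi_{n,h}$ assigning to each $h \in F_n$ a complete reduction for $(F_n, \cR_h)$, together with an algorithm that computes R-pairs with respect to $\Phi_{n,h}$. The uniformity in $h$ is essential: the inductive step for $F_n$ invokes complete reductions for $(F_{n-1}, \cR_\alpha)$ for arbitrary $\alpha \in F_{n-1}$, such as $a_m$, $\lambda_k$ and $\mu_l$. This is exactly the phenomenon illustrated in Example~\ref{EX:risch}.

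For the base case $n=0$ we use Example~\ref{EX:choice0}. The identity map is the complete reduction for $h=0$, and the zero map is the complete reduction for $h \in C^\times$. R-pairs are trivial: $(0,f)$ in the first case and $(f/h, 0)$ in the second.

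For the inductive step, set $F := F_{n-1}$ and $t := t_n$ as in Convention~\ref{CON:next2}, and assume Hypothesis~\ref{HYP:ind} for $F$. Let $h \in F_n$ be given.
\begin{enumerate}
\item Compute a normal form $(\bfxi, \bfeta)$ of $h$ with Algorithm {\sc GKS} of \cite{CDGL2025}, so that $\bfxi$ is $t$-normalized and $h = \bfxi + \bfeta'/\bfeta$.
\item If $t$ is primitive, apply Theorem~\ref{TH:pcr} to $\bfxi$; if $t$ is hyperexponential, apply Theorem~\ref{TH:hcr}. Either way we obtain a complete reduction $\widetilde{\Phi}_\bfxi$ for $(F_n, \cR_\bfxi)$ together with an R-pair algorithm.
\item Define $\Phi_{n,h} := \bfeta^{-1}\circ\widetilde{\Phi}_\bfxi\circ\bfeta$. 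By Proposition~\ref{PROP:ks}~(ii) this is a complete reduction for $(F_n,\cR_h)$.
\item Compute R-pairs via Proposition~\ref{PROP:ks}~(iii): for $f \in F_n$, compute an R-pair $(g,r)$ of $\bfeta f$ with respect to $\widetilde{\Phi}_\bfxi$, and return $(\bfeta^{-1}g, \bfeta^{-1}r)$.
\end{enumerate}
The normal form must be chosen deterministically, for instance the canonical one returned by {\sc GKS}, so that $h \mapsto \Phi_{n,h}$ is a well-defined map, as Hypothesis~\ref{HYP:ind} requires.

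The main obstacle is confirming that every ingredient used inside Theorems~\ref{TH:pcr} and~\ref{TH:hcr} is algorithmic over $F_{n-1}$, since the induction must carry algorithms and not merely existence. The points to check are these.
\begin{itemize}
\item Computing the type of $I_h$ or $J_h$ amounts to deciding whether $\ker(\cL)$, $\ker(\cH_k)$ or $\ker(\cT_l)$ is nontrivial. This reduces to recognizing logarithmic derivatives and solving parametric logarithmic derivative problems in $F_{n-1}$, which the algorithms of \cite[\S 4.3.1]{RaabThesis} handle for transcendental Liouvillian extensions.
\item An effective basis of $F_{n-1}$ is available from \cite{DGLL2025}, which makes Lemma~\ref{LM:elim} applicable.
\item In the primitive case, the echelon sequence is infinite, but it is generated effectively from finite data: by \eqref{EQ:prel1}, \eqref{EQ:prel2} and the auxiliary pair computation. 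Only finitely many members are needed to reduce any given element, so R-pair computation terminates.
\end{itemize}
Once these points are verified, the induction closes and yields the algorithm claimed in Theorem~\ref{TH:main}.
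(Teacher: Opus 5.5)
Your proposal is correct and follows essentially the same route as the paper. The paper also argues by induction on $n$: the base case comes from Example~\ref{EX:const}, and the inductive step takes the canonical normal form $(\bfxi,\bfeta)$, applies Theorem~\ref{TH:pcr} or Theorem~\ref{TH:hcr} to $\bfxi$, and transfers the result to $h$ via Proposition~\ref{PROP:ks}~(ii). Your extra remarks on effectivity (R-pairs via Proposition~\ref{PROP:ks}~(iii), type computations via Raab's algorithms, effective bases, and needing only finitely many echelon members) match what the paper spells out in Outline~\ref{OUT:Rpair} and Remark~\ref{RE:plog}.
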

\begin{proof}
We proceed by induction on $n$. If $n=0$, then the conclusion holds by Example \ref{EX:const}.

Let $n>0$. Assume that, for every $\alpha \in F_{n-1}$,
there exists an algorithm that constructs
a complete reduction $\Phi_{{n-1},\alpha}$  for $(F_{n-1}, \cR_\alpha)$, and assume that
the map in Hypothesis \ref{HYP:ind} is defined by $\alpha \mapsto \Phi_{{n-1},\alpha}$.
For $h \in F_n$, let $(\bfxi, \bfeta)$ be the canonical form of $h$.
Then, with the induction hypothesis, one can construct a complete reduction for $(F_n, \cR_\bfxi)$ by either Theorem \ref{TH:pcr} or Theorem \ref{TH:hcr},
depending on whether $t_n$ is primitive or hyperexponential over $F_{n-1}$.
Thus,  
a complete reduction $\Phi_{n,h}$ for $(F_n, \cR_h)$ is obtained from Proposition \ref{PROP:ks} (ii).
The map in Hypothesis \ref{HYP:ind} is then defined by $h \mapsto \Phi_{n,h}$ for all $h \in F_{n}$.
\end{proof}

In practice, we need to maintain some initial data in order to avoid choosing elements from a set
in different ways (see \cite[Remark 2.5]{DGLL2025}).
Let $\Theta_0=\{1\}$.
For each $i \in [n]$, we have an effective basis $\Theta_i$ of $F_i$ given in \cite[Section 2.2]{DGLL2025}.
We set up a table $\bT$ indexed by $(i, \alpha)$, where $i \in [n]$ and $\alpha \in F_i$ is $t_i$-normalized.
The value of $(i, \alpha)$ is an initial (resp.\ echelon) sequence, where $i \ge 1$ and $t_i$ is primitive (resp.\ hyperexponential).
See Definition \ref{DEF:iseq} for the notion of initial sequences.
The table can be understood as the map in Hypothesis \ref{HYP:ind} with $F=F_i$ for each $i \in [n]$.
It is global and will be updated as long as a complete reduction for $(F_i, \cR_\alpha)$ is constructed.
These considerations lead to an algorithm for computing R-pairs with respect to~$\Phi_{n,h}$ given in Theorem \ref{TH:main}.
Below is an outline of the algorithm, in which we initialize the table $\bT$ to be empty. 
\begin{out} \label{OUT:Rpair}
Given $f, h \in F_n$, compute an R-pair of $f$ w.r.t.\ $\Phi_{n,h}$. 
\begin{enumerate}
\item $(^*${\sl Base case}$^*)$
If $n=0$, then an R-pair of $f$ w.r.t.\ $\Phi_{n,h}$ is $(f/h, 0)$ if $h \neq 0$,  and $(0,f)$ if $h=0$.
The algorithm terminates. \hfill 
 $(^*${\sl See Examples \ref{EX:const} and \ref{EX:choice0}}$^*)$ 
\item  $(^*${\sl Preprocessing}$^*)$ 
\begin{itemize}
\item[(2.1)] Find the canonical form $(\bfxi, \bfeta)$ of $h$ w.r.t.\ $t_n$ by Algorithm {\sc GKS} in \cite[Section 3.2]{CDGL2025}.
\item[(2.2)] Compute $(g, r, s) \in F_n \times F_{n-1}\langle t_n \rangle \times S_{t_n, \bfxi}$ s.t.\
$\bfeta f = \cR_\bfxi(g) + r/b + s$
by Algorithm {\sc GKSR} in \cite[Section 3.4]{CDGL2025}, where $b=\den(\bfxi)$  and $S_{t_n, \bfxi}$ is defined by \eqref{EQ:subspaceh}. \hfill  $(^*${\sl See \eqref{EQ:pre2}}$^*$) 
\item[(2.3)] If $r=0$, then an R-pair of $f$ w.r.t.\ $\Phi_{n,h}$ is $(\bfeta^{-1} g, \bfeta^{-1} s)$.
The algorithm terminates.

\hfill  $(^*${\sl See Propositions \ref{PROP:red} and \ref{PROP:ks}}$^*)$
\end{itemize}

\item $(^*${\sl Recursion}$^*)$
Assume that $t_n$ is primitive (resp.\ hyperexponential) over $F_{n-1}$.
\begin{enumerate}
\item[(3.1)] Compute an auxiliary pair $(p,q)$ of $r$ w.r.t.\ $(F_{n-1} \langle t_n \rangle, \cP_\bfxi)$ by Algorithm \ref{ALG:paux} (resp.\ Algorithm \ref{ALG:haux}).
\item[(3.2)] If $q =0$, then $(\bfeta^{-1} (g+p), \bfeta^{-1} s)$ is a required R-pair.
The algorithm terminates. 

\hfill $(^*${\sl See Propositions \ref{PROP:paux}, \ref{PROP:haux} and Remark \ref{RE:red}}$^*).$
\item[(3.3)] $(^*${\sl Updating $\bT$}$^*)$
If $(n,\bfxi)$ is an index of $\bT$, then set $L$ to be the value of $(n,\bfxi)$.
Otherwise, compute an initial sequence $L$ of $I_h$ by Algorithm \ref{ALG:iseq} (resp.\ an echelon sequence~$E$
of~$J_h$ by Algorithm \ref{ALG:hseq}), and add $(n,\bfxi)=L$ (resp.\ $(n,\bfxi)=E$) to $\bT$.

\item[(3.4)] $(^*${\sl Trivial intersection}$^*)$
If $L= \emptyset$ (resp.\ $E= \emptyset$), then $\left(\bfeta^{-1} (g+p), \bfeta^{-1} \left(q/b+s\right) \right)$
is a required R-pair of $f$ w.r.t.\ $\Phi_{n,h}$. The algorithm terminates.

\hfill $(^*${\sl See Propositions \ref{PROP:pinter}, \ref{PROP:hinter} and Remark \ref{RE:red}}$^*)$

\item[(3.5)] $(^*${\sl Nontrivial intersection}$^*)$
By Algorithm \ref{ALG:pproj} (resp.\ Algorithm \ref{ALG:hproj}), compute an element $\tilde{q} \in F_{n-1}\langle t_n \rangle$ and
an element $\tilde{r} \in W$
such that $q = \cP_\bfxi( \tilde{q}) +  \tilde{r}$, where $W$ is the complement of $\im(\cP_\bfxi)$ in $F_{n-1}\langle t_n \rangle$ induced
by $L$ (resp.\ $E$) (see Definition \ref{DEF:icomp} (resp.\ Definition \ref{DEF:ecomp})).
A required  R-pair is
$\left(\bfeta^{-1} (g+p+ \tilde{q}), \bfeta^{-1} \left(\tilde{r}/b+s\right) \right).$

\hfill  $(^*${\sl See Theorems \ref{TH:pcr}, \ref{TH:hcr} and Remark \ref{RE:red}}$^*)$
\end{enumerate}
\end{enumerate}
\end{out}
\begin{remark} \label{RE:plog}
The complete reduction outlined above depends on algorithms for solving logarithmic derivative recognition problem
and parametric logarithmic derivative problem. 
In practice, such algorithms (see \cite[Section 4.3.1]{RaabThesis}) can be implemented provided that a $\bQ$-basis of $C$ is available. 
\end{remark}

The two examples below demonstrate the procedure with all steps referenced to Outline \ref{OUT:Rpair}. 
The first example provides details of Example \ref{EX:exp}.
\begin{example} \label{EX:exptower}
	Determine whether
	$
	f=\dfrac{x}{1+\exp(x)} \cdot \exp\left(\dfrac{x}{1+\exp(x)} \right)
	$
	is a derivative in the transcendental elementary extension $\left(\bC(x,t,y), d/dx\right)$,  where $t=\exp(x)$ and $y=\exp\left(\dfrac{x}{1+\exp(x)} \right)$.
	
	Note that $f=gy$, where $g=\dfrac{x}{1+\exp(x)}$. To avoid excessive recursion, we set $h:=y^\prime/y$, which
    is equal to $g^\prime$.
	By the same argument as in Example~\ref{EX:risch}, it suffices to compute an R-pair of $g$ with respect to the complete reduction
$\Phi_{2,h}$ for $(\bC(x,t),\cR_h)$.
	
	Initially, we set $F_0=\bC$, $F_1=F_0(x)$, $F_2=F_1(t)$, and the table $\bT$ to be empty.

Since $h$ is a derivative in $F_2$, it is $t$-normalized. So $(\bfxi,\bfeta)=(h,1)$ in step 2.1.  With Convention \ref{CON:next2} and $h=\bfxi$, we have
	\begin{center}
		\begin{tabular}{c|c|c|c|c|c|c} 
			$a$ & $b$ & $m$ & $a_m$ & $b_m$ & $a_0$ & $b_0$ \\ \hline
			$1+(1-x)t$ & $(1+t)^2$ & $2$ & $0$ & $1$ & $1 $ & $1$ 
		\end{tabular}
	\end{center}

	In step 2.2, we have $g = \mathcal{R}_h(0) + (r/b) + 0$, where $r = x+xt$. Step 2.3 is skipped since $r \neq 0$.

	In step 3.1, Algorithm \ref{ALG:haux} computes an auxiliary pair $(p,q)=  \left(0, x+xt  \right)$ of $r$,
where $r$ belongs to the auxiliary subspace $V_h$ associated to $(F_1[t, t^{-1}], \cP_h)$. 

	
	Step 3.2 is skipped since $q$ is nonzero. 
	In step 3.3, we find that $(2,h)$ is not an index of $\mathbb{T}$. 
    Set $J_h := \im(\cP_h) \cap V_h$. An echelon sequence of $J_h$ is constructed as follows.
	
	First, we obtain that the type of $J_h$ is $(0,1),(-1,1)$ by solving two parametric logarithmic derivative problems. 
    Second, Algorithm \ref{ALG:haux} computes the auxiliary pairs of $\mathcal{P}_h(1)$ and $\mathcal{P}_h(t^{-1})$,
    which are $(0,-xt+t+1)$ and $(0,-t-1-x)$, respectively. 
Third, selecting   $xt$ as a pivot for $\mathcal{P}_h(1)$ and  $t$ as a pivot for $\mathcal{P}_h(t^{-1})$, we obtain the echelon sequence
	$$
	E: \, (t^{-1},-t-1-x,t), \, (1,-xt+t+1,xt)
	$$
    of $J_h$. Finally, the entry $(2,h)=E$ is added to $\mathbb{T}.$
	
	Step 3.4 is skipped since $E \neq \emptyset$.
	
In step 3.5, Algorithm \ref{ALG:hproj} projects $q$ to $\im(\cP_h)$ and
the complement of $\im(\cP_h)$ induced by~$E$. The respective projections are $\mathcal{P}_h(-1-t^{-1})$ and $0$.
Then an R-pair of $g$ with respect to $\Phi_{2,h}$ is $(-1-t^{-1},0)$, that is,
$g=\mathcal{R}_h(-1-t^{-1}).$   
It follows from $h=y^\prime/y$ and $f = g y$ that 
$$f = \bigl( \left(-1-t^{-1}\right) y \bigr)^\prime.$$
In functional notation, we have
	$
	\displaystyle \int f=-\bigl(1+\exp(-x)\bigr)\cdot \exp\left(\dfrac{x}{1+\exp(x)}\right).
	$
\end{example}
In the next example, $F_n$ is neither a primitive tower nor a hyperexponential one. 
\begin{example} \label{EX:infield2}
Let $t=\log(x)$. Determine whether
$$f=\dfrac{(2x^3+2x^2-1)t-t^3-t^2-2x^5+1}{x^2(t^2+1)} \cdot \exp\Big(\displaystyle\int\!\dfrac{2x^2-2t}{xt^2+x}\Big)$$
is a derivative in $\bQ(x, t, f)$, where the derivation is $d/dx$.

We rewrite $f$ in another form so as to make intermediate expressions more compact.

Let $y  = \exp\left(\displaystyle\int \dfrac{2x^2-2t}{xt^2+x}\right)$.
Then
$f  = g y,$
where
$g = \dfrac{(2x^3+2x^2-1)t-t^3-t^2-2x^5+1}{x^2(t^2+1)}.$
Moreover, $\bQ(x, t, f) = \bQ(x, t, y)$.
By the same reasoning in Example \ref{EX:risch}, $f$ belongs to $\bQ(x,t,y)^\prime$ if and only if
there exists an element $w \in \bQ(x,t)$ such that $\cR_h(w)=g$, where
$$h := \frac{y^\prime}{y}= \frac{2x^2-2t}{xt^2+x}.$$
So it suffices to compute an R-pair of $g$ with respect to the complete reduction $\Phi_{2,h}$ for $(\bQ(x,t), \cR_h)$. 

Let $F_0=\bQ$, $F_1=F_0(x)$ and $F_2=F_1(t)$. Then $h$ is $t$-normalized by a straightforward verification. So $(\bfxi, \bfeta)=(h,1)$ in step 2.1.
With Convention \ref{CON:next2} and $\bfxi=h$, we have 
\begin{center}
	\begin{tabular}{c|c|c|c|c|c|c} 
$a$ & $b$ & $m$ & $a_m$ & $b_m$ & $a_0$ & $b_0$ \\ \hline
$-2 x^{-1} t + 2 x $ & $t^2 + 1$ & $2$ & $0$ & $1$ & $2 x $ & $1$ 
	\end{tabular}
\end{center}

In step 2.2, we have $g = \mathcal{R}_h(0) + (r/b)$, where $r = - x^{-2} t^3 - x^{-2} t^2 + (2x+2-x^{-2}) t - 2 x^3 + x^{-2}$.

In step 3.1, we compute an auxiliary pair 
\begin{equation} \label{EQ:apair}
(p,q)=  \left(x^{-1}t, 2xt-2x^3  \right)
\end{equation}
of $r$ with respect to $(F_1[t], \cP_h)$.
During the process of the auxiliary reduction, an R-pair of $\lc_t(r)$ with respect to $\Phi_{1,0}$ is computed. So
the entry $(1,0) = 1, (0,1),$ $(0,0)$, $1$ is added to $\bT$.

Note that $h$ is the same as that in Example \ref{EX:prim}, in which the last table presents an initial sequence $L$ with
$L[1] = 1, L[2]=(0, x^{-1}), L[3]=  (0, 2x^{-1}),  L[4] = x^{-1},  L[5]=2,$ 
$$L[6] = \left(t^2-x^2+1, 2xt - 2x^3, xt \right), \,\, L[7] = \left(1, -2x^{-1} t+2x, x^{-1}t \right),$$
and $L[8] = \left(t,-x^{-1}t^2+2xt+x^{-1} , x^{-1} t^2 \right).$
Step 3.3 is then completed by adding $(2, h) = L$ to~$\bT$.

Step 3.4 is skipped because $L  \neq \emptyset$. 

Let $E$ be the echelon sequence determined by $L$.
In step 3.5, we project $q$ in \eqref{EQ:apair} to $\im(\cP_h)$ and $W$, respectively, where $W$ is the complement of $\im(\cP_h)$ induced by $E$.
Since $\deg_t(q)=1$, 
no further members in $E$ need to be computed. In fact, $q$ is equal to the second component of~$L[6]$.
Then the respective projections of $q$ are $\cP_h(t^2-x^2+1)$ and $0$.
It follows that an R-pair of $g$ with respect to $\Phi_{2,h}$ is  $(x^{-1}t + t^2 -x^2+1,0)$. In other words, 
$g = \cR_h(x^{-1}t + t^2 -x^2+1)$, 
which implies that $f = (g y)^\prime$. In functional notation, we have
\[
 \displaystyle \int f = \left(\frac{\log(x)}{x}+\log(x)^2-x^2+1  \right) \cdot \exp\left(\displaystyle \int \dfrac{2x^2-2\log(x)}{x\log^2(x)+x}\right).\]
The same integral may also be evaluated using the {\tt int()} command with  option {\tt method=parallelrisch} in {\sc maple} 2026, 
and  the {\tt Integrate[]} command in {\sc mathematica} 14.3.
\end{example}

\section{Applications} \label{SECT:app}

We present two applications of the complete reduction in Theorem \ref{TH:main} for the case $h=0$, in which the Risch
operator $\cR_0$ is the derivation on a transcendental Liouvillian extension.
The first application concerns elementary integration, and the second one addresses reduction-based creative telescoping.
To this end, we make the following notational convention throughout this section, because we only consider derivations.
\begin{convention} \label{CON:elem}
Let $F_0 = C$ and $F_n=C(t_1, \ldots, t_n)$ be  a transcendental Liouvillian extension.
For all $i \in [n]_0$, $\Psi_i$ stands for  the complete reduction $\Phi_{i,0}$ for $(F_i, \, ^\prime)$ given in Theorem \ref{TH:main}.
\end{convention} 
With the above convention, we see that $\Psi_0$ is the identity map of $C$ and $\Psi_n$ is a complete reduction for $(F_n, \, ^\prime)$.
For elementary integration, we need to distinguish primitive generators of~$F_n$ from hyperexponential ones.

Set
$\bP:=\{i \in [n] \mid t_i^\prime \in F_{i-1}\}$ and $\bH := [n] \setminus \bP.$
Moreover, define 
$$ P := \spa_C\{ \Psi_{i-1}(t_i^\prime) \mid i \in \bP \}
\quad \text{and} \quad  H :=  \spa_C\{ \Psi_{j-1}(t_j^\prime/t_j) \mid j \in \bH \}.$$
\begin{remark} \label{RE:type}
If $i \in \bP$, the type of $\im(\cP_0) \cap U_0$ in $F_{i-1}[t_i]$ can be taken as $1$ by Example \ref{EX:type}.
\end{remark}

Some useful properties of remainders are given in the following two technical lemmas.
\begin{lemma} \label{LM:restrict}
For  $i \in [n]_0$ and $f \in F_i$,  $\Psi_n(f)-\Psi_i(f) \in P.$
\end{lemma}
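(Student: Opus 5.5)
We argue by induction on $n-i$, and for fixed $i$ reduce everything to a telescoping sum of one-step comparisons. If $n=i$ there is nothing to show, since $\Psi_n(f)-\Psi_i(f)=0\in P$. Write
\[
\Psi_n(f)-\Psi_i(f)=\sum_{k=i+1}^{n}\bigl(\Psi_k(f)-\Psi_{k-1}(f)\bigr).
\]
Because $f\in F_i\subset F_{k-1}$, it suffices to prove the one-step claim: for every $k\in[n]$ and every $g\in F_{k-1}$, the difference $\Psi_k(g)-\Psi_{k-1}(g)$ lies in $P$. Moreover, if $k\in\bH$ it should vanish, and if $k\in\bP$ it should be a constant multiple of $\Psi_{k-1}(t_k^\prime)$. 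Since $P$ is a $C$-subspace, the sum then lies in $P$.

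For the one-step claim we apply the corollaries relating the complete reduction on $F_{k-1}(t_k)$ to the one on $F_{k-1}$, both with $h=0$. The point to check is that $\Psi_k=\Phi_{k,0}$ constructed in Theorem \ref{TH:main} is exactly the operator $\widetilde{\Phi}_0$ of Theorem \ref{TH:pcr} or Theorem \ref{TH:hcr}, taken with $F=F_{k-1}$, $t=t_k$ and Hypothesis \ref{HYP:ind} realized by $\alpha\mapsto\Phi_{k-1,\alpha}$, so that $\Phi_0$ there is $\Psi_{k-1}$. This holds because $h=0$ is $t_k$-normalized and its canonical normal form is $(0,1)$, so Proposition \ref{PROP:ks} (ii) does not alter the operator.

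Two cases follow.
\begin{itemize}
\item[(a)] If $k\in\bH$, Corollary \ref{COR:hrestrict} (i) gives $\Psi_k(g)=\Psi_{k-1}(g)$.
\item[(b)] If $k\in\bP$, Corollary \ref{COR:prestrict} (i) gives either $\Psi_k(g)=\Psi_{k-1}(g)$ (when $I_0=\{0\}$), or $\Psi_k(g)=\Psi_{k-1}(g)+c\,\Psi_{k-1}(\lambda t_k^\prime)$ for some $c\in C$. Here the type $\lambda$ of $I_0$ may be taken to be $1$ by Remark \ref{RE:type} and Example \ref{EX:type}. Since the type is fixed only up to a factor in $C^\times$ and $\Psi_{k-1}$ is $C$-linear, the correction term is in any case a constant multiple of $\Psi_{k-1}(t_k^\prime)$, hence lies in $P$.
\end{itemize}

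I expect the main obstacle to be the bookkeeping in the identification step rather than any computation. One has to confirm that the complete reduction $\Phi_h$ on $F$ appearing in the corollaries is literally $\Psi_{k-1}=\Phi_{k-1,0}$, that is, that the global table $\bT$ makes consistent choices so that the same operator is used at every level. Once that is settled, the lemma is a direct telescoping of Corollaries \ref{COR:prestrict} (i) and \ref{COR:hrestrict} (i).
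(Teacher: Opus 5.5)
Your proposal is correct and follows the paper's proof: the paper also telescopes $\Psi_n(f)-\Psi_i(f)=\sum_{k=i}^{n-1}\bigl(\Psi_{k+1}(f)-\Psi_k(f)\bigr)$. It then handles each step by Corollary \ref{COR:prestrict} together with Remark \ref{RE:type} when $t_{k+1}$ is primitive, and by Corollary \ref{COR:hrestrict} when $t_{k+1}$ is hyperexponential. Your remark that $\Psi_k$ coincides with the operator $\widetilde{\Phi}_0$ of Theorems \ref{TH:pcr} and \ref{TH:hcr} (because $h=0$ has the trivial normal form $(0,1)$) is a point the paper leaves implicit.
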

\begin{proof}
Let $k \ge i$. By Remark \ref{RE:type} and Corollary  \ref{COR:prestrict}, we see that 
$\Psi_{k+1}(f) - \Psi_{k}(f) \in P$ if $k+1 \in \bP$.
By Corollary \ref{COR:hrestrict},  $\Psi_{k+1}(f) - \Psi_k(f) = 0$ if $k+1 \in \bH$.  
Then $\Psi_n(f)-\Psi_i(f) \in P$
by the identity $\Psi_n(f)-\Psi_i(f) = \sum_{k=i}^{n-1} \left( \Psi_{k+1}(f) - \Psi_k(f) \right).$
\end{proof}

Recall that $S_{t_i}$ stands for the set of $t_i$-simple elements in $F_{i-1}(t_i)$ for all $i \in [n]$.
Set $S:=S_{t_1} + \cdots +S_{t_n}$, which is a direct sum.
The next definition extends \cite[Definition 3.4]{DGLW2020}. 
\begin{define} \label{DEF:simple}
An element of $F_n$ is said to be {\em simple} if it belongs to $S$.
\end{define}
\begin{lemma} \label{LM:rem}
For every simple element $s$,  we have $s - \Psi_n(s) \in P$.
\end{lemma}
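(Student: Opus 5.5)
By linearity of $\Psi_n$ and the fact that $P$ is a $C$-subspace, it suffices to treat $s\in S_{t_i}$ for a single $i\in[n]$. The plan is to show first that $\Psi_i(s)=s$, i.e.\ that $s$ is already a remainder at level $i$. Then we pass from $\Psi_i$ to $\Psi_n$ by a telescoping argument in the style of Lemma \ref{LM:restrict}, which produces only differences in $P$.

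Step one: $\Psi_i(s)=s$ for $s\in S_{t_i}$. We apply the construction of Theorem \ref{TH:pcr} or Theorem \ref{TH:hcr} to $F_{i-1}(t_i)$ with $h=0$. Then $b=1$, and $S_{t_i,0}=S_{t_i}$. By Remark \ref{RE:der}, $F_i=F_i'\oplus(W\oplus S_{t_i})$, and $\Psi_i$ is the projection onto $W\oplus S_{t_i}$ with respect to this sum. Since $s$ lies in $S_{t_i}\subset\im(\Psi_i)$ and $\Psi_i$ is idempotent, $\Psi_i(s)=s$. Equivalently, Algorithm {\sc GKSR} returns $(0,0,s)$ in \eqref{EQ:pre2}, by the uniqueness of the $S_{t,h}$-component.

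Step two: lifting to level $n$. For $k\ge i$, we compare $\Psi_{k+1}(f)$ and $\Psi_k(f)$ for $f\in F_k$. Corollary \ref{COR:prestrict} (i) with $h=0$ and Remark \ref{RE:type} (type $\lambda=1$) give $\Psi_{k+1}(f)-\Psi_k(f)\in C\cdot\Psi_k(t_{k+1}')\subset P$ when $k+1\in\bP$. Corollary \ref{COR:hrestrict} (i) gives equality when $k+1\in\bH$. These are exactly the facts packaged in Lemma \ref{LM:restrict}, so we invoke it directly with $f=s\in F_i$: $\Psi_n(s)-\Psi_i(s)\in P$. Combined with step one, $s-\Psi_n(s)=\Psi_i(s)-\Psi_n(s)\in P$.

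The main point to check is step one, in particular that the complete reduction $\Psi_i=\Phi_{i,0}$ from Theorem \ref{TH:main} really has $S_{t_i}$ inside its image. Proposition \ref{PROP:ks} does not alter the situation: for $h=0$ the canonical normal form is $(0,1)$, since $0$ is already $t_i$-normalized, so $\Phi_{i,0}$ coincides with $\widetilde{\Phi}_0$ built from \eqref{EQ:dsum} with $b=1$. With that observation the argument above goes through, and the remaining summation over $i$ is routine.
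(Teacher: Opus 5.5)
Your proposal is correct and follows the paper's proof: decompose $s$ into components $s_i \in S_{t_i}$, obtain $\Psi_i(s_i)=s_i$ from Proposition \ref{PROP:red} and Remark \ref{RE:red} (your use of Remark \ref{RE:der} is the $h=0$ case of these), and conclude with Lemma \ref{LM:restrict}. Your check that the normal form of $0$ is $(0,1)$ only spells out a point the paper uses implicitly.
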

\begin{proof}
Let $s = \sum_{i \in [n]} s_i$, where $s_i \in  S_{t_i}$.   By Proposition \ref{PROP:red} and Remark \ref{RE:red}, $\Psi_i(s_i)=s_i$ for all $i \in [n]$. Then
$s - \Psi_n(s) = \sum_{i \in [n]} \left( \Psi_i(s_i) - \Psi_n(s_i) \right)$
by $\Psi_n(s) = \sum_{i \in [n]} \Psi_n(s_i)$.
The lemma follows from Lemma \ref{LM:restrict}.
\end{proof}

Next, we define residues of elements in $S$ without reference to any particular generator of $F_n$.
\begin{define} \label{DEF:sresidue}
Let $s$ be a simple element of the form $\sum_{i \in [n]} s_i$ with $s_i \in  S_{t_i}$.
A {\em residue of $s$} is a residue of some~$s_j$ with $j \in [n]$ 
at a normal polynomial of positive degree in $F_{j-1}[t_j]$.
\end{define}
Residues of simple elements are well-defined by $S = \oplus_{i \in [n]}  S_{t_i}$ and Remark \ref{RE:residue}.

We are ready to generalize \cite[Theorem 5.3]{DGLL2025} from primitive towers to transcendental Liouvillian extensions.
\begin{thm} \label{TH:elem}
With Convention \ref{CON:elem}, we further assume that $C$ is algebraically closed. Then
an element~$f$ of $F_n$ has an elementary integral over $F_n$ if and only if
\begin{itemize}
  \item [(i)] there exists a simple element $s$ such that
  $\Psi_n(f) - s \in  H + P,$ and
  \item [(ii)] all residues of $s$ belong to $C$.
\end{itemize}
\end{thm}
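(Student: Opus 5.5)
We prove both directions by relating elementary integrability to Liouville's theorem and to the structure of remainders, using Lemmas \ref{LM:restrict}, \ref{LM:rem}, \ref{LM:logder} and \ref{LM:elem}.

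\emph{Sufficiency.} Assume (i) and (ii). Since $f-\Psi_n(f)\in F_n^\prime$, it suffices to show that $s$ and every element of $H+P$ have elementary integrals over $F_n$. The elements of $H+P$ are handled as follows.
\begin{itemize}
\item For $i\in\bP$, the element $t_i^\prime-\Psi_{i-1}(t_i^\prime)$ is a derivative in $F_{i-1}$, so $\Psi_{i-1}(t_i^\prime)=(t_i-g)^\prime$ for some $g\in F_{i-1}$, and lies in $F_n^\prime$.
\item For $j\in\bH$, likewise $\Psi_{j-1}(t_j^\prime/t_j)=t_j^\prime/t_j-g^\prime$, which is the derivative of $\log t_j-g$ and hence elementary.
\end{itemize}
For $s=\sum_i s_i$ with $s_i\in S_{t_i}$ and all residues constant, we apply Lemma \ref{LM:elem} to each $s_i$ over $F_{i-1}(t_i)$; this shows each $s_i$ has an elementary integral over $F_i\subset F_n$. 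The constant field stays $C$ because the logarithms are adjoined over an algebraically closed $C$, as in the proof of Lemma \ref{LM:elem}.

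\emph{Necessity.} Assume $f$ has an elementary integral. By Liouville's theorem (with $C$ algebraically closed),
\[
f=v_0^\prime+\sum_{k} c_k\frac{u_k^\prime}{u_k}, \qquad v_0,u_k\in F_n,\ c_k\in C.
\]
Applying $\Psi_n$ gives $\Psi_n(f)=\sum_k c_k\Psi_n(u_k^\prime/u_k)$, so it suffices to show that every logarithmic derivative $u^\prime/u$ with $u\in F_n^\times$ satisfies
\[
\Psi_n(u^\prime/u)\in s_u+H+P
\]
for some simple $s_u$ with integer residues. Then $s:=\sum_k c_k s_{u_k}$ is simple and has residues in $C$, since residues of simple elements are additive along the direct sum $S=\oplus S_{t_i}$.

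We prove this claim by induction on $n$, decomposing $u^\prime/u$ via Lemma \ref{LM:logder}. Write $u=p/q$ with $p,q\in F_{n-1}[t_n]$. Then $u^\prime/u$ is the difference of $p^\prime/p$ and $q^\prime/q$, and by Lemma \ref{LM:logder}
\[
\frac{p^\prime}{p}=\sigma+\frac{\lc(p)^\prime}{\lc(p)}+d\,\frac{t_n^\prime}{t_n},
\]
where $\sigma\in S_{t_n}$ has integer residues and the last term is present only in the hyperexponential case. The three pieces are treated separately.
\begin{itemize}
\item The term $\lc(p)^\prime/\lc(p)$ is a logarithmic derivative in $F_{n-1}$. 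By induction, $\Psi_{n-1}$ of it lies in $s^\ast+H+P$ with $s^\ast$ built from $S_{t_1},\dots,S_{t_{n-1}}$ and having integer residues. Lemma \ref{LM:restrict} then transfers this to $\Psi_n$ at the cost of an element of $P$.
\item The term $t_n^\prime/t_n$ satisfies $t_n^\prime/t_n-\Psi_{n-1}(t_n^\prime/t_n)\in F_{n-1}^\prime$. Hence $\Psi_n(t_n^\prime/t_n)=\Psi_n(\Psi_{n-1}(t_n^\prime/t_n))$, which lies in $H+P$ by Lemma \ref{LM:restrict}.
\item The simple part $\sigma$ satisfies $\Psi_n(\sigma)\in\sigma+P$ by Lemma \ref{LM:rem}.
\end{itemize}
Since $\Psi_n$ is $C$-linear, combining the three pieces proves the claim.

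The main obstacle is organizing this induction cleanly. The claim must be stated for logarithmic derivatives in every $F_i$, and the base case $F_0=C$ is trivial because there $u^\prime/u=0$. We also need the ambient remainder spaces to be compatible across levels. This compatibility is exactly what Corollaries \ref{COR:prestrict} and \ref{COR:hrestrict} provide, via Lemma \ref{LM:restrict}.
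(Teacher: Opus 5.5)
Your proof is correct and follows essentially the same route as the paper. Sufficiency goes through Lemma \ref{LM:elem} and the R-pairs of $t_i^\prime$ ($i\in\bP$) and $t_j^\prime/t_j$ ($j\in\bH$); necessity goes through the strong Liouville theorem, Lemma \ref{LM:logder}, and Lemmas \ref{LM:restrict} and \ref{LM:rem}. The only difference is organizational: the paper first splits the whole combination of logarithmic derivatives into a simple element with constant residues plus a $C$-linear combination of the $t_i^\prime/t_i$ ($i\in\bH$) and applies $\Psi_n$ once, whereas you carry $\Psi$ through the induction level by level.
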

\begin{proof} Assume that both (i) and (ii) hold.  By (ii) and Lemma \ref{LM:elem},
$s$ has an elementary integral over $F_n$.  Combining with (i), it suffices to show that every element of $H + P$ has an elementary integral over~$F_n$.
For $i \in \bH$, we have $t_i^\prime/t_i = u_i^\prime + \Psi_{i-1}(t_i^\prime/t_i)$ for some $u_i \in F_{i-1}$, which implies that
$\Psi_{i-1}(t_i^\prime/t_i)$ is the derivative of $\log(t_i)-u_i$.
For $i \in \bP$, we have $t_i^\prime =  v_i^\prime + \Psi_{i-1}(t_i^\prime)$ for some $v_i \in F_{i-1}$,
yielding that $\Psi_{i-1}(t_i^\prime) \in F_n^\prime$.
Accordingly, $\Psi_n(f)$ has an elementary integral over $F_n$, and so does $f$.

Conversely, assume that $f$ has an elementary integral over $F_n$.
Then there exists a $C$-linear combination $g$ of logarithmic derivatives in $F_n$ such that
  $f \equiv g \mod F_n^\prime$  by \cite[Theorem 5.5.2]{BronsteinBook}.
  Thus, $\Psi_n(f)=\Psi_n(g)$ by $\Psi_n(F_n^\prime)=\{0\}$.  It remains to show that  (i) and (ii) hold for $\Psi_n(g)$.

By a repeated use of Lemma \ref{LM:logder} and the logarithmic derivative identity, there exists an element $s \in S$
with merely constant residues,
such that $g - s$ is a $C$-linear combination of $t_i^\prime/t_i$ for $i \in \bH$.
Then 
$\Psi_n(g)-\Psi_n(s) \in H+P$ 
by Lemma \ref{LM:restrict}. So $\Psi_n(g) - s \in H+P$ by Lemma \ref{LM:rem}. Both (i) and (ii) hold.
\end{proof}

To compute elementary integrals by the above theorem, we set
$R_i := t_i \cdot F_{i-1}[t_i]$ if $i \in \bP$,  and
$R_i := \left(t_i \cdot F_{i-1}[t_i]\right) + \left( t_i^{-1} \cdot F_{i-1}[t_i^{-1}] \right)$ if $i \in \bH$.
Moreover, let $R=C+R_1+\cdots+R_n$. This sum is  evidently direct, and so is $R+S$.
\begin{prop} \label{PROP:decomp}
With Convention \ref{CON:elem} and the notation just introduced, we obtain that
$$\im(\Psi_n) \subset R \oplus S.$$
\end{prop}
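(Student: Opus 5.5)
We argue by induction on $n$. The claim to prove at each level is $\im(\Psi_i) \subset C \oplus R_1 \oplus \cdots \oplus R_i \oplus S_{t_1} \oplus \cdots \oplus S_{t_i}$. For $n=0$, $\Psi_0$ is the identity on $C$, so $\im(\Psi_0)=C\subset R$. For the inductive step, take $i\ge 1$ and put $F:=F_{i-1}$, $t:=t_i$. By Proposition \ref{PROP:ks}, the element $h=0$ is already $t$-normalized with normal form $(0,1)$, so $\Psi_i=\widetilde{\Phi}_0$ is the reduction of Theorem \ref{TH:pcr} or Theorem \ref{TH:hcr}.

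We then apply Corollary \ref{COR:prestrict} (ii) when $i\in\bP$, and Corollary \ref{COR:hrestrict} (ii) when $i\in\bH$. In the primitive case this gives
$$\im(\Psi_i)\subset \im(\Psi_{i-1}) \oplus (t_i\cdot F_{i-1}[t_i]) \oplus S_{t_i} = \im(\Psi_{i-1}) \oplus R_i \oplus S_{t_i}.$$
In the hyperexponential case it gives the same inclusion, with $R_i$ now including the negative-power part $t_i^{-1}\cdot F_{i-1}[t_i^{-1}]$. Here $\Psi_{i-1}=\Phi_{i-1,0}$ is exactly the map that Hypothesis \ref{HYP:ind} assigns to $0$. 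The induction hypothesis gives $\im(\Psi_{i-1})\subset C+R_1+\cdots+R_{i-1}+S_{t_1}+\cdots+S_{t_{i-1}}$, and substituting this yields the claim for $i$. Setting $i=n$ gives $\im(\Psi_n)\subset R+S$.

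It remains to check that the sum $R+S$ is direct, which the paper notes just before the statement. A short argument uses the tower structure. The summands $R_i$ and $S_{t_i}$ consist of elements of $F_i$ that, as rational functions in $t_i$ over $F_{i-1}$, lie in complementary pieces of the partial-fraction decomposition of $F_{i-1}(t_i)$: polynomial or Laurent parts without constant term, versus proper fractions with normal denominators. The sum $C+\sum_{j<i}(R_j+S_{t_j})$ lies in $F_{i-1}$, i.e. in the constant-in-$t_i$ part. So in any vanishing sum, the top-index components must vanish first, and the argument descends index by index.

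The main obstacle is bookkeeping: confirming that $\Psi_{i-1}$ in the corollaries is the same map as the $\Psi_{i-1}$ produced by the induction in Theorem \ref{TH:main}. This holds because the map in Hypothesis \ref{HYP:ind} is defined as $\alpha\mapsto\Phi_{i-1,\alpha}$.
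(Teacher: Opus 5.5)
Your proposal is correct and follows the paper's own argument: induction on $n$, applying Corollary \ref{COR:prestrict} (ii) when $t_n$ is primitive and Corollary \ref{COR:hrestrict} (ii) when $t_n$ is hyperexponential, then substituting the induction hypothesis for $\im(\Psi_{n-1})$. Your extra remarks only make explicit what the paper leaves implicit: the directness of $R+S$ (in the hyperexponential case this uses that $t_i$ divides no normal polynomial, by Lemma \ref{LM:fact} (iv)), and the identification of $\Psi_{i-1}$ with the map assigned to $0$ in Hypothesis \ref{HYP:ind}.
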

\begin{proof}
We proceed by induction on $n$.   For $n=0$, $\im(\Psi_0)=C$ by Example \ref{EX:const}.
So $\im(\Psi_0) \subset R$.
The conclusion holds.
Assume that $n>0$ and that the conclusion holds for $n-1$.
We need to consider the cases in which $n$ belongs to either $\bP$ or $\bH$ separately. 
Assume that $n \in \bP$. Then $\im(\Psi_n) \subset \im(\Psi_{n-1}) + R_n + S_n$ by Corollary \ref{COR:prestrict} (ii). Otherwise, $\im(\Psi_n) \subset \im(\Psi_{n-1}) + R_n+S_n$ by Corollary  \ref{COR:hrestrict} (ii). In either case,  $\im(\Psi_n) \subset R + S$
holds by the induction hypothesis.
%
\end{proof}

Next, we outline an algorithm for computing elementary integrals over $F_n$.
\begin{out} \label{OUT:elem}
Given $f\in F_n$, determine whether $f$ has an elementary integral over $F_n$, and computes such an integral if there exists one. 
\begin{enumerate}
\item[1.] Compute an R-pair $(g, \Psi_n(f))$.
If $\Psi_n(f)=0$, then $\int f = g$ and the algorithm terminates.
\item[2.] By Proposition \ref{PROP:decomp}, we  decompose
\begin{itemize}
\item 
$\Psi_n(f) = r_f + s_f$,   
\item $\Psi_{i-1}(t_i^\prime/t_i) = r_i + s_i$ if $i \in \bH$, and $\Psi_{i-1}(t_i^\prime) = r_i + s_i$ if $i \in \bP$,
\end{itemize}
where $r_f, r_i \in R$ and $s_f, s_i \in S$.
\item[3.] Let $z_1, \ldots, z_n$ be constant indeterminates.
\begin{itemize}
\item[3.1.] Compute the  augmented matrix $(M|v) \in C^{k\times (n+1)}$ of a  linear system in $z_1, \ldots ,z_n$
 induced by the equation $r_f - \sum_{i \in [n]} z_i r_i =0.$


\item[3.2.] Using \cite[Algorithm 2.8]{DGLL2025},
compute the augmented matrix $(N|w) \in C^{l\times (n+1)}$ of a linear system in $z_1, \ldots ,z_n$ induced by 
the condition that all residues of $s_f - \sum_{i\in [n]}z_i s_i$ are constants.

\item[3.3.] Solve the combined linear system given by
$\left(\begin{array}{c|c}
M & \vv \\
N & \vw
\end{array}
\right)
$.
If the system has no solution, then $f$ has no elementary integral over $F_n$ by Theorem \ref{TH:elem} and the fact that $R+S$ is direct.
The algorithm terminates.
\end{itemize}
\item[4.] Let $c_1, \ldots, c_n$ be a solution of the combined linear system. Set
$s := s_f - \sum_{i \in [n]} c_i s_i.$ Then
\begin{align*}
\Psi_n(f) & =  r_f + s_f \\
    & =  \left(r_f - \sum_{i=1}^n c_i r_i \right) + \left(s_f - \sum_{i=1}^n c_i s_i \right) + \sum_{j \in \bH} c_j \Psi_{j-1}\left(
    \frac{t_j^\prime}{t_j}\right)
      + \sum_{k \in \bP} c_k \Psi_{k-1} \left(t_k^\prime\right) \\
    & =  s + \sum_{j \in \bH} c_j \Psi_{j-1}\left(\frac{t_j^\prime}{t_j}\right)
    + \sum_{k \in \bP} c_k \Psi_{k-1}(t_k^\prime).
\end{align*}
The integral of $s$ is elementary over $F_n$ by Lemma \ref{LM:elem}.
Let $(q_j, \Psi_{j-1}(t_j^\prime/t_j))$ be an R-pair of $t_j^\prime/t_j$ for all $j \in \bH$, and
$(q_k, \Psi_{k-1}(t_k^\prime))$ be an R-pair of $t_k^\prime$ for all $k \in \bP$.
It follows  that
\begin{equation} \label{EQ:int}
\int f = g + \int s +  \sum_{j \in \bH} c_j (\log(t_j)-q_j) + \sum_{k \in \bP} c_k (t_k-q_k),
\end{equation}
where $s$ is integrated by determining its residues (see \cite[Theorem 4.4.3]{BronsteinBook} and \cite{DGGL2023}).
\end{enumerate}
\end{out}

Note that the integral of $s$ in \eqref{EQ:int} may involve elements in the algebraic closure of $C$. 
\begin{remark} \label{RE:log}
\cite[Algorithm 2.8]{DGLL2025} used in step 3.2 originates from \cite[Theorem 3.9]{RaabThesis}.
In practice, such algorithms can be implemented provided that a $\bQ$-basis of $C$ is available. 
\end{remark}
\begin{example} \label{EX:elem1}
Let us integrate
$$f:=\dfrac{(\log x+1)\, \Li(x^c)}{x^{c+1}},$$
where $c$ is a constant indeterminate,
$\Li(x)$ is the logarithmic integral, i.e., $\Li(x)'=1/\log x.$

Let $C$ be the algebraic closure of $\bQ(c)$,  
$t_1=x,$ $t_2=x^c,$ $t_3=\log (x),$ and $t_4={\rm Li}(x^{c}).$
Then $t_1'=1$, $t_2'/t_2 =c/t_1,$  $t_3' = 1/t_1$, $t_4' = t_2 /(t_1t_3),$ and 
$F_4:=C(t_1,t_2,t_3,t_4)$ is a transcendental Liouvillian extension. Moreover, $f=(t_3 t_4 +t_4)/(t_1t_2).$
\begin{enumerate}
\item By the algorithm in Outline \ref{OUT:Rpair}, an R-pair of $f$ with respect to $\Psi_4$ is $(g, \Psi_4(f))$, where
$$g=\dfrac{ c t_2t_3- c t_3t_4-(c +1)t_4}{c^2t_2} \quad \text{and} \quad \Psi_4(f)=\dfrac{c+1}{c^2t_1t_3}.$$ 
\item Projecting  remainders $\Psi_4(f)$, $\Psi_{0}(t_1')$, $\Psi_{1}(t_2'/t_2)$, $\Psi_2(t_3^\prime)$
and $\Psi_3(t_4^\prime)$ with respect to $R \oplus S$, we  obtain  the following table

\begin{center}
	\begin{tabular}{c|c|c|c|c|c}
		    & $\Psi_0(t_1')$ & $\Psi_1(t_2'/t_2)$ & $\Psi_2(t_3')$ & $\Psi_3(t_4^\prime)$ & $\Psi_4(f)$   \\ \hline
		{\rm Proj.\ in $R$}  & $ r_1 = 1$       & $r_2=0$  & $r_3=0$   & $r_4 = 0$  & $r_f = 0$ \\ \hline
		{\rm Proj.\ in $S$} & $s_1=0$      & $s_2=c/t_1$  & $s_3=1/t_1$ & $s_4 = t_2/(t_1t_3)$  & $s_f = \Psi_4(f)$ 
	\end{tabular}
\end{center}
\item Let $z_1, z_2, z_3$ and $z_4$ be constant indeterminates. By making two ansatzes:
\begin{itemize}
\item[(i)] $r_f = z_1 r_1 + z_2 r_2 + z_3 r_3 + z_4 r_4$, and
\item[(ii)] $s_f - z_1 s_1 - z_2 s_2 - z_3 s_3 - z_4 s_4$ has merely constant residues,
\end{itemize}
we obtain a linear system in $z_1$, $z_2$, $z_3$  and $z_4$ with augmented matrix 
$$\left(\begin{array}{cccc|c}
1&0&0&0& 0\\
0&0&0&-c & 0
\end{array}
\right). $$
The system has a solution $(0,0,0,0).$ Hence, $f$ has an elementary integral over $F_4.$
\item Computing the residues of $s_f$ yields that
\[ \int f=\dfrac{1}{c}\log (x)+\dfrac{c+1}{c^2}\log(\log (x)) -\dfrac{ c \log x+( c +1)}{c^2x^{c}} \Li(x^{c}). \]
\end{enumerate}
\end{example}
\begin{example} \label{EX:elem2}
Let $F_0=\mathbb{C}$ and $F_3=F_0(t_1, t_2, t_3)$, where
$$\dfrac{t_1'}{t_1}= \sqrt{-1}, \quad t_2'=\dfrac{\sqrt{-1}(t_1^2+1)}{t_1^2-1} \quad \text{and} \quad \frac{t_3'}{t_3}=\dfrac{t_2^2-1}{2 \sqrt{-1}t_2}. $$
Then $F_3$ is a transcendental Liouvillian extension, where $t_1$, $t_2$ and $t_3$  model
$$\exp(\sqrt{-1}x),  \quad \log( \sin x) \quad \text{and} \quad  \exp\left (\int \frac{\log(\sin x)^2-1}{2 \sqrt{-1}\log(\sin x)} \right),$$
respectively. Note that no nonzero constant is  a derivative in $F_3$.

 Let us try to integrate
$$f:=\dfrac{(\sqrt{-1}t_2^2-\sqrt{-1}t_1^2t_2^2-2)t_3+\sqrt{-1}t_1^2t_2+3\sqrt{-1}t_2-2t_2}{2(t_1^2-1)t_2(t_3+t_2)}$$
over $F_3$ by the algorithm in Outline~\ref{OUT:elem}.
\begin{itemize}
\item[1.] An R-pair of $f$ with respect to $\Psi_3$ is $(0, f)$. So $\Psi_3(f)=f$ and $f \notin F_3^\prime$.
\item[2.] Projecting the corresponding remainders with respect to $R \oplus S$ yields that 
\begin{center}
	\begin{tabular}{c|c|c|c|c}
		  & $\Psi_0(t_1'/t_1)$ & $\Psi_1(t_2')$ & $\Psi_2(t_3'/t_3)$ & $\Psi_3(f)$    \\ \hline
		{\rm Proj.\ in $R$} & $ r_1 = \sqrt{-1}$       & $r_2=\sqrt{-1}$  & $r_3=t_2/\left(2\sqrt{-1}\right)$  & $r_f$    \\ \hline
		{\rm Proj.\ in $S$} & $s_1=0$      & $s_2=2\sqrt{-1}/(t_1^2-1)$  & $s_3=-1/\left(2\sqrt{-1}t_2\right)$   & $s_f$
	\end{tabular}
\end{center}
with
$r_f=\dfrac{t_2}{2\sqrt{-1}}$ and $s_f=\dfrac{1}{(1-t_1^2)t_2}+\dfrac{\sqrt{-1}(t_1^2t_2^2+t_1^2-t_2^2+3)}{2(t_1^2-1)(t_2+t_3)}.$

\item[3.] Let $z_1, z_2, z_3$ be  three constant indeterminates. Making ansatzes analogous to those in the third step of the above example leads to
the linear system in $z_1$, $z_2$ and $z_3$ whose augmented matrix is
$$\left(\begin{array}{ccc|c}
	\sqrt{-1}&\sqrt{-1}&0 & 0\\
	0&0&1 & 1 \\
	0&0& \sqrt{-1} & 1
\end{array}
\right).
$$
Since the system is inconsistent, we conclude that $f$ has no elementary integral over $F_3.$
\end{itemize}
\end{example}

 Last but not least, we introduce one more application of our complete reduction to creative telescoping. 
We consider the case in which the integrand contains a shift variable $k$. More precisely, let $C=\bC(k)$, where $k$ is a constant indeterminate.
Assume further that the shift operator $\cS_k: k \mapsto k+1$ can be extended to $F_n$, and that the extended operator commutes with the derivation
on $F_n$. For every element $f \in F_n$ and every $i \in \bN_0$, 
one can compute an R-pair $(g_i, r_i)$ of $\cS_k^i(f)$ with respect to $\Psi_n$. Then
there exists a recurrence operator $\cL \in C[\cS_k]^\times$ of order no more than $i$ such that $\cL(f) \in F_n^\prime$ if and only if
$r_0, r_1, \ldots, r_i$ are $C$-linearly dependent. This allows us to construct a telescoper for $f$ up to a given order.
Unfortunately, we have not found any criterion for the existence of telescopers in such extensions.  Discussions related to the construction
of telescopers in primitive towers are given in \cite[Section 5.2]{DGLL2025}.

We are now ready to supplement the details for Example \ref{EX:tele0}.
\begin{example} \label{EX:tele}
For all $k \in \bN$, compute
$$\int_{0}^{\frac{\pi}{2}}\cos(2kx)\log(\sin(x))~dx \quad \text{and} \quad  \int_{0}^{\frac{\pi}{2}}\sin(2kx)\log(\sin(x))~dx.$$

Let $f(k,x)=\exp(2k\sqrt{-1}x)\log(\sin(x))$ and $A(k) = \int_{0}^{\frac{\pi}{2}}f(k,x)~dx$.
The integrand $f(k,x)$ is not a D-finite function over $\bC(k,x)$.
The two integrals are the real and imaginary parts of $A(k)$, which are denoted by $R(k)$ and $I(k)$, respectively.

Let $F_0=\bC(k)$ and $F_4=F_0(t_1, t_2, t_3,t_4)$ with
$$t_1'=1, \quad \frac{t_2'}{t_2} = \sqrt{-1}, \quad t_3'=\frac{\sqrt{-1}(t_2^2+1)}{t_2^2-1},\quad \text{and} \quad  \frac{t_4'}{t_4}=2k\sqrt{-1}.$$
The generators $t_1$ $t_2$, $t_3$ and $t_4$ model $x$, $\exp(\sqrt{-1}x)$, $\log(\sin(x))$ and $\exp(2k\sqrt{-1}x)$, respectively.
Then $f(k,x) = t_3t_4$ and $f(k+1,x) = t_2^2t_3t_4$.
Using $\Psi_4$, we find an R-pair $(g_0, r_0)$ of $f(k,x)$ and an R-pair $(g_1, r_1)$ of $f(k+1,x)$, where
$$ (g_0, r_0) = \left( - \frac{\sqrt{-1}(2kt_3-1)t_4}{4 k^2}, \,  -\frac{t_4}{k(t_2^2-1)}\right)$$
and
$$(g_1, r_1) = \left(-\frac{\sqrt{-1}(2k^2t_2^2 t_3 + 2k t_2^2 t_3 -k t_2^2 - 2k -2)t_4}{4(k+1)^2k}, \, -\frac{t_4}{(k+1)(t_2^2-1)}\right).$$
Since $(k+1) r_1 -k r_0 = 0$, we have
\begin{equation} \label{EQ:ct}
(k+1) f(k+1, x) - k f(k,x) = (k+1) g_1^\prime - k g_0^\prime.
\end{equation}
Although neither $\lim_{x \rightarrow 0^+} g_0$ nor $\lim_{x \rightarrow 0^+} g_1$ exists,
$$\lim_{x \rightarrow 0^+}\left((k+1) g_1 - k g_0\right) = \frac{\sqrt{-1}(2k+1)}{4 k(k+1)}.$$
Integrating both sides of \eqref{EQ:ct} from $0$ to $\pi/2$, we have
\begin{equation} \label{EQ:rec}
A(k+1)-\frac{k}{k+1}A(k)= \frac{\sqrt{-1}((-1)^k-2k-1)}{4k(k+1)^2}.
\end{equation}
To compute $A(k)$ for all $k \in \bN$, we first evaluate $A(1) = \int_{0}^{\frac{\pi}{2}} \exp(2\sqrt{-1}x)\log(\sin(x)) \, dx$.
Its integrand is represented by $t_2^2t_3$ in $F_3$. Using $\Psi_3$, the integrand has an R-pair
$$\left(-\frac{\sqrt{-1}}{2}t_2^2t_3+\frac{\sqrt{-1}}{4}t_2^2+\frac{\sqrt{-1}}{2}t_3-\frac{t_1}{2}, \,  0\right).$$
So the integrand belongs to $F_3^\prime$.
It follows that $A(1) =-\pi/4-\sqrt{-1}/2$. 

Taking respective real and imaginary parts of both sides in \eqref{EQ:rec} and $A(1)$, we have
\[
\left\{
\begin{array}{l}
R(k+1) - \displaystyle \frac{k}{k+1} R(k) = 0 \\ \\
R(1) = \displaystyle  - \frac{\pi}{4}
\end{array} \right.
\quad \text{and} \quad
 \left\{
 \begin{array}{l}
I(k+1) - \displaystyle  \frac{k}{k+1} I(k) = \displaystyle  \frac{(-1)^k-2k-1}{4k(k+1)^2} \\ \\
I(1) = \displaystyle  -\frac{1}{2}.
\end{array}
\right.
\]
The recurrence for $R(k)$ together with its initial condition implies $R(k) = - \pi/(4k)$.
Using the {\sc mathematica} package {\tt Sigma} (see \cite{Schn2006}),  we get
$$I(k) =\frac{1-(-1)^k}{4k^2}-\frac{1}{2k}\left(\sum_{i=1}^{k} \frac{1-(-1)^i}{i}\right)+\frac{c}{k}$$
for some $c\in \bC$. With $I(1)=-1/2$,
we find that
\[
I(k)=\frac{1-(-1)^k}{4k^2}-\frac{1}{2k}\left(\sum_{i=1}^{k} \frac{1-(-1)^i}{i}\right).
\]
\end{example}
\begin{remark}
The result $R(k)=-\pi/(4k)$ is given by Formula 4.384.7 in \cite{IntBook}.
Integrals in analogy to $I(k)$
are evaluated in \cite{LPR2002} and \cite[Chapter 5]{RaabThesis}.
\end{remark}

\section{Experiments} \label{SECT:expr}

We present empirical results about in-field integration obtained from
the complete reduction ({\tt CR}) in Theorem \ref{TH:main}  and the {\tt int()} command   in {\sc maple} without any option.
Experiments were carried out with {\sc maple 2026} on a computer with iMac CPU 3.6GHz, Intel Core i9, 16GB memory.
Our {\sc maple} scripts for {\tt CR}, the experimental data and timings are contained in  a zipped file 
under the name {\tt ReductionPair.zip} at \url{https://risc.jku.at/sw/reductionpair/}

Derivations acting on the differential fields in this section are all equal to $d/dx$.
An integrand is the derivative of a rational fraction in a transcendental Liouvillian extension
$\bQ(t_1, \ldots, t_n)$. 
A dense (resp.\ sparse) fraction has numerator and denominator generated by 
$$
\text{{\tt randpoly}{\tt ([t1, t2, \ldots, tn], degree=d)}}
$$
with option {\tt dense} (resp.\ the default value, which is sparse).

All integrands in our experiments are derivatives, because {\tt int()}
would try to integrate a non-derivative in other closed forms.
For each degree $d$, five derivatives were  integrated. Outputs of {\tt CR} were normalized to be rational fractions 
with coprime numerators and denominators.
 
Timings were measured by
{\sc maple} CPU time in seconds.
Computation would be aborted if one example took more than 3600 seconds. 
We mark an entry with \lq\lq $\int$\rq\rq\ whenever {\tt int} returned an unevaluated integral.
The correctness was verified for each output of {\tt CR}.

In the first suite, we let $n=3$, $t_1=x$, $t_2= \exp(x)$ and $t_3 = \log(\exp(x)+x)$. Fractions were dense in $t_1, t_2$ and $t_3$.
The average timings are summarized in Table~\ref{tab:dEL}.
\begin{center}
\begin{tabular}{c|c|c|c|c|c|c|c|c|c|c} 
    $d$          & $1$  & $2$ & $3$ &  $4$   & $5$ & $6$    & $7$ & $8$   & $9$ & $10$  \\ \hline
    {\tt  CR}    & 0.01 & 0.03 & 0.02 & 0.16 & 0.70  & 2.20 &  8.01 & 22.73 & 61.27 &   $>$3600 \\ \hline
    {\tt int}    & 0.04 & 0.08 & 0.16 & 0.48 & 1.44  & 4.59 & 27.84 & 44.41 & 111.00 &  $>$3600
\end{tabular}
         \captionof{table}{Dense fractions in $x$, $\exp(x)$ and $\log(\exp(x)+x)$} 
         \label{tab:dEL}
\end{center}
In the second suite, we let $n=3$, $t_1=x$, $t_2= \log(x^2+1)$ and $t_3 = \exp(x^2/2)$.
Integrands were generated in the same way for the first suite. The average timings are given in Table~\ref{tab:dLE}.
\begin{center}
\begin{tabular}{c|c|c|c|c|c|c|c|c|c|c} 
    $d$          & $1$  & $2$ & $3$ &  $4$    & $5$   & $6$  & $7$   & $8$   & $9$   & $10$  \\ \hline
    {\tt  CR}    & 0.02 & 0.01 & 0.05 &  0.11 & 0.44  & 1.61 &  4.83 & 12.29 & 39.43&  96.84 \\ \hline
    {\tt int}    & 0.04 & 0.09 & 0.23 &  0.74 & 2.04  & 6.83 & 19.78 & 66.33 & 176.75 & $>$3600
\end{tabular}
         \captionof{table}{Dense fractions in $x$, $\log(x^2+1)$ and $\exp(x^2/2)$} 
         \label{tab:dLE}
\end{center}

We let $n=4$, $t_1 = x$, $t_2 = \log(x^2+1)$, $t_3 =  \exp(x^2/2)$
and $t_4 = \exp(\exp(x^2/2))$ in the third suite. Rational fractions  were sparse, because dense fractions were too huge when $d>3$.
The average timings are given in Table~\ref{tab:sLEE}.
\begin{center}
\begin{tabular}{c|c|c|c|c|c|c|c|c|c|c}
    $d$          & $1$  & $2$  & $3$  &  $4$  & $5$   & $6$  & $7$   & $8$    & $9$  & $10$  \\ \hline
    {\tt  CR}    & 0.01 & 0.02 & 0.04 &  0.07 & 0.15  & 0.97 &  2.42 & 4.40 & 30.53 &  104.30\\ \hline
    {\tt int}    & 0.07 & $\int$ & 0.50 &  $\int$ & 1.06  & 2.84 &  21.11 & 14.62 & $\int$ & 219.41
\end{tabular}
         \captionof{table}{Sparse fractions in $x$, $\log(x^2+1)$, $\exp(x^2/2)$ and $\exp(\exp(x^2/2))$} 
         \label{tab:sLEE}
\end{center}

We integrated the derivatives of dense polynomials in $t_1, t_2, t_3$ and  $t_4$.
The average timings are given in Table~\ref{tab:dLEE}.
\begin{center}
\begin{tabular}{c|c|c|c|c|c|c|c|c|c|c} 
    $d$            & $11$ &  $12$ & $13$  & $14$  & $15$   & $16$  & $17$  & $18$  & $19$   & $20$  \\ \hline
    {\tt  CR}      &  0.24 & 0.31  & 0.39  &  0.51  & 0.62  & 0.76  &  0.95 & 1.17   & 1.38 & 1.77 \\ \hline
    {\tt int}      & 8.82 &  13.32 & 19.64  & 30.13 &  53.06 & 83.45 & 141.59&  
    257.31 & 386.16  & 1242.43
\end{tabular}
         \captionof{table}{Dense polynomials in $x$, $\log(x^2+1)$, $\exp(x^2/2)$ and $\exp(\exp(x^2/2))$} 
         \label{tab:dLEE}
\end{center}

Lastly, we let $n=4$, $t_1=x$, $t_2 = \log(x^2+1)$, $t_3=\exp(x^2/2)$ and $t_4= \exp\left(x \log(x^2+1) \right)$.
The average timings for integrating the derivatives of sparse fractions and dense polynomials 
are summarized in Tables~\ref{tab:sLEE2} and~\ref{tab:dLEE2}, respectively.
\begin{center}
\begin{tabular}{c|c|c|c|c|c|c|c|c|c|c}
    $d$          & $1$  & $2$ & $3$   &  $4$    & $5$   & $6$  &  $7$   & $8$     & $9$   & $10$  \\ \hline
    {\tt  CR}    & 0.01 & 0.02 & 0.05 &  0.28 & 0.41  & 0.19 &  20.45  & 0.14    & 3.43 & 518.15\\ \hline
    {\tt int}    & 0.08 & 0.21 & 0.36 &  $\int$ & $\int$  & 0.88 &  41.39  & 1.44    & 7.67 &  1449.17
\end{tabular}
         \captionof{table}{Sparse fractions in $x$, $\log(x^2+1)$, $\exp(x^2/2)$ and $\exp\left(x \log(x^2+1) \right)$} 
         \label{tab:sLEE2}
\end{center}

\begin{center}
\begin{tabular}{c|c|c|c|c|c|c|c|c|c|c}
    $d$            & $11$ & $12$ & $13$   & $14$  & $15$   & $16$  & $17$  & $18$   & $19$  & $20$  \\ \hline
    {\tt  CR}      & 0.40 &  0.54 & 0.67  & 0.81  &  1.04  & 1.27  & 1.54  &  1.88  & 2.22  & 2.68\\ \hline
    {\tt int}      & 10.06 &  14.66 & 21.34 & 31.84 & 51.05 & 71.08 & 92.80 & 126.09 & 174.35 & 279.45
\end{tabular}
         \captionof{table}{Dense polynomials in $x$, $\log(x^2+1)$, $\exp(x^2/2)$ and $\exp\left(x \log(x^2+1) \right)$} 
         \label{tab:dLEE2}
\end{center}

We did not compare {\tt CR} against {\tt int} for computing elementary integrals, because
such a comparison would need an  algorithm for determining  constant residues. 
Our {\sc maple} scripts use an evaluation-based algorithm described in \cite{DGGL2023}, which outperforms resultant-based algorithms 
developed in 1970's (see \cite[Section 4.4]{BronsteinBook}).

\section{Concluding remarks} \label{SECT:conc}

In this paper, we construct a complete reduction for $(F, \, \cR_h)$, where $F$ is a transcendental Liouvillian extension, $h \in F$
and $\cR_h$ is the
Risch operator associated to $h$. The complete reduction 
directly furnishes an algorithm for in-field integration,
and
leads to a new algorithm for computing elementary integrals over $F$.

Apart from generalizing the results presented in this paper from transcendental Liouvillian extensions to 
arbitrary Liouvillian extensions, in which some generators
may be algebraic, there remain at least two challenging problems. The first
 is to develop a reduction-based
algorithm for computing telescopers for elements in $F$ without any a priori order bound. 
Solving this would require a criterion
on the existence of telescopers for elements in $F$. 
The second problem concerns adapting remainders of the complete reduction for algebraic functions in \cite{CKK2016}
to enable more  efficient computation of elementary integrals over algebraic-function fields.

\section*{Acknowledgments}

We are grateful to the anonymous referees for friendly and
careful reviews. Their comments have encouraged and guided us to revise this submission 
substantially.

We thank Manuel Kauers and Clemens Raab for supportive discussions and valuable suggestions. 
Yiman Gao thanks the Computer Algebra and Applications group at RISC, led by Carsten Schneider, for their stimulating discussions.

Special thanks go to Junlin Xu from Maplesoft and Ralf Hemmecke from RISC
for helping us carry out
experiments with {\sc Maple} 2026 and {\sc FriCAS}, respectively.

\appendix

\section{Algorithm descriptions} \label{APP:alg}
This appendix is devoted to algorithmic descriptions of some constructive proofs in
Sections \ref{SECT:prim} and \ref{SECT:hyperexp}. 
Let $F$ be a differential field of characteristic zero, $C$ be the constant subfield of $F$, and $t$ be a regular monomial over $F$.
For brevity, we use the phrase \lq\lq the data in Convention \ref{CON:next2}\rq\rq\
as shorthand for  $F(t)$, $h$, $m$, $a$, $b$, $a_m$, $b_m$, $a_0$ and $b_0$ within 
all algorithm specifications. 
For $\alpha \in F$, $\Phi_\alpha$ stands for the complete reduction for $(F, \, \cR_\alpha)$ 
in Hypothesis \ref{HYP:ind}. Moreover, $\Theta$ is an effective basis of $F$.

\subsection{Algorithms in the primitive case} \label{SUBSECT:palg}

In this subsection, $t$ is  primitive  over $F$, $h \in F(t)$ is $t$-normalized, 
$U_h$ stands for the auxiliary subspace given in Definition \ref{DEF:paux}, and
$I_h := \im(\cP_h) \cap U_h$.

The first algorithm describes an auxiliary reduction in $F[t]$. It is based on \eqref{EQ:paux_case}
and the proof of Proposition \ref{PROP:paux}.
\begin{alg}
{\sc PrimAuxRed} \label{ALG:paux}

\smallskip \noindent
{\sc Input:}  $f \in F[t]$ and the data in Convention \ref{CON:next2}

\smallskip \noindent
{\sc Output:}  $(g, r)$, an auxiliary pair of $f$ w.r.t.\ $(F[t], \cP_h)$
\begin{enumerate}
\item[1.] $p \leftarrow f$, \,  $(g, r) \leftarrow 0, 0$, \, $(d, p_d) \leftarrow \deg(p), \lc(p)$

\vspace{-0.1cm}
\item[2.]
{\sc while} $d \ge m$ {\sc do} 

\vspace{-0.2cm}
\begin{itemize}
\item[]  {\sc if} $\nu_\infty(h) < 0$ {\sc then} $(u,v) \leftarrow a_m^{-1}  p_d  t^{d-m}, \, 0$ 
\item[]   {\sc else} $(\alpha, \beta) \leftarrow$ an R-pair  of $p_d$ w.r.t.\ $\Phi_{a_m}$, \, $(u, v) \leftarrow \alpha t^{d-m},  \beta t^d$ 
\item[] 
$ (g, r) \leftarrow g  + u,  r+v$, \, 
$p \leftarrow p - \cP_h(u)-v$, \,
$(d, p_d) \leftarrow \deg(p), \lc(p)$  \hfill ($^*${\sl by \eqref{EQ:paux_case}}$^*$)
\end{itemize}


\item[3.] {\sc return} $(g, r+p)$
\end{enumerate}
\end{alg}

Next, we define a finite sequence $L$ that uniquely determines an echelon sequence of $I_h$.

\begin{define} \label{DEF:iseq}
Let $m=\Deg(h)$ and $\lambda$ be the type of $I_h$. 
We define an {\em initial sequence} $L$ of~$I_h$ as follows.
\begin{enumerate}
\item If $\lambda=0$, then set $ L:=\emptyset$.

\item Assume that $\lambda \neq 0$. Fix $(\tilde{\sigma}, \sigma)$ and $(\tilde{\tau}, \tau)$ to 
be the first and second R-pairs associated to $(F[t], \cP_h)$, respectively,
and fix an element $\theta_\sigma \in \Theta$ such that $\theta_\sigma^*( \sigma) \neq 0$.
\begin{enumerate}
\item[2.1.] If $h \in F$, then set $L$ to be  the four-term sequence: 
\begin{equation} \label{EQ:four}
\lambda, \, (\tilde{\sigma}, \sigma), \, (\tilde{\tau}, \tau), \, \theta_\sigma.
\end{equation} 
\item[2.2.] Assume further that $h \in F(t) \setminus F$.
\begin{enumerate}
\item[2.2.1.] 
If $\theta_\sigma^*(i \sigma + \tau ) \neq 0$ for all $i \in \bN$, then set $L$ to be the five-term sequence:
\begin{equation} \label{EQ:five}
\lambda, \, (\tilde{\sigma}, \sigma), \, (\tilde{\tau}, \tau), \, \theta_\sigma, \,  \left(p_0, \cP_h(p_0), \, \theta_0 t^{d_0} \right),
\end{equation}
where the last member is given by Corollary \ref{COR:pivot0} (i).

\item[2.2.2.] If $\theta_\sigma^*(j \sigma + \tau) = 0$ for some $j \in \bN$ but $j \sigma + \tau \neq 0$, then set $L$ to be the 
seven-term sequence:
\begin{equation} \label{EQ:seven}
\lambda, \, (\tilde{\sigma}, \sigma), \, (\tilde{\tau}, \tau), \, \theta_\sigma, \, j, \,
\left(p_0, \cP_h(p_0), \theta_0 t^{d_0} \right), \, \left(p_{j}, \cP_h(p_{j}), \theta t^{m+j-1} \right),
\end{equation}
where the last two members are given by Corollary \ref{COR:pivot0} (ii).

\item[2.2.3.] If $j \sigma + \tau =0$ for some $j \in \bN$, then set $L$ to be the $(j+6)$-term 
sequence: 
\begin{align} 
& \,\, \lambda, \, (\tilde{\sigma}, \sigma), \, (\tilde{\tau}, \tau), \, \theta_\sigma, \, j,  \nonumber \\
& \left(q, \cP_h(q), \theta t^{d} \right), \left(p_0, \cP_h(p_0), 
\theta_0 t^{d_0} \right), \left\{\left(p_{i}, \cP_h(p_{i}), \theta_\sigma t^{m+i-1} \right)\right\}_{i \in [j-1]}, \label{EQ:last}
\end{align}
where the last $j+1$ members are given by Corollary \ref{COR:pivot0} (iii).
\end{enumerate}
\end{enumerate}
\end{enumerate}
\end{define}
An initial sequence $L$ with $L \neq \emptyset$  uniquely  determines an echelon sequence $E$ of $I_h$ by \eqref{EQ:prel1} and Algorithm \ref{ALG:paux}. 
Moreover, we do not recompute $\cP_h(p_i)$ according to \eqref{EQ:prel2}.
\begin{define} \label{DEF:icomp}
Let $L$ and $E$ be given above. 
The complement of $\im(\cP_h)$
induced by $E$ is also called the {\em complement induced by $L$}.
\end{define}

The next algorithm computes initial sequences.

\begin{alg}
{\sc InitSeq} \label{ALG:iseq}

\smallskip \noindent
{\sc Input:}  the data in Convention \ref{CON:next2}

\smallskip \noindent
{\sc Output:}  an initial sequence of $I_h$

\begin{enumerate}
\item[1.] $\lambda \leftarrow$ the type of $I_h$, {\sc if} $ \lambda=0$ {\sc then} {\sc return} $\emptyset$ \hfill  ($^*$$ I_h = \{0\}$$^*$)

\vspace{-0.2cm}
\item[2.] $(\tilde{\sigma}, \sigma) \leftarrow$ a first R-pair associated to $(F[t], \cP_h)$ 

$(\tilde{\tau}, \tau) \leftarrow$ a second R-pair associated to $(F[t], \cP_h)$

$\theta_\sigma \leftarrow$ an element of $\Theta$ s.t.\ $ \theta^*_\sigma(v) \neq 0$, \, $L_0 \leftarrow \lambda, \, (\tilde{\sigma}, \sigma), \, (\tilde{\tau}, \tau), \, \theta_\sigma$

\vspace{-0.2cm}
\item[3.] {\sc if} $h \in F$ {\sc then} {\sc return} $L_0$ \hfill  ($^*${\sl by Corollary \ref{COR:m=0} and \eqref{EQ:four}}$^*$)

\vspace{-0.2cm}
\item[4.]
$p_0 \leftarrow \lambda$, \, $(d_0, l_0) \leftarrow \deg\left(\cP_h(p_0) \right),   \lc\left(\cP_h(p_0)\right)$

$\theta_0 \leftarrow$ an element of $\Theta$ s.t.\ $ \theta^*_0(l_0) \neq 0$, \, $\bfomega_0 \leftarrow \theta_0 t^{d_0}$, \, 
$j \leftarrow - \theta_\sigma^*(\tau)/\theta^*_\sigma(\sigma)$, \, $k \leftarrow - \tau/ \sigma$

\vspace{-0.2cm}
\item[5.] {\sc if} $j \notin \bN$ {\sc then} {\sc return} $L_0, \left(p_0, \cP_h(p_0), \bfomega_0 \right)$  
\hfill  ($^*${\sl by Corollary \ref{COR:pivot0} (i)} and \eqref{EQ:five}$^*$)

\vspace{-0.2cm}
\item[6.] {\sc if} $k \notin \bN$ {\sc then} compute $p_j$ and $\cP_h(p_j)$ by \eqref{EQ:prel1}, \eqref{EQ:prel2} and Algorithm \ref{ALG:paux}
\begin{itemize}

\vspace{-0,2cm}
\item[] $l \leftarrow$ $\lc \left(\cP_h(p_j)\right)$, \, $\theta$ $\leftarrow$  an element of $\Theta$ s.t.\ $\theta^*(l) \neq 0$
\item[] {\sc return} $L_0, j,  \left(p_0, \cP_h(p_0), \bfomega_0 \right),    
\left(p_j, \cP_h(p_j), \theta t^{m+j-1} \right)$

\hfill ($^*${\sl by Corollary \ref{COR:pivot0} (ii) and \eqref{EQ:seven}}$^*$)
\end{itemize}

\vspace{-0.2cm}
\item[7.]
\begin{itemize}
\item[7.1.] compute $p_1, \cP_h(p_1),  \ldots, p_{j-1}, \cP_h(p_{j-1}), p_j, \cP(p_j)$ by \eqref{EQ:prel1}, \eqref{EQ:prel2} and Algorithm \ref{ALG:paux}

\item[7.2.] $(q, Q)  \leftarrow p_j, \cP_h(p_j)$

\item[7.3.] {\sc for} $i$ {\sc from} $j-1$ {\sc to} $0$ {\sc by} $-1$ {\sc do}
\begin{itemize}
\item[] {\sc if} $i>0$ {\sc then} $\bfomega_i \leftarrow  \theta_\sigma t^{m+i-1}$
\item[] $c \leftarrow \bfomega_i^*\left(\cP_h(p_i) \right)^{-1} \cdot \bfomega_i^*(Q),$ \, $(q, Q) \leftarrow q - cp_i, Q - c \cP_h(p_i)$
\end{itemize}
\item[7.4.] $(d, l) \leftarrow \deg(Q), \lc(Q)$, $\theta$ $\leftarrow$  an element of $\Theta$ s.t.\ $\theta^*(l) \neq 0$
\item[7.5.] {\sc return} $L_0, j, \left(q, Q, \theta t^d \right), \left(p_0, \cP_h(p_0), \bfomega_0 \right),
\left(p_1, \cP_h(p_1), \bfomega_1 \right), \ldots, \left(p_{j-1}, \cP_h(p_{j-1}), \bfomega_{j-1} \right)$

\hfill  ($^*${\sl by Corollary \ref{COR:pivot0} (iii) and \eqref{EQ:last}}$^*$)
\end{itemize}
\end{enumerate}
\end{alg}
The correctness of this algorithm follows directly from Definition \ref{DEF:iseq},  Corollaries \ref{COR:m=0}
and \ref{COR:pivot0}.
Note that $j$ and $k$ in step 4 are equal when both of them are positive integers, justifying the use of $j$
in step 7.

Given an initial sequence $L$ with $L \neq \emptyset$ and a positive integer $l$, we can compute the first~$l$ members of
the echelon sequence $E$ by the following algorithm.
\begin{alg}
{\sc PrimEchSeq} \label{ALG:pseq}

\smallskip \noindent
{\sc Input:}   $l \in \bN$, the data in Convention \ref{CON:next2}, and an initial sequence $L$ of $I_h$ with $L \neq \emptyset$

\smallskip \noindent
{\sc Output:} the first $l$ members of the echelon sequence induced by $L$

\smallskip \noindent
{\sc Remark:} In the following pseudo-code, $p_0=L[1]$, $p_i$ and $\cP_h(p_i)$ are computed by \eqref{EQ:prel1}, \eqref{EQ:prel2} and Algorithm \ref{ALG:paux} for $i>0$. 
\end{alg}
\begin{enumerate}
\item[1.] $\lambda \leftarrow L[1]$, $(\tilde{\sigma}, \sigma) \leftarrow L[2]$, $(\tilde{\tau}, \tau) \leftarrow L[3]$, $\theta_\sigma \leftarrow  L[4]$

\vspace{-0.2cm}
\item[2.] {\sc if} $\len(L)=4$ {\sc then} {\sc return} $\left\{ \left(p_i, \cP_h(p_i), \theta_\sigma t^{i-1} \right) \right\}_{i \in [l]}$ 
\hfill ($^*${\sl by Corollary \ref{COR:m=0}} and \eqref{EQ:four}$^*$)

\vspace{-0.2cm}
\item[3.] {\sc if} $\len(L)=5$ {\sc then} 
{\sc return} $L[5]$, $\left\{ \left(p_i, \cP_h(p_i), \theta_\sigma t^{m+i-1} \right) \right\}_{i \in [l-1]}$

\hfill ($^*${\sl by Corollary \ref{COR:pivot0} (i) and \eqref{EQ:five}}$^*$)

\vspace{-0.2cm}
\item[4.] $j \leftarrow L[5]$
\item[5.] {\sc if} $\len(L)=7$ {\sc and} $j \sigma + \tau \neq 0$ {\sc then}  $(q_1, Q_1, \bfomega_1) \leftarrow L[6]$

\vspace{-0.2cm}
\begin{itemize}
\item[] {\sc for} $i$ {\sc from} $2$ {\sc to} $l$ {\sc do}
\item[] \quad {\sc if} $i \neq j+1$ {\sc then}
$(q_i, Q_i, \bfomega_i) \leftarrow p_{i-1}, \cP_h(p_{i-1}), \theta_\sigma t^{m+i-2}$
{\sc else} $(q_i, Q_i, \bfomega_i) \leftarrow L[7]$ 

{\sc return} $(q_1, Q_1, \bfomega_1), (q_2, Q_2, \bfomega_2), \ldots, (q_l, Q_l, \bfomega_l)$  \hfill ($^*${\sl by Corollary \ref{COR:pivot0} (ii) and \eqref{EQ:seven}}$^*$)

\end{itemize}


\vspace{-0,2cm}
\item[6.]
\begin{itemize}
\item[6.1.] {\sc for} $i$ {\sc from} $1$ {\sc to} $j+1$ {\sc do} $(q_i, Q_i, \bfomega_i) \leftarrow L[5+i]$ 
\item[6.2.] {\sc for} $i$ {\sc from} $j+2$ {\sc to} $l$ {\sc do}
$(q_i, Q_i, \bfomega_i) \leftarrow p_{i-1}, \cP_h(p_{i-1}), \theta_\sigma t^{m+i-2}$
\item[6.3.] {\sc return} $(q_1, Q_1, \bfomega_1), (q_2, Q_2, \bfomega_2), \ldots, (q_l, Q_l, \bfomega_l)$  \quad ($^*${\sl by Corollary \ref{COR:pivot0} (iii) and \eqref{EQ:last}$^*$)}
\end{itemize}

\end{enumerate}

The following algorithm computes the respective projections of a given element in $F[t]$ to $\im(\cP_h)$ and its complement induced
by an initial sequence.
\begin{alg}
{\sc PrimProj} \label{ALG:pproj}

\smallskip \noindent
{\sc Input:}  $f \in F[t]$, the data in Convention \ref{CON:next2}, and an initial sequence $L$ of $I_h$ with $L \neq \emptyset$

\smallskip \noindent
{\sc output:} $(g, r) \in F[t] \times W$ s.t.\ $f = \cP_h(g) + r$, where $W$ is the complement induced by $L$
\begin{enumerate}
\item[1.] $\lambda  \leftarrow L[1]$ \hfill  ($^*${\sl retrieve the type}$^*$)

\vspace{-0.2cm}
\item[2.]  $(g, r) \leftarrow$ an auxiliary pair of $f$ computed by Algorithm \ref{ALG:paux}  \hfill ($^*${\sl auxiliary reduction}$^*$)

\vspace{-0.2cm}
\item[3.] {\sc if} $\lambda=0$ {\sc or} $r=0$ {\sc then} {\sc return} $(g, r)$   \hfill ($^*$$I_h = \{0\}$ {\sl or the trivial projection}$^*$)

\vspace{-0.2cm}
\item[4.] use Algorithm \ref{ALG:pseq} to compute the first $k$ members
$(q_1, Q_1, \bfomega_1),  \ldots, (q_k, Q_k, \bfomega_k)$
in the echelon sequence $E$ induced by $L$ s.t.\ all remaining  members in
$E$ are of degrees $>\deg(r)$

\vspace{-0.2cm}
\item[5.] {\sc for} $i$ {\sc from} $k$ {\sc to} $1$ {\sc by} $-1$ {\sc do}
$c\leftarrow  \bfomega_i^*(Q_i)^{-1} \bfomega_i^*(r)$, \, $(g, r)  \leftarrow g + c q_i, r - c Q_i$ \hfill ($^*${\sl elimination}$^*$)

\vspace{-0.2cm}
\item[6.] {\sc return} $(g,r)$
\end{enumerate}
\end{alg}
The first four steps of the above algorithm are evidently correct. Step 5 is justified by Lemma \ref{LM:elim}
and its proof. In step 6, $r$ belongs to the complement of $\im(\cP_h)$
induced by $L$, because $\bfomega_i^*(r)=0$ for all $i \in [k]$ and the pivot of the $l$th
member does not appear in $r$ for all $l>k$, as guaranteed by our choice of $k$ in step 4.

\subsection{Algorithms in the hyperexponential case} \label{SUBSECT:ALG:halg}
In this subsection, $t$ is hyperexponential over $F$, $h \in F(t)$ is $t$-normalized, $V_h$ is the
auxiliary space given in Definition \ref{DEF:haux}, and $J_h = \im(\cP_h) \cap V_h$.
The elements $\lambda_k$ and $\mu_l$ of $F$ are defined in Remark \ref{RE:hred} (ii) and (iv), respectively.

The first algorithm implements the auxiliary reduction in $F[t, t^{-1}]$. It is based on 
the congruences in \eqref{EQ:haux_case+} and \eqref{EQ:haux_case-}, and the proof of Proposition \ref{PROP:haux}.
\begin{alg}
{\sc HyperexpAuxRed} \label{ALG:haux}

\smallskip \noindent
{\sc Input:} $f \in F[t, t^{-1}]$ and the data in Convention \ref{CON:next2}

\smallskip \noindent
{\sc Output:}  $(g, r)$, an auxiliary pair of $f$ w.r.t.\ $(F[t,t^{-1}], \cP_h)$
\begin{enumerate}
\item[1.] $p \leftarrow f^+$, $(k, p_k) \leftarrow \hdeg(p), \hc(p),$ \, $q  \leftarrow f^-$, \,
 $(l, q_l) \leftarrow \tdeg(q), \tc(q)$, \, $(g, r) \leftarrow 0,0$

\vspace{-0.2cm}
\item[2.] {\sc while} $k \ge m$ {\sc do}
\begin{itemize}
\item[]  {\sc if} $\nu_\infty(h) < 0$ {\sc then} 
$(u,v)  \leftarrow a_m^{-1} p_k t^{k-m}, 0$, \, 
$g \leftarrow g + u$  
\item[] {\sc else} 
\item[] \quad $(\alpha, \beta) \leftarrow$ an R-pair  of $p_k$ w.r.t.\ $\Phi_{\lambda_{k-m}}$,  \, 
$(u,v) \leftarrow \alpha t^{k-m},  \beta  t^k$, \, $(g,r) \leftarrow g + u, \, r + v$ 
\item[] $p \leftarrow p - \cP_h(u) -v$, \, $(k, p_k) \leftarrow \hdeg(p), \hc(p)$ \hfill ($^*${\sl by \eqref{EQ:haux_case+}}$^*$)
\end{itemize}

\vspace{-0.2cm}
\item[3.] {\sc while} $l < 0$ {\sc do}
\begin{itemize}
\item[]  {\sc if} $\nu_t(h) < 0$ {\sc then} $(u,v) \leftarrow a_0^{-1} q_l t^{l}, 0$, \, 
$g  \leftarrow g + u$  
\item[] {\sc else} 
\item[] \quad $(\alpha, \beta) \leftarrow $  an R-pair of $b_0^{-1} q_l$ w.r.t.\ $\Phi_{\mu_l}$, \, 
$(u,v) \leftarrow  \alpha t^{l}, b_0 \beta t^l$, \, $(g, r) \leftarrow g + u, r + v$ 
\item[] $q \leftarrow q - \cP_h(u) -v$,  $(l, q_l) \leftarrow \tdeg(q), \tc(q)$ \hfill  ($^*${\sl by \eqref{EQ:haux_case-}} $^*$)
\end{itemize}

\vspace{-0.2cm}  
\item[4.] {\sc return} $(g, r+p+q)$
\end{enumerate}
\end{alg}

Next, we determine whether $J_h = \{0\}$, and find an echelon sequence of $J_h$ with respect to~$(F[t, t^{-1}], \cP_h)$ if $J_h \neq \{0\}$.
This algorithm is based on Propositions \ref{PROP:hbasis} and \ref{PROP:hechelon}.
\begin{alg}
{\sc HyperexpEchSeq} \label{ALG:hseq}

\smallskip \noindent
{\sc Input:} the data in Convention \ref{CON:next2}

\smallskip \noindent
{\sc Output:} $\emptyset$ if $J_h=\{0\}$, and an echelon sequence of $J_h$, otherwise 
\begin{enumerate}
\item[1.] $M \leftarrow$ the type of $J_h$

\vspace{-0.2cm}
\item[2.] {\sc if} $M = 0$ {\sc then} {\sc return} $\emptyset$ \hfill ($^*$$J_h=\{0\}$$^*$)

\vspace{-0.2cm}
\item[3.] ($^*$$\dim(J_h)=1$$^*$)

{\sc if} $\len(M)=1$ {\sc then} $(k,\sigma) \leftarrow M[1]$  \hfill ($^*${\sl find the type}$^*$)

\vspace{-0.2cm}  
\begin{itemize}
\item[]    $(g,r) \leftarrow$ the auxiliary pair of $\cP_h(\sigma t^k)$ computed by Algorithm \ref{ALG:haux}
\item[]   $(p, P) \leftarrow ut^k-g, r$, \, $\theta \leftarrow$ an element of $\Theta$ s.t. $\hc(P) \notin \ker(\theta^*)$
\item[]   {\sc return} $\left(p, P, \theta t^{\hdeg(P)} \right)$ \hfill ($^*${\sl by Proposition \ref{PROP:hechelon} (i)}$^*$)
\end{itemize}

\vspace{-0.2cm}
\item[4.] ($^*$$\dim(J_h)=2$$^*$) $(k, \sigma) \leftarrow M[1]$, \,  $(l, \tau) \leftarrow M[2]$ 
\hfill ($^*${\sl find the type}$^*$)

\vspace{-0.2cm}
\begin{itemize}
\item[4.1.] $(g_k,r_k) \leftarrow$ the auxiliary pair of $\mathcal{P}_h(\sigma  t^k)$ computed by Algorithm \ref{ALG:haux}

$(g_l, r_l) \leftarrow$ the auxiliary pair of $\mathcal{P}_h(\tau t^l)$ computed by Algorithm \ref{ALG:haux}

$(p_k, P_k) \leftarrow  \sigma t^k-g_k, r_k$,  \, $(p_l, P_l) \leftarrow \tau t^l-g_l, r_l$ \hfill
($^*${\sl by Proposition \ref{PROP:hbasis} (ii)}$^*$)
\item[4.2.] $d_k \leftarrow \hdeg(P_k)$, \, $\theta_k \leftarrow$ an element of $\Theta$ s.t.\ $\hc(P_k) \notin \ker(\theta_k^*)$, \,
$\bfomega_k \leftarrow \theta_k t^{d_k}$

$c \leftarrow \bfomega_k^*(P_k)^{-1} \bfomega_k^*(P_l)$, \, $(q, Q) \leftarrow p_l - c p_k, P_l - c P_k$, \,
$d \leftarrow \hdeg(Q)$

$\theta \leftarrow$ an element of $\Theta$ s.t.\ $\hc(Q) \notin \ker(\theta^*)$,
$\bfomega \leftarrow \theta t^{d}$, {\sc return} $(q, Q, \bfomega), (p_k, P_k, \bfomega_k)$

\hfill ($^*${\sl by Proposition \ref{PROP:hechelon} (ii)}$^*$)
\end{itemize}
\end{enumerate}
\end{alg}

The last algorithm computes the respective projections of an element in $F[t, t^{-1}]$ to $\im(\cP_h)$ and the complement induced
by an echelon sequence of $J_h$. It is simpler than Algorithm \ref{ALG:pproj}, because an echelon sequence of $J_h$ has at most
two members. We therefore present only its formal specification.
\begin{alg}
{\sc HyperexpProj} \label{ALG:hproj}

\smallskip \noindent
{\sc Input:} $f \in F[t, t^{-1}]$, the data in Convention \ref{CON:next2}, and an echelon sequence $E$ of $J_h$

\smallskip \noindent
{\sc Output:}  $(g, r) \in F[t, t^{-1}] \times W$ s.t.\ $f = \cP_h(g) + r$, where $W$ is the complement induced by $E$

\end{alg}

\end{document}